\pgfplotsset{compat=1.12}
\DeclarePairedDelimiter{\ceil}{\lceil}{\rceil}
\DeclarePairedDelimiter{\floor}{\lfloor}{\rfloor}
\theoremstyle{definition}
\newtheorem{theorem}{Theorem}
\newtheorem{lemma}{Lemma}
\newtheorem{claim}{Claim}
\newtheorem{example}{Example}
\newtheorem{remark}{Remark}
\newtheorem{definition}{Definition}
\def\BibTeX{{\rm B\kern-.05em{\sc i\kern-.025em b}\kern-.08em
    T\kern-.1667em\lower.7ex\hbox{E}\kern-.125emX}}
\renewcommand*\env@matrix[1][*\c@MaxMatrixCols c]{%
  \hskip -\arraycolsep
  \let\@ifnextchar\new@ifnextchar
  \array{#1}}
\newcommand\bovermat[2]{%
    \makebox[0pt][l]{$\smash{\overbrace{\phantom{%
                    \begin{matrix}#2\end{matrix}}}^{\text{#1}}}$}#2}
\newcommand\undermat[2]{
  \makebox[0pt][l]{$\smash{\underbrace{\phantom{%
    \begin{matrix}#2\end{matrix}}}_{\text{$#1$}}}$}#2}
\newcommand{\calS}{\mathcal{S}}
\newcommand{\calB}{\mathcal{B}}
\newcommand{\calA}{\mathcal{A}}
\newcommand{\calT}{\mathcal{T}}
\newcommand{\calW}{\mathcal{W}}
\newcommand{\calP}{\mathcal{P}}
\newcommand{\calX}{\mathcal{X}}
\newcommand{\calG}{\mathcal{G}}
\newcommand{\calH}{\mathcal{H}}
\newcommand{\calI}{\mathcal{I}}
\newcommand{\calJ}{\mathcal{J}}
\newcommand{\calU}{\mathcal{U}}
\newcommand{\calV}{\mathcal{V}}
\newcommand{\calR}{\mathcal{R}}
\newcommand{\calN}{\mathcal{N}}
\newcommand{\bfU}{\mathbf{U}}
\newcommand{\bfC}{\mathbf{C}}
\newcommand{\bfV}{\mathbf{V}}
\newcommand{\bfA}{\mathbf{A}}
\newcommand{\bfG}{\mathbf{G}}
\newcommand{\bfu}{\mathbf{u}}
\newcommand{\bfx}{\mathbf{x}}
\newcommand{\bfw}{\mathbf{w}}
\newcommand{\bfB}{\mathbf{B}}
\newcommand{\bfI}{\mathbf{I}}
\newcommand{\bfz}{\mathbf{z}}
\newcommand{\bfR}{\mathbf{R}}
\newcommand{\bfv}{\mathbf{v}}
\newcommand{\cmark}{\ding{51}}%
\newcommand{\xmark}{\ding{55}}
\newcommand{\aditya}[1]{\marginpar{+}{\bf Aditya's remark}: {\em #1}}
\begin{document}
%
\title{Coded sparse matrix computation schemes that leverage partial stragglers}

\definecolor{mygr}{rgb}{0.6,0.4,0.0}
\definecolor{my1color}{rgb}{0.6,0.4,0.0}
\definecolor{mycolor1}{rgb}{0.00000,0.44700,0.74100}%
\definecolor{mycolor2}{rgb}{0.85000,0.32500,0.09800}%
\tikzset{
block/.style    = {draw, thick, rectangle, minimum height = 2em, minimum width = 2em},
sum/.style      = {draw, circle, node distance = 1cm},
sum1/.style      = {draw, circle, minimum size = 1.1 cm},
input/.style    = {coordinate},
output/.style   = {coordinate},
}

\author{\IEEEauthorblockN{Anindya Bijoy Das and Aditya Ramamoorthy} \\
\IEEEauthorblockA{Department of Electrical and Computer Engineering,\\
Iowa State University, Ames, IA 50011 USA\\
\texttt{\{abd149,adityar\}@iastate.edu}
}
\thanks{This work was supported in part by the National Science Foundation (NSF) under grants CCF-1718470 and CCF-1910840. The material in this work has appeared in part at the 2021 IEEE International Symposium on Information Theory, Melbourne, Australia and at the 2018 IEEE Information Theory Workshop (ITW), Guangzhou, China.}}

%




\IEEEtitleabstractindextext{%
\begin{abstract}
Distributed matrix computations over large clusters can suffer from the problem of slow or failed worker nodes (called stragglers) which can dominate the overall job execution time. Coded computation utilizes concepts from erasure coding to mitigate the effect of stragglers by running “coded” copies of tasks comprising a job; stragglers are typically treated as erasures. While this is useful, there are issues with applying, e.g., MDS codes in a straightforward manner. Several practical matrix computation scenarios involve sparse matrices. MDS codes typically require dense linear combinations of submatrices of the original matrices which destroy their inherent sparsity. This is problematic as it results in significantly higher worker computation times. Moreover, treating slow nodes as erasures ignores the potentially useful partial computations performed by them. Furthermore, some MDS techniques also suffer from significant numerical stability issues.
In this work we present schemes that allow us to leverage partial computation by stragglers while imposing constraints on the level of coding that is required in generating the encoded submatrices. This significantly reduces the worker computation time as compared to previous approaches and results in improved numerical stability in the decoding process. Exhaustive numerical experiments on Amazon Web Services (AWS) clusters support our findings.
\end{abstract}

\begin{IEEEkeywords}
Distributed computing, MDS Code, Stragglers, Condition Number, Sparsity.
\end{IEEEkeywords}}

\maketitle

\IEEEdisplaynontitleabstractindextext

%
\IEEEpeerreviewmaketitle

\section{Introduction}
%
%
%
%

\label{sec:intro}
Distributed computation plays a major role in several problems in machine learning. For example, large scale matrix-vector multiplication is repeatedly used in gradient descent which in turn plays a key role in high dimensional machine learning problems. The size of the underlying matrices makes it impractical to perform the computation on a single computer (both from a speed and a storage perspective). Thus, the computation is typically subdivided into smaller tasks that are run in parallel across multiple worker nodes.



In these systems the overall execution time is typically dominated by the speed of the slowest worker. Thus, the presence of stragglers (as slow or failed workers are called) can negatively impact the performance of distributed computation. In recent years, techniques from coding theory (especially maximum-distance-separable (MDS) codes) \cite{lee2018speeding,dutta2016short,yu2017polynomial,tandon2017gradient} have been used to mitigate the effect of stragglers for problems such as matrix-vector and matrix-matrix multiplication. For instance, the work of \cite{lee2018speeding} proposes to partition the computation of $\bfA^T \bfx$ by first splitting $\bfA = [\bfA_0 ~|~ \bfA_1]$ into two block-columns (with an equal number of column vectors) and assigning three workers, the task of computing $\textbf{A}^T_0\textbf{x}$, $\textbf{A}^T_1\textbf{x}$ and $\left(\textbf{A}_0+\textbf{A}_1\right)^T\textbf{x}$, respectively. Evidently, the computational load on each node is half of the original job. Furthermore, it is easy to see that $\bfA^T \bfx$ can be recovered as soon as any two workers complete their tasks (with some minimal post-processing). Thus, this system is resilient to one straggler. The work of \cite{yu2017polynomial}, poses the multiplication of two matrices in a form that is roughly equivalent to a Reed-Solomon code. In particular, each worker node's task (which is multiplying smaller submatrices) can be imagined as a coded symbol. As long as enough tasks are complete, the master node can recover the matrix product by polynomial interpolation.

For such coded computing systems we can define a so-called recovery threshold. It is the minimum value of $\tau$, such that the master node can recover the result as long as {\it any} $\tau$ workers complete their tasks. Thus, at the top level, in these systems stragglers are treated as the equivalent of erasures in coding theory, i.e., the assumption is that no useful information can be obtained from the stragglers.

While these are interesting ideas, there are certain issues that are ignored in the majority of prior work (see \cite{kiani2018exploitation, mallick2018rateless, wang2018coded, 9252114, 8683267} for some exceptions). Firstly, several practical cases of matrix-vector or matrix-matrix multiplication involve sparse matrices. Using MDS coding strategies in a straightforward manner will often destroy the sparsity of the matrices being processed by the worker nodes. In fact, as noted in \cite{wang2018coded}, this can cause the overall job execution time to actually go up rather than down. Secondly, in the distributed computation setting, we make the observation that it is possible to leverage partial computations that are performed by the stragglers. Thus, a slow worker may not necessarily be a useless worker. Fig. \ref{strtime} (which also appears in \cite{das2019random}) shows the variation of speed of different {\tt t2.micro} machines in AWS (Amazon Web Services) cluster, and it can be seen that for a particular job, even the slowest worker node may have approximately $60\% - 70\%$ of the speed of the fastest worker.

\begin{figure}[t]
\centering
\definecolor{mycolor1}{rgb}{0.80000,0.50000,0.60000}%
\definecolor{mycolor2}{rgb}{0.40000,0.40000,0.80000}%
\centering
\captionsetup{justification=centering}
\resizebox{0.69\linewidth}{!}{
\begin{tikzpicture}

\begin{axis}[%
width=5.1in,
height=3.203in,
at={(2.6in,0.756in)},
scale only axis,
xmin=0.5,
xmax=40.5,
xlabel style={font=\color{white!15!black}, font=\Large},
xlabel={Worker Index},
ymin=5.75,
ymax=8.75,
ylabel style={font=\color{white!15!black}, font=\Large},
ylabel={Time Required},
ytick={6, 6.5,7,7.5,8,8.5},
xtick={5,10,15,20,25,30,35,40},
tick label style={font=\Large},
axis background/.style={fill=white},
ymajorgrids,
legend style={nodes={scale=1.7}, at={(0.96,0.95)}, legend cell align=left, align=left, draw=white!15!black}
]
\addplot [color=gray, draw=none, mark=o, mark options={solid, gray}]
 plot [error bars/.cd, y dir = both, y explicit]
 table[row sep=crcr, y error plus index=2, y error minus index=3]{%
1	5.98581658840179	0.959149250984192	0.0465415620803835\\
2	6.00359764814377	0.839434473514557	0.0741278243064878\\
3	6.00716423034668	1.24595690727234	0.0672843360900881\\
4	5.97045059204102	0.631181526184082	0.045343589782715\\
5	5.97794937849045	0.582720758914948	0.039126393795013\\
6	5.97529013156891	0.923119711875915	0.0508352041244509\\
7	5.98533693313599	0.743480052947998	0.072195920944214\\
8	6.58506104230881	1.42492607355118	0.572207181453705\\
9	5.97961959838867	0.851500415802002	0.0456934928894039\\
10	5.980953540802	0.220931444168091	0.0493646526336669\\
11	5.98451259851456	0.546078412532807	0.0332155537605283\\
12	5.92907848358154	0.136641597747802	0.0791985034942631\\
13	6.0029022192955	1.23922281503677	0.0634812808036802\\
14	6.17343792200089	0.344864995479583	0.10947188615799\\
15	5.98008903503418	1.77345195770264	0.0548890495300292\\
16	5.98003795862198	0.804897038936615	0.0608618569374082\\
17	5.96088913917542	0.907340965270996	0.0543842697143555\\
18	6.01432245492935	1.43969160795212	0.0612913918495179\\
19	5.96912997484207	0.49222094297409	0.0366241288185121\\
20	5.9702387046814	0.4424183177948	0.0319636058807369\\
21	5.96883571147919	0.422864317893982	0.0247136354446411\\
22	5.99740345239639	1.02502148389816	0.0413964486122129\\
23	5.97340108156204	0.767411935329437	0.0416791558265688\\
24	5.98424435377121	0.63846277475357	0.0466013026237491\\
25	5.96148981332779	0.432592256069183	0.0372907090187073\\
26	5.94988328456879	0.330657658576965	0.0523052835464481\\
27	6.00984258890152	2.70222930669785	0.0753967308998105\\
28	5.9754453420639	0.991303610801697	0.0492685556411745\\
29	5.97305377244949	0.527073304653168	0.0423777890205379\\
30	5.96291655302048	0.345107614994049	0.039221465587616\\
31	5.98049085855484	0.163976018428802	0.0347806763648988\\
32	6.01313875675201	1.89541222095489	0.0715987539291385\\
33	5.98265320062637	0.366771876811981	0.0487893223762512\\
34	6.01112633943558	1.5211336016655	0.056864321231842\\
35	6.00164761781692	0.753265473842621	0.0405026936531065\\
36	5.95397742509842	0.773488442897797	0.0572324585914608\\
37	5.97822363615036	0.491628334522248	0.0391076350212094\\
38	5.96405220508575	0.256363863945007	0.0437552976608275\\
39	5.95139958620071	0.876449325084686	0.0582823967933654\\
40	5.96683704376221	0.959806089401245	0.0413140010833741\\
};
\addlegendentry{Average}

\addplot [color=mycolor1, line width=6.0pt]
  table[row sep=crcr]{%
1	5.93927502632141\\
1	6.94496583938599\\
};

\addplot [color=mycolor1, line width=6.0pt]
  table[row sep=crcr]{%
2	5.92946982383728\\
2	6.84303212165833\\
};

\addplot [color=mycolor1, line width=6.0pt]
  table[row sep=crcr]{%
3	5.93987989425659\\
3	7.25312113761902\\
};

\addplot [color=mycolor1, line width=6.0pt]
  table[row sep=crcr]{%
4	5.9251070022583\\
4	6.6016321182251\\
};

\addplot [color=mycolor1, line width=6.0pt]
  table[row sep=crcr]{%
5	5.93882298469543\\
5	6.5606701374054\\
};

\addplot [color=mycolor1, line width=6.0pt]
  table[row sep=crcr]{%
6	5.92445492744446\\
6	6.89840984344482\\
};

\addplot [color=mycolor1, line width=6.0pt]
  table[row sep=crcr]{%
7	5.91314101219177\\
7	6.72881698608398\\
};

\addplot [color=mycolor1, line width=6.0pt]
  table[row sep=crcr]{%
8	6.0128538608551\\
8	8.00998711585999\\
};

\addplot [color=mycolor1, line width=6.0pt]
  table[row sep=crcr]{%
9	5.93392610549927\\
9	6.83112001419067\\
};

\addplot [color=mycolor1, line width=6.0pt]
  table[row sep=crcr]{%
10	5.93158888816833\\
10	6.20188498497009\\
};

\addplot [color=mycolor1, line width=6.0pt]
  table[row sep=crcr]{%
11	5.95129704475403\\
11	6.53059101104736\\
};

\addplot [color=mycolor1, line width=6.0pt]
  table[row sep=crcr]{%
12	5.84987998008728\\
12	6.06572008132935\\
};

\addplot [color=mycolor1, line width=6.0pt]
  table[row sep=crcr]{%
13	5.93942093849182\\
13	7.24212503433228\\
};

\addplot [color=mycolor1, line width=6.0pt]
  table[row sep=crcr]{%
14	6.0639660358429\\
14	6.51830291748047\\
};

\addplot [color=mycolor1, line width=6.0pt]
  table[row sep=crcr]{%
15	5.92519998550415\\
15	7.75354099273682\\
};

\addplot [color=mycolor1, line width=6.0pt]
  table[row sep=crcr]{%
16	5.91917610168457\\
16	6.78493499755859\\
};

\addplot [color=mycolor1, line width=6.0pt]
  table[row sep=crcr]{%
17	5.90650486946106\\
17	6.86823010444641\\
};

\addplot [color=mycolor1, line width=6.0pt]
  table[row sep=crcr]{%
18	5.95303106307983\\
18	7.45401406288147\\
};

\addplot [color=mycolor1, line width=6.0pt]
  table[row sep=crcr]{%
19	5.93250584602356\\
19	6.46135091781616\\
};

\addplot [color=mycolor1, line width=6.0pt]
  table[row sep=crcr]{%
20	5.93827509880066\\
20	6.4126570224762\\
};

\addplot [color=mycolor1, line width=6.0pt]
  table[row sep=crcr]{%
21	5.94412207603455\\
21	6.39170002937317\\
};

\addplot [color=mycolor1, line width=6.0pt]
  table[row sep=crcr]{%
22	5.95600700378418\\
22	7.02242493629456\\
};

\addplot [color=mycolor1, line width=6.0pt]
  table[row sep=crcr]{%
23	5.93172192573547\\
23	6.74081301689148\\
};

\addplot [color=mycolor1, line width=6.0pt]
  table[row sep=crcr]{%
24	5.93764305114746\\
24	6.62270712852478\\
};

\addplot [color=mycolor1, line width=6.0pt]
  table[row sep=crcr]{%
25	5.92419910430908\\
25	6.39408206939697\\
};

\addplot [color=mycolor1, line width=6.0pt]
  table[row sep=crcr]{%
26	5.89757800102234\\
26	6.28054094314575\\
};

\addplot [color=mycolor1, line width=6.0pt]
  table[row sep=crcr]{%
27	5.93444585800171\\
27	8.71207189559937\\
};

\addplot [color=mycolor1, line width=6.0pt]
  table[row sep=crcr]{%
28	5.92617678642273\\
28	6.9667489528656\\
};

\addplot [color=mycolor1, line width=6.0pt]
  table[row sep=crcr]{%
29	5.93067598342896\\
29	6.50012707710266\\
};

\addplot [color=mycolor1, line width=6.0pt]
  table[row sep=crcr]{%
30	5.92369508743286\\
30	6.30802416801453\\
};

\addplot [color=mycolor1, line width=6.0pt]
  table[row sep=crcr]{%
31	5.94571018218994\\
31	6.14446687698364\\
};

\addplot [color=mycolor1, line width=6.0pt]
  table[row sep=crcr]{%
32	5.94154000282288\\
32	7.90855097770691\\
};

\addplot [color=mycolor1, line width=6.0pt]
  table[row sep=crcr]{%
33	5.93386387825012\\
33	6.34942507743835\\
};

\addplot [color=mycolor1, line width=6.0pt]
  table[row sep=crcr]{%
34	5.95426201820374\\
34	7.53225994110107\\
};

\addplot [color=mycolor1, line width=6.0pt]
  table[row sep=crcr]{%
35	5.96114492416382\\
35	6.75491309165955\\
};

\addplot [color=mycolor1, line width=6.0pt]
  table[row sep=crcr]{%
36	5.89674496650696\\
36	6.72746586799622\\
};

\addplot [color=mycolor1, line width=6.0pt]
  table[row sep=crcr]{%
37	5.93911600112915\\
37	6.46985197067261\\
};

\addplot [color=mycolor1, line width=6.0pt]
  table[row sep=crcr]{%
38	5.92029690742493\\
38	6.22041606903076\\
};

\addplot [color=mycolor1, line width=6.0pt]
  table[row sep=crcr]{%
39	5.89311718940735\\
39	6.8278489112854\\
};

\addplot [color=mycolor1, line width=6.0pt]
  table[row sep=crcr]{%
40	5.92552304267883\\
40	6.92664313316345\\
};

\addplot [color=black, draw=none, mark=*, mark options={solid, fill=mycolor2, black}]
  table[row sep=crcr]{%
1	5.98581658840179\\
2	6.00359764814377\\
3	6.00716423034668\\
4	5.97045059204102\\
5	5.97794937849045\\
6	5.97529013156891\\
7	5.98533693313599\\
8	6.58506104230881\\
9	5.97961959838867\\
10	5.980953540802\\
11	5.98451259851456\\
12	5.92907848358154\\
13	6.0029022192955\\
14	6.17343792200089\\
15	5.98008903503418\\
16	5.98003795862198\\
17	5.96088913917542\\
18	6.01432245492935\\
19	5.96912997484207\\
20	5.9702387046814\\
21	5.96883571147919\\
22	5.99740345239639\\
23	5.97340108156204\\
24	5.98424435377121\\
25	5.96148981332779\\
26	5.94988328456879\\
27	6.00984258890152\\
28	5.9754453420639\\
29	5.97305377244949\\
30	5.96291655302048\\
31	5.98049085855484\\
32	6.01313875675201\\
33	5.98265320062637\\
34	6.01112633943558\\
35	6.00164761781692\\
36	5.95397742509842\\
37	5.97822363615036\\
38	5.96405220508575\\
39	5.95139958620071\\
40	5.96683704376221\\
};
\addlegendentry{Bounds}

\end{axis}

\end{tikzpicture}
}
\caption{\small Variation of worker speeds for the same job over 100 runs across $40$ workers within AWS; the job involves multiplying two random matrices of size $4000 \times 4000$ twice. The average time is shown by the small circle for each worker. The upper and lower edges indicate the maximum and minimum time over the 100 runs. The required time exhibits a wide variation from $5.85$ seconds to $8.71$ seconds.}
\label{strtime}
\end{figure}
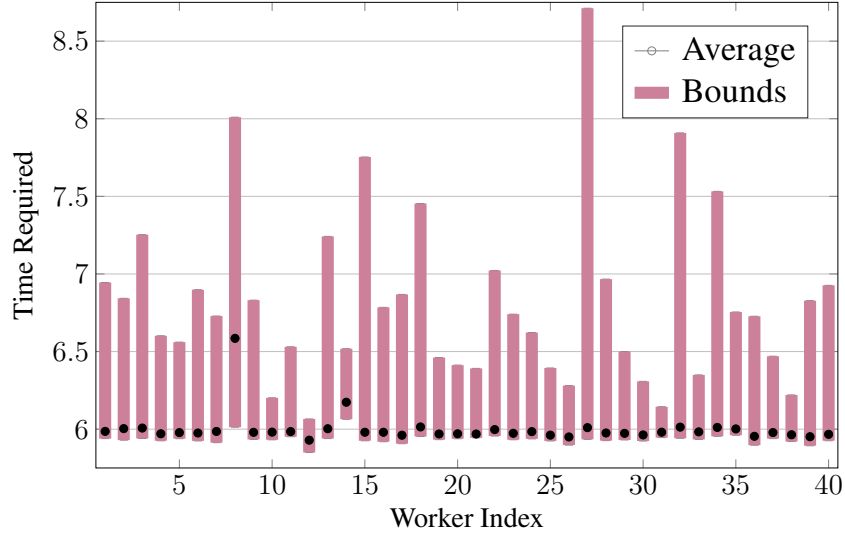 

In this work we propose schemes which are not only resilient to full stragglers, but can also exploit slow workers by utilizing their partially finished tasks. The works in \cite{c3les} and \cite{8849451} also address this issue but they are applicable only for matrix-vector multiplication whereas in this work, we propose schemes for matrix-matrix multiplication too. Furthermore, in several of our schemes we can specify the number of block-columns of the individual $\bfA$ and $\bfB$ matrices that are linearly combined to arrive at the encoded matrices. This is especially useful in the case of sparse matrices ($\bfA$ and $\bfB$) that often appear in practical settings. Thus, in short, our proposed approaches can leverage the partial computations of the stragglers and exploit the sparsity of the input matrices, both of which can enhance the overall speed of the whole system. 

This paper is organized as follows. Section \ref{sec:back_rel_work} describes the background and related work and summarizes the contributions of our work. Section \ref{sec:prel} outlines some basic definitions and observations which are required for the subsequent presentation. Section \ref{sec:beta_level} discusses our proposed $\beta$-level coding schemes which constrain the level of coding in the encoded submatrices while leveraging partial computations. Following this, Section \ref{sec:optimal} proposes schemes for both matrix-vector and matrix-matrix multiplication which can be optimal in terms of resilience to full stragglers and can improve the utilization of the partial stragglers. Section \ref{sec:numerical_exp} discusses the experimental performance of our proposed methods and shows the comparison with other available approaches. We conclude the paper with a discussion about future work in Section \ref{sec:conclusion}.

\section{Background and Related Work}
\label{sec:back_rel_work}

Consider the case where a master node has a matrix $\bfA$ and either a matrix $\bfB$ or a vector $\bfx$ and needs to compute either $\bfA^T \bfB$ or $\bfA^T \bfx$. The computation needs to be carried out in a distributed fashion over $n$ worker nodes. Each worker receives the equivalent of a certain fraction (denoted by $\gamma_A$ and $\gamma_B$, respectively) of the columns of $\bfA$ and $\bfB$ or the whole vector $\bfx$. The node is responsible for computing its assigned submatrix-submatrix or submatrix-vector products.

We discuss the matrix-matrix scenario below where each worker node receives coded versions of submatrices of $\bfA$ and $\bfB$ respectively\footnote{A general formulation need not restrict the assignment to coded submatrices of $\bfA$ and $\bfB$. Nevertheless, all known schemes thus far and our proposed schemes work with equal-sized submatrices, so we present the formulation in this way.}. The corresponding matrix-vector case can be obtained as a special case. 
Consider a $p \times u$ and $p \times v$ block decomposition of $\bfA$ and $\bfB$ respectively as shown below.
\begin{align*}
\bfA &= \begin{bmatrix}
\bfA_{0,0} &\dots& \bfA_{0,u-1}\\
\vdots & \ddots & \vdots \\
\bfA_{p-1,0} & \dots & \bfA_{p-1,u-1}
\end{bmatrix},  \; \; \;  \textrm{and} \; \; \; \bfB = \begin{bmatrix}
\bfB_{0,0} &\dots& \bfB_{0,v-1}\\
\vdots & \ddots & \vdots \\
\bfB_{p-1,0} & \dots & \bfB_{p-1,v-1}
\end{bmatrix}. 
\end{align*} The master node creates coded submatrices by computing appropriate scalar linear combinations of the $\bfA_{i,j}$ submatrices and respectively the $\bfB_{i,j}$ submatrices. This implies that the master node only performs scalar multiplications and additions. It is not responsible for any of the computationally intensive matrix operations.  Following this, it sends the corresponding coded submatrices to each of the workers who perform the matrix operations.


In this work we only consider a decomposition of $\bfA$ and $\bfB$ into block-columns, i.e., $p=1$. We assume that the storage fraction $\gamma_A$ (or $\gamma_B$) can be expressed as $\ell_A/\Delta_A$ (likewise $\ell_B/\Delta_B$) where both $\ell_A$ and $\Delta_A$ (and $\ell_B$ and $\Delta_B$) are integers. We assume that $\bfA$ and $\bfB$ are large enough and satisfy divisibility constraints so that we can choose any large enough value of $\Delta_A$ and $\Delta_B$ to partition the columns of $\bfA$ and $\bfB$ into $\Delta_A$ and $\Delta_B$ block-columns. These are denoted as $\bfA_0, \bfA_1, \dots, \bfA_{\Delta_A-1}$ and $\bfB_0, \bfB_1, \dots, \bfB_{\Delta_B-1}$. Each node is assigned the equivalent of $\ell_A$ block-columns of $\bfA$ and $\ell_B$ block-columns of $\bfB$. Each of those $\ell_A$ block-columns from $\bfA$ will be multiplied with each of the $\ell_B$ block-columns from $\bfB$, so a particular worker node will compute, in total, $\ell = \ell_A \ell_B$ block-products for matrix-matrix multiplication. In case of matrix-vector multiplication, the worker node will compute $\ell = \ell_A$ block products, where each of $\ell_A$ blocks from $\bfA$ will be  multiplied with $\bfx$.

The assignment can simply be subsets of $\{\bfA_0, \bfA_1, \dots, \bfA_{\Delta_A-1}\}$ or $\{\bfB_0, \bfB_1, \dots, \bfB_{\Delta_B-1}\}$; in this case we call the solution ``uncoded". Alternatively, the assignment can be suitably chosen functions of $\{\bfA_0, \bfA_1, \dots, \bfA_{\Delta_A-1}\}$ or $\{\bfB_0, \bfB_1, \dots, \bfB_{\Delta_B-1}\}$; in this case we call the solution ``coded". The assignment also specifies a sequential order from top to bottom in which each worker node needs to process its tasks. This implies that if a node is currently processing the $i$-th assignment ($0 \leq i \leq \ell-1$), then it has already processed assignments $0$ through $i-1$. In this work, we  assume that each time a node computes a product, it transmits the result to the master node. As we shall show, the processing order matters in this problem.

There are two requirements that our system needs to have. The master node should be able to decode the intended result ($\bfA^T \bfB$ or $\bfA^T \bfx$) from any $n-s$ workers for $s$ as large as possible. i.e., $n-s$ is the recovery threshold of the scheme \cite{yu2017polynomial}. The second requirement is that the master node should be able to recover $\bfA^T \bfB$ or $\bfA^T \bfx$ as long as it receives {\it any} $Q$ products from the worker nodes. This formulation subsumes treating stragglers as non-working nodes. 
To our best knowledge, this second requirement has not been examined systematically within the coded computation literature, even though it is a natural constraint that allows for succinct treatment of recovery in distributed computing clusters where the workers have differing speeds.

\begin{example}
\label{eg:initial_example}
Consider a system with $n=3$ worker nodes with $\gamma_A = 2/3$. We partition $\bfA$ into $\Delta_A=3$ block-columns and the assignment of block-columns to each node is shown in Fig. \ref{uncoded_toy} (this is an uncoded solution). We emphasize that the order of the computation also matters here, i.e., worker node $W_0$ (for example) computes $\bfA_0^T \bfx$ first and then $\bfA_1^T \bfx$. For the specific assignment it is clear that the computation is successful as long as any four block products are returned by the workers. Thus, for this system $Q = 4$.

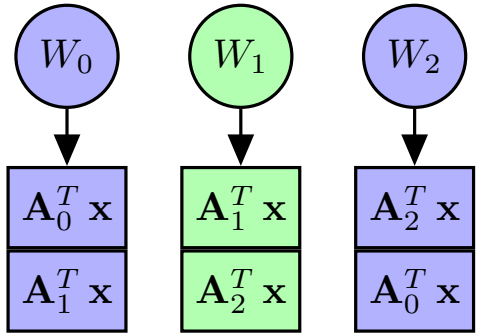
\begin{figure}[t]
\centering
\captionsetup{justification=centering}
\resizebox{0.4\linewidth}{!}{
\begin{tikzpicture}[auto, thick, node distance=2cm, >=triangle 45]

\draw
	node at (0,0)[right=-3mm]{}
	node [sum1, minimum size = 0.8cm, fill=blue!30] (blk1){$W_0$}
    node [sum1, minimum size = 0.8cm, fill=green!30,right = 0.6cm of blk1] (blk2) {$W_1$}
    node [sum1, minimum size = 0.8cm, fill=blue!30,right = 0.6cm of blk2] (blk3) {$W_2$}
    node [block, fill=blue!30,below = 0.5 cm of blk1] (blk11) {$\bfA_{0}^T \, \bfx$}
    node [block, fill=blue!30,below = 0.0005 cm of blk11] (blk12) {$\bfA_{1}^T\, \bfx$}
    node [block, fill=green!30,below = 0.5 cm of blk2] (blk21) {$\bfA_{1}^T\, \bfx$}
    node [block, fill=green!30,below = 0.0005 cm of blk21] (blk22) {$\bfA_{2}^T\, \bfx$}
    node [block, fill=blue!30,below = 0.5 cm of blk3] (blk31) {$\bfA_{2}^T\, \bfx$}
    node [block, fill=blue!30,below = 0.0005 cm of blk31] (blk32) {$\bfA_{0}^T\, \bfx$}
    ; 
\draw[->](blk1) -- node{} (blk11);
\draw[->](blk2) -- node{} (blk21);
\draw[->](blk3) -- node{} (blk31);

\end{tikzpicture}
}
\caption{\small Matrix $\bfA$ is partitioned into three submatrices. Each worker is assigned two of those uncoded submatrices. Here $Q = 4$.}
\label{uncoded_toy}
\end{figure} 

On the other hand, Fig. \ref{coded_toy} demonstrates a coded solution, where the bottom assignment in the workers are some suitably chosen functions of the elements of $\{\bfA_0^T \bfx, \bfA_1^T \bfx, \bfA_2^T \bfx\}$. For this assignment, it is obvious that the master node can recover $\bfA^T \bfx$ as long as any three block products are returned by the workers, so in this system $Q = 3$.
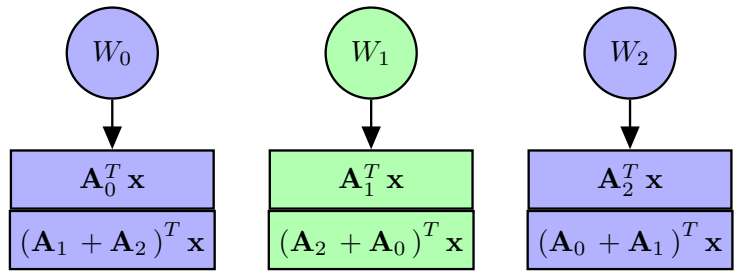
\begin{figure}[t]
\centering
\captionsetup{justification=centering}
\resizebox{0.6\linewidth}{!}{
\begin{tikzpicture}[auto, thick, node distance=2cm, >=triangle 45]

\draw
	node at (0,0)[right=-3mm]{}
	node [sum1, minimum size = 1.1cm,fill=blue!30] (blk1){$W_0$}
    node [sum1, minimum size = 1.1cm,fill=green!30, right = 2cm of blk1] (blk2) {$W_1$}
    node [sum1, minimum size = 1.1cm,fill=blue!30, right = 2cm of blk2] (blk3) {$W_2$}
    node [block, fill=blue!30, minimum width = 2.45cm, below = 0.6 cm of blk1] (blk11) {$\bfA_{0}^T \, \bfx$}
    node [block, fill=blue!30, below = 0.0005 cm of blk11] (blk12) {$ \left( \bfA_{1}\, + \bfA_{2}\, \right)^T \bfx$}
    node [block, fill=green!30, minimum width = 2.45cm,below = 0.6 cm of blk2] (blk21) {$\bfA_{1}^T\, \bfx$}
    node [block, fill=green!30, below = 0.0005 cm of blk21] (blk22) {$\left( \bfA_{2}\, + \bfA_{0}\, \right)^T \bfx$}
    node [block, fill=blue!30, minimum width = 2.45cm,below = 0.6 cm of blk3] (blk31) {$\bfA_{2}^T\, \bfx$}
    node [block, fill=blue!30, below = 0.0005 cm of blk31] (blk32) {$\left( \bfA_{0}\, + \bfA_{1}\, \right)^T \bfx$}
    ; 

\draw[->](blk1) -- node{} (blk11);
\draw[->](blk2) -- node{} (blk21);
\draw[->](blk3) -- node{} (blk31);

\end{tikzpicture}
}
\caption{\small Matrix $\bfA$ is partitioned into three submatrices. Each worker is assigned one uncoded and one coded task. Here $Q = 3$.}
\label{coded_toy}
\end{figure} 
\end{example}

For any time $t$, we let $w_i(t)$ represent the state of computation of the $i$-th worker node, i.e., $w_i(t)$ is a non-negative integer such that $0 \leq w_i(t) \leq \ell$  which represents the number of tasks that have been processed by worker node $i$. Thus, our system requirement states as long as $\sum_{i=0}^{n-1} w_i(t) \geq Q$, the master node should be able to determine $\bfA^T \bfB$ or $\bfA^T \bfx$. As $\Delta$, the number of unknowns to be recovered, is a parameter that can be chosen, our objective is to minimize the value of $Q/\Delta$ for such a system. For matrix-vector multiplication, $\Delta = \Delta_A$, whereas for matrix-matrix multiplication, $\Delta = \Delta_A \Delta_B$. This formulation minimizes the worst case overall computation performed by the worker nodes.

\subsection{Related Work}
Several coded computation schemes have been proposed for matrix multiplication \cite{lee2018speeding, yu2020straggler, yu2017polynomial, dutta2016short, dutta2019optimal, mallick2018rateless, wang2018coded, c3les, TangKR19,  kiani2018exploitation}, most of which are designed to mitigate the full stragglers; see \cite{ramamoorthyDTMag20} for a tutorial overview. We illustrate the basic idea below using the polynomial code approach of \cite{yu2017polynomial} for a system with $n=5$ workers where each of these worker nodes can store $\gamma_A = \frac{1}{2}$ fraction of matrix $\bfA$ and $\gamma_B = \frac{1}{2}$ fraction of matrix $\bfB$. Consider $u = v = 2$ and $p = 1$, thus we partition both $\bfA$ and $\bfB$ into two block-columns $\bfA_0, \bfA_1$ and $\bfB_0, \bfB_1$ respectively. Next, we define two matrix polynomials as
\begin{align*}
\bfA(z) &= \bfA_0 + \bfA_1 z  \; \; \; \textrm{and} \; \; \; \bfB(z) = \bfB_0 + \bfB_1 z^2  ;\\
\textrm{so} \; \bfA^T(z) \bfB(z) & = \bfA^T_0 \bfB_0 + \bfA_1^T \bfB_0 z + \bfA_0^T \bfB_1 z^2 + \bfA_1^T \bfB_1 z^3 .
\end{align*} The master node evaluates these polynomial $\bfA(z)$ and  $\bfB(z)$ at distinct real values $z_0, z_1, \dots, z_{n-1}$, and sends the corresponding matrices to worker node $W_i$. Each worker node computes the product of its assigned submatrices. 
It follows that decoding at the master node is equivalent to decoding a degree-3 real-valued polynomial. Thus, the master node can recover $\bfA^T \bfB$ as soon as it receives the results from {\it any} four workers, i.e., in this example, the recovery threshold is, $\tau = 4$. When $\gamma_A = 1/k_A$ and $\gamma_B = 1/k_B$ and $p=1$, the work of \cite{yu2020straggler} shows that their scheme has a threshold $\tau = k_A k_B$ which is optimal. Random coding solutions for this problem were investigated in \cite{8919859}. Approaches based on convolutional coding  were presented in \cite{das2019random, 8849395}. In these schemes (analogous to linear block codes) there are systematic workers that only contain uncoded assignments and parity workers that contain coded assignments.

The case when $p > 1$ was considered in the work of \cite{dutta2016short,dutta2019optimal,yu2020straggler,TangKR19}. Structuring the computation in this manner increases the computational load on the workers and the communication load from the workers to the master node but can reduce the recovery threshold as compared to the case of $p=1$.

It is well-recognized that in several practical situations the underlying matrices $\bfA$ and $\bfB$ are sparse. Computing the inner product $\mathbf{a}^T \bfx$ of $n$-length vectors $\mathbf{a}$ and $\bfx$ where $\mathbf{a}$ has around $\delta n$ ($0 < \delta \ll 1$) non-zero entries takes $\approx 2 \delta n$ floating point operations (flops) as compared to $\approx 2 n$ flops in the dense case. In general, the encoding process within coded computation increases the number of non-zero entries in the resultant encoded matrices. For instance, polynomial evaluations of degree $d$ will increase the number of non-zero entries by approximately $d$ times. This results in a $d$-fold increase in the worker computation times which can be unacceptably high. Thus, it is important to consider schemes where the encoding only combines a limited number of submatrices.

\begin{example}
Consider an example with two large sparse matrices $\bfA$ and $\bfB$ both of whose sizes are $10,000 \times 10,000$. Both of them have sparsity $\sigma = 3\%$, i.e., randomly chosen approximately $3\%$ entries of $\bfA$ and $\bfB$ are non-zero (we have used {\tt MATLAB} command {\tt sprand} for this example). We partition matrices $\bfA$ and $\bfB$ into $4$ and $5$ block-columns, respectively. First we choose a block-column $\bfA_i$ and a block-column $\bfB_j$, and next we obtain two coded submatrices $\tilde{\bfA}_i$ and $\tilde{\bfB}_j$ which are random linear combinations of the uncoded block-columns of $\bfA$ and $\bfB$, respectively. Table \ref{compspar} shows that it is around $4$ times more expensive to compute the coded product than the uncoded product, although the sizes of the corresponding matrices are exactly the same. The reason is that the number of non-zero entries in the coded submatrices have gone up significantly. 
\end{example}

\begin{table}[t]
\caption{{\small Computation time for sparse matrix multiplication. We choose matrices $\bfA$ and $\bfB$ both of size $10,000 \times 10,000$. Both of them have sparsity $\sigma = 3\%$, thus randomly chosen $3\%$ entries of $\bfA$ and $\bfB$ are non-zero.}}
\label{compspar}
\begin{center}
\begin{small}
\begin{sc}
\begin{tabular}{c c}
\hline
\toprule
Job & Required time \\
\midrule
To compute $\bfA^T \bfB$ & $9.41$ seconds  \\ 
To compute (uncoded) $\bfA_i^T \bfB_j$ &  $0.58$ seconds \\
To compute (coded) $\tilde{\bfA}_i^T \tilde{\bfB}_j$ & $2.13$ seconds \\

\bottomrule
\end{tabular}
\end{sc}
\end{small}
\end{center}
\end{table}%

An important aspect of coded computation is ``numerical stability'' of the recovered result. Indeed, while coded computation borrows techniques from classical coding theory (over finite fields), it differs in the sense that the coded submatrices and the decoding operates over the reals. Over finite fields, the invertibility of a matrix is sufficient to solve a system of equations. In contrast, over the reals if the corresponding matrix is ill-conditioned, then the recovery will in general be inaccurate. It is well-recognized that real Vandermonde matrices corresponding to polynomial interpolation have condition numbers that grow exponentially in the matrix sizes. This is a serious issue with the polynomial-based approaches of \cite{yu2017polynomial}, \cite{yu2018straggler}. There have been some works that have addressed these issues \cite{das2019random, 8849451, 8919859, ramamoorthy2019numerically, 8849468, 9174440, 9513242} in part. 


Yet another feature of the coded computation problem that distinguishes it from classical codes is the processing order. The worker nodes process the assigned tasks in a specific order, such that if a worker node is processing a given task, it has already completed the previously assigned tasks. Thus, at any given time the pattern of tasks that have been completed is restricted. Interestingly, codes for such systems have been investigated in \cite{rosenbloomT97,ganesan2007existence}. These ideas were adapted for the distributed matrix-vector multiplication problem in \cite{8849451}.

We note here that in principle using polynomial approaches can allow us to address both the optimal threshold and the optimal $Q/\Delta = 1$ by simply placing multiple evaluations of the polynomials at distinct points within each worker node. However, this approach is not practical, firstly because of numerical stability issues. Secondly, as discussed above when considering sparse $\bfA$ and $\bfB$ matrices, the polynomial approaches result in dense coded submatrices which can cause an unacceptable increase in the worker node computation times. Numerical experiments supporting these conclusions can be found in Section \ref{sec:numerical_exp}. 
%
%
\subsection{Summary of Contributions}

The contributions of our work can be summarized as follows. 
\begin{itemize}
    \item We present a fine-grained model of the distributed matrix-vector and matrix-matrix multiplication that allows us to (i) leverage the slower workers using their partial computations and (ii) impose constraints on what extent coding is allowed in the solution. This allows us to capture a scenario where workers have differing speeds and the intended result can be recovered as long as the workers together complete a minimum number ($Q$) of the assigned tasks. This applies to the practically important case where the underlying matrices are sparse. The formulation leads to new questions within coded computing that to our best knowledge have not been investigated before systematically within the coded computing literature.
    \item We present systematic methods for both matrix-vector and matrix-matrix multiplication that address both the recovery threshold and the $Q/\Delta$ metric. For the uncoded assignment case, we present a lower bound on the performance of any scheme that our constructions are able to match.
    
    \item We have proposed two different schemes for distributed computations, first of which is named as $\beta$-level coding. In this approach, we have used resolvable combinatorial designs \cite{stinson2007combinatorial} to improve the recovery threshold and the $Q/\Delta$ metric over the uncoded approach. We have shown that the metrics can be further improved if we utilize certain relations among the blocks of different parallel classes within the resolvable designs.
    
    \item Prior work has demonstrated schemes with the optimal recovery threshold for certain storage fractions. In this work we present novel schemes that retain the optimal recovery threshold and also have low $Q/\Delta$ values.
    
    \item Finally, we present exhaustive experimental comparisons that demonstrate the benefit of our schemes while considering sparse matrices in terms of worker node computation times and numerical stability.
    
\end{itemize} In Table \ref{compboth}, we present a summary of the properties and the advantages of both of our proposed approaches, $\beta$-level coding and sparsely coded straggler (SCS) optimal scheme. Moreover, a detailed comparison of the properties of our methods with other available schemes is demonstrated in Table \ref{compare}. 

It should be noted that there are other issues within coded matrix computations. Several works \cite{aliasgari2020private}, \cite{hasircioglu2021speeding} have considered the issue of private computation along with the straggler mitigation issue. Here the goal is that no information about the matrices $\bfA$ or $\bfB$ can be obtained from any set of at most $m$ workers. Another class of codes \cite{keshtkarjahromi2018dynamic}, \cite{vedadi2021adaptive} assumes the workers to be heterogeneous and time-varying, so that the system may have access to different number of workers at different moments where the workers may have different speeds and/or different storage capacities. These issues are out of the scope of this paper.

\begin{table}[t]
\caption{{\small Comparison between our proposed approaches. Here $\gamma_A$ and $\gamma_B$ indicate the storage fractions for matrices $\bfA$ and $\bfB$, respectively, and $n$ denotes the total number of workers.}}
\label{compboth}
\begin{center}
\begin{small}
\begin{sc}
\begin{tabular}{c c c c}
\hline
\toprule
Approach & Properties & Parameter Regime & Advantages  \\ 
 \midrule
$\beta$-level & Combine $\beta$ submatrices  & For $\gamma_A = \frac{a_1}{a_2}$ and $\gamma_B = \frac{b_1}{b_2}$, & Assigned submatrices \\
Coding & of $\bfA$ (or $\bfB$) & we need $\; n = c a_2 b_2$ & are sparse \\
 \midrule
SCS Optimal  & Majority of assigned & For $\gamma_A = \frac{1}{k_A}$ and $\gamma_B = \frac{1}{k_B}$, & Optimal Recovery  \\
Scheme & submatrices are uncoded & where $k_A$ and $k_B$ are integers & Threshold ($k_A k_B$) \\

\bottomrule
\end{tabular}
\end{sc}
\end{small}
\end{center}
\end{table}%

\begin{table*}[t]
\caption{{\small Comparison with existing works.}} 
\vspace{-0.3 cm}
\label{compare}
\begin{center}
\begin{small}
\begin{sc}
\begin{tabular}{c c c c c c}
\hline
\toprule
\multirow{2}{1 cm}{Codes} & Mat-Mat & Optimal & Numerical & Partial & Sparsely\\
 & Mult? & Threshold? & Stability? & Comput? & Coded?\\
 \midrule
Repetition Codes & \cmark & \xmark & \cmark &  \xmark & \cmark\\ \hline
Rateless Codes \cite{mallick2018rateless} & \xmark & \xmark & \cmark &  \xmark & \xmark \\ \hline 
Prod. Codes \cite{lee2017high}, Factored Codes \cite{9513242}  & \cmark &\xmark & \cmark  & \xmark & \xmark\\ \hline
Polynomial Codes \cite{yu2017polynomial} & \cmark  & \cmark & \xmark & \xmark & \xmark\\ \hline
Biv. Hermitian Poly. Code \cite{hasirciouglu2020bivariate} & \cmark & \cmark & \xmark & \cmark & \xmark\\ \hline
Dynamic Hetero.-Aware Code \cite{keshtkarjahromi2018dynamic} & \xmark &\xmark & \cmark  & \cmark & \xmark\\ \hline
OrthoPoly \cite{8849468}, RKRP code\cite{8919859} & \cmark &\cmark & \cmark  & \xmark & \xmark\\ \hline
Conv. Code \cite{das2019random}, Circ. \& Rot. Mat. \cite{ramamoorthy2019numerically} & \cmark &\cmark & \cmark  & \xmark & \xmark\\ \hline
C$^3$LES \cite{c3les} & \xmark & \xmark & \cmark & \cmark & \cmark\\ \hline
{\bf $\beta$-level Coding (proposed)} & \cmark & \xmark & \cmark & \cmark & \cmark\\ \hline
{\bf SCS Optimal Scheme (proposed)} & \cmark & \cmark & \cmark & \cmark & \cmark\\

\bottomrule
\end{tabular}
\end{sc}
\end{small}
\end{center}
\end{table*}%

\section{Preliminaries}
\label{sec:prel}
In this section we discuss some basic facts and observations that serve to explain our proposed distributed matrix computation schemes.
Suppose that a given worker node is assigned encoded block-columns $\tilde{\bfA}_i, i = 0, 1, \dots, \ell_A-1$ and $\tilde{\bfB}_j, j = 0, 1, \dots, \ell_B-1$. The assignment also specifies a top to bottom order. For the matrix-vector problem, the node processes them simply in the order $\tilde{\bfA}^T_0 \bfx, \tilde{\bfA}^T_1 \bfx, \dots, \tilde{\bfA}^T_{\ell_A-1} \bfx$. On the other hand for the matrix-matrix problem the node computes in the order $\tilde{\bfA}_0^T \tilde{\bfB}_0, \tilde{\bfA}_0^T \tilde{\bfB}_1, \dots, \tilde{\bfA}_0^T \tilde{\bfB}_{\ell_B-1}, \tilde{\bfA}_1^T \tilde{\bfB}_0, \dots, \tilde{\bfA}_1^T \tilde{\bfB}_{\ell_B-1},$ $\dots, \tilde{\bfA}_{\ell_A-1}^T \tilde{\bfB}_{0}, \dots, \tilde{\bfA}_{\ell_A-1}^T \tilde{\bfB}_{\ell_B-1}$.
\begin{definition}
A coding scheme for distributed matrix computation is said to be a $\beta$-level coding scheme if the assigned block-columns are a linear combination of exactly $\beta$ block-columns of $\bfA$ and $\bfB$. The case of $\beta = 1$ represents an uncoded scheme.
\end{definition}
Our constructions leverage the properties of combinatorial structures known as resolvable designs \cite{stinson2007combinatorial}. 
\begin{definition}
A resolvable design is a pair $(\calX, \calA)$ where $\calX$ is a set of elements (called points) and $\calA$ is a family of non-empty subsets of $\calX$ (called blocks) that have the same cardinality. A subset $\calP \subset \calA$ in a design $(\calX, \calA)$ is called a parallel class if $\cup_{\{i: \calA_i \in \calP \}} \calA_i = \calX$ and  if $\calA_i \cap \calA_j = \emptyset$ for $\calA_i, \calA_j \in \calP$ when $i \neq j$. A partition of $\calA$ into several parallel classes is called a resolution and $(\calX, \calA)$ is said to be a resolvable design if $\calA$ has at least one resolution \cite{stinson2007combinatorial}. 
\end{definition}
A  resolvable design always exists if the cardinality of a block divides $|\calX|$.
\begin{example}
\label{ex:resol}
Let $\calX = \{0, 1, 2, 3 \}$ and $\calA = \{ \{0, 1\},\{0, 2\},\{0, 3\},\{1, 2\},\{1, 3\},\{2, 3\}\}$. Now $(\calX, \calA)$ forms a resolvable design with parallel classes, $\calP_0 = \{ \{0, 1\},\{2, 3\}\}$, $\calP_1 = \{ \{0, 2\},\{1, 3\}\}$ and $\calP_2 = \{ \{0, 3\},\{1, 2\}\}$.
\end{example}
We note that the specification of the ``incidence relations'' between the points and blocks of a design can also be shown by means of an incidence matrix.
\begin{definition}
The incidence matrix $\calN$ of a design $(\calX, \calA)$ is a $|\calX| \times |\calA|$ binary matrix such that the $(i,j)$-th entry is a $1$ if the $i$-th point is a member of the $j$-th block and zero, otherwise.
\end{definition}
For example, the incidence matrix for the resolvable design in Example \ref{ex:resol} is given by 
\begin{align*}
    \calN = \begin{bmatrix}
1 & 1 & 1 & 0 & 0 & 0 \\
1 & 0 & 0 & 1 & 1 & 0 \\
0 & 1 & 0 & 1 & 0 & 1 \\
0 & 0 & 1 & 0 & 1 & 1 \\
\end{bmatrix}.
\end{align*}

We will use a cyclic assignment of tasks extensively in our constructions. We illustrate this by means of the following matrix-vector multiplication example.
\begin{example}
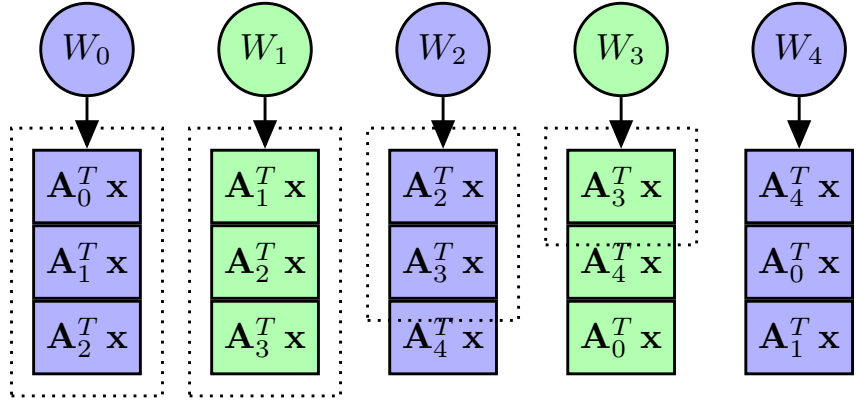
\begin{figure}[t]
\centering
\captionsetup{justification=centering}
\resizebox{0.7\linewidth}{!}{
\begin{tikzpicture}[auto, thick, node distance=2cm, >=triangle 45]

\draw
	node at (0,0)[right=-3mm]{}
	node [sum, fill=blue!30] (blk1){$W_0$}
    node [sum, fill=green!30,right = 0.8cm of blk1] (blk2) {$W_1$}
    node [sum, fill=blue!30,right = 0.8cm of blk2] (blk3) {$W_2$}
    node [sum, fill=green!30,right = 0.8cm of blk3] (blk4) {$W_3$}
    node [sum, fill=blue!30,right = 0.8cm of blk4] (blk5) {$W_4$}
    
    node [block, fill=blue!30,below = 0.5 cm of blk1] (blk11) {$\bfA_{0}^T \, \bfx$}
    node [block, fill=blue!30,below = 0.0005 cm of blk11] (blk12) {$\bfA_{1}^T\, \bfx$}
    node [block, fill=blue!30,below = 0.0005 cm of blk12] (blk13) {$\bfA_{2}^T\, \bfx$}

    node [block, fill=green!30,below = 0.5 cm of blk2] (blk21) {$\bfA_{1}^T\, \bfx$}
    node [block, fill=green!30,below = 0.0005 cm of blk21] (blk22) {$\bfA_{2}^T\, \bfx$}
    node [block, fill=green!30,below = 0.0005 cm of blk22] (blk23) {$\bfA_{3}^T\, \bfx$}

    node [block, fill=blue!30,below = 0.5 cm of blk3] (blk31) {$\bfA_{2}^T\, \bfx$}
    node [block, fill=blue!30,below = 0.0005 cm of blk31] (blk32) {$\bfA_{3}^T\, \bfx$}
    node [block, fill=blue!30,below = 0.0005 cm of blk32] (blk33) {$\bfA_{4}^T\, \bfx$}

    node [block, fill=green!30,below = 0.5 cm of blk4] (blk41) {$\bfA_{3}^T\, \bfx$}
    node [block, fill=green!30,below = 0.0005 cm of blk41] (blk42) {$\bfA_{4}^T\, \bfx$}
    node [block, fill=green!30,below = 0.0005 cm of blk42] (blk43) {$\bfA_{0}^T\, \bfx$}

    node [block, fill=blue!30,below = 0.5 cm of blk5] (blk51) {$\bfA_{4}^T\, \bfx$}
    node [block, fill=blue!30,below = 0.0005 cm of blk51] (blk52) {$\bfA_{0}^T\, \bfx$}
    node [block, fill=blue!30,below = 0.0005 cm of blk52] (blk53) {$\bfA_{1}^T\, \bfx$}
    ; 
\draw[->](blk1) -- node{} (blk11);
\draw[->](blk2) -- node{} (blk21);
\draw[->](blk3) -- node{} (blk31);
\draw[->](blk4) -- node{} (blk41);
\draw[->](blk5) -- node{} (blk51);

\draw[thick,dotted] ($(blk11.north west)+(-0.2,0.2)$)  rectangle ($(blk13.south east)+(0.2,-0.2)$);
\draw[thick,dotted] ($(blk21.north west)+(-0.2,0.2)$)  rectangle ($(blk23.south east)+(0.2,-0.2)$);
\draw[thick,dotted] ($(blk31.north west)+(-0.2,0.2)$)  rectangle ($(blk32.south east)+(0.2,-0.2)$);
\draw[thick,dotted] ($(blk41.north west)+(-0.2,0.2)$)  rectangle ($(blk41.south east)+(0.2,-0.2)$);

\end{tikzpicture}
}
\caption{\small Partitioning matrix $\bfA$ into five submatrices and assigning three uncoded tasks in a cyclic fashion to the workers. The system is resilient to two stragglers and $Q=10$. The tasks enclosed in dots can be processed without processing any copy of $\bfA_4^T \bfx$.}
\label{uncoded_matvec_prop}
\end{figure} 
Consider an example of computing $\bfA^T \bfx$, where we have $n = 5$ workers and each worker can process $\gamma = 3/5$ fraction of the total job. We partition $\bfA$ into $\Delta = 5$ block-columns: $\bfA_0, \bfA_1 , \dots, \bfA_4$. Let $\calX = \{0,1,2,3,4\}$. If we do not incorporate any coding among the block-columns, then for $\beta = 1$, we have the trivial parallel class $\calP = \{ \{0\}, \{1\}, \dots, \{4\}\}$. Fig. \ref{uncoded_matvec_prop} shows a cyclic assignment of jobs where three uncoded submatrices are allocated to each of the workers in a cyclic fashion according to the indices of three elements of $\calP$. It can be easily verified that the system is resilient to $s = 2$ stragglers. In the sequel, our assignment can be coded as well. 
\end{example}

More generally, suppose that we have $\Delta$ symbols denoted $0, \dots, \Delta -1$, $n = \Delta$ worker nodes and $\ell$ symbols to be placed in each worker node where $\ell \leq \Delta$. The symbols can be encoded block-columns of $\bfA$ or the product of encoded block-columns of $\bfA$ and $\bfB$. A cyclic assignment in this case assigns the set $\{j, j+1, \dots, j + \ell-1\} \pmod \Delta$ to worker $W_j$; symbol $j$ appears at the top and sequentially symbol $(j + \ell -1)$ (the values are reduced modulo $\Delta$) at the bottom. The node $W_j$ processes the tasks specified by the symbols from top to bottom. Within a node, the position of a symbol is denoted by an integer between $0$ and $\ell-1$, where $0$ denotes the top and $\ell-1$ denotes the bottom. 
\begin{lemma}
\label{lem:cyclicQ}
The cyclic assignment satisfies the following properties.
\begin{itemize}
    \item Each symbol appears $\ell$ times across $n$ worker nodes. Furthermore, it appears in each position $0, \dots, \ell-1$ exactly once, across all $n$ workers.
    \item Let $\alpha_c$ be the maximum number of symbols that can be processed across all worker nodes such that a specific symbol $j$ is processed exactly $c$ times (where $0 \leq c \leq \ell$). Then, $\alpha_c = \Delta \ell - \frac{\ell(\ell + 1)}{2} + \sum_{i=0}^{c-1} (\ell - i)$, independent of $j$.
\end{itemize}
\end{lemma}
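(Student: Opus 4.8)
The plan is to work directly with the sequential ``state'' description of the workers. In the cyclic assignment, worker $W_j$ holds symbol $(j+p) \bmod \Delta$ at position $p$ for $p = 0, 1, \dots, \ell-1$; equivalently, symbol $s$ occupies position $p$ of worker $W_{(s-p) \bmod \Delta}$. Because processing is top-to-bottom, the progress of worker $i$ is completely captured by a single integer $w_i \in \{0, \dots, \ell\}$: being in state $w_i$ means it has processed exactly the symbols in positions $0, \dots, w_i - 1$. I would record these two facts first, since both bullets follow from them.

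For the first bullet, I would observe that for a fixed symbol $s$ and a fixed position $p \in \{0,\dots,\ell-1\}$ there is exactly one index $j$, namely $j = (s-p)\bmod\Delta$, with symbol $s$ in position $p$ of $W_j$; moreover, since $\ell \leq \Delta$, the indices $(s-p)\bmod\Delta$ are distinct as $p$ ranges over $\{0,\dots,\ell-1\}$. Hence $s$ occurs once in each of the $\ell$ positions, for a total of $\ell$ occurrences across distinct workers. This is a short counting argument.

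For the second bullet, I would set up an optimization. Exactly $\ell$ workers contain symbol $j$ (those indexed $(j-p)\bmod\Delta$, $p=0,\dots,\ell-1$) and the remaining $\Delta - \ell$ do not; the latter are unconstrained by the symbol-$j$ requirement, so to maximize the total we set each to its maximum $\ell$, contributing $(\Delta-\ell)\ell$. Writing $v_p$ for the state of the worker carrying symbol $j$ in position $p$, that worker processes symbol $j$ iff $v_p \geq p+1$, so the constraint becomes: exactly $c$ of the indices $p$ satisfy $v_p \geq p+1$. To maximize $\sum_p v_p$, a position in the ``processed'' set can be pushed to $v_p = \ell$ while a position in the ``unprocessed'' set is capped at $v_p = p$. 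The key step is an exchange argument: if index $a$ is ``processed'' and $b$ is ``unprocessed'' with $a > b$, swapping their roles changes the objective by $a - b > 0$, so at the optimum the ``unprocessed'' set consists of the $\ell - c$ largest indices $\{c,\dots,\ell-1\}$ and the ``processed'' set of the $c$ smallest indices $\{0,\dots,c-1\}$. Summing, the $\ell$ constrained workers contribute $c\ell + \sum_{p=c}^{\ell-1} p$.

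Adding the two contributions and simplifying gives $\alpha_c = (\Delta-\ell)\ell + c\ell + \sum_{p=c}^{\ell-1}p$; an equivalent and cleaner route is to note $\alpha_0 = (\Delta-\ell)\ell + \sum_{p=0}^{\ell-1}p = \Delta\ell - \tfrac{\ell(\ell+1)}{2}$ and that moving from $c$ to $c+1$ only promotes position $c$ from value $c$ to value $\ell$, a gain of $\ell - c$, whence $\alpha_c = \alpha_0 + \sum_{i=0}^{c-1}(\ell-i)$, matching the stated formula. Independence of $j$ is immediate from the cyclic symmetry of the construction (a shift $j \mapsto j+1$ permutes workers and symbols simultaneously). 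I expect the exchange argument to be the only genuinely nontrivial step; the counting and the algebraic simplification are routine, and the main thing to be careful about is correctly translating ``symbol $j$ processed $c$ times'' into the condition $\#\{p : v_p \geq p+1\} = c$ and checking the boundary cases $c=0$ and $c=\ell$.
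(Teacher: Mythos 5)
Your proposal is correct and follows essentially the same route as the paper: both bullets rest on the observation that symbol $j$ occupies position $p$ of worker $(j-p)\bmod\Delta$, the $\Delta-\ell$ workers without $j$ contribute $(\Delta-\ell)\ell$, and the $c$ processed copies are optimally taken at positions $0,\dots,c-1$, each yielding an extra $\ell-i$ symbols. The only difference is that you justify this last greedy choice with an explicit exchange argument, whereas the paper simply asserts it; that is a welcome (if minor) tightening rather than a different proof.
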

\begin{proof}
The first claim follows since $\ell \leq \Delta=n$ and symbol $j$, where $0 \leq j \leq \Delta -1$, appears in workers $j, j-1, j-2, \dots,  j- \ell + 1$ (indices reduced modulo-$\Delta$).

For the second claim we proceed by contradiction. Suppose that there is a symbol $j$ for which the condition is violated. From part $(a)$, symbol $j$ appears once in positions $0, \dots, \ell-1$ across the workers. Thus, one can process at most $(\Delta-\ell)\ell + \sum_{i=0}^{\ell-1} i = \Delta \ell - \frac{\ell(\ell+1)}{2}$ symbols without processing any copy of $j$. Following this, any symbol processed will necessarily process symbol $j$. If we process the copy of $j$ at position $i$, we can process another $\ell-1 - i$ symbols without processing another copy of $j$. Therefore, the maximum number of symbols that can be processed such that $c$ copies of $j$ are processed are
$\Delta \ell - \frac{\ell(\ell + 1)}{2} + \sum_{i=0}^{c-1} (\ell -i)$.
\end{proof}

\section{$\beta$-level Coding for Distributed Computations}
\label{sec:beta_level}

We begin our discussion of $\beta$-level coding by considering the uncoded $\beta = 1$ case. In this scenario, the assignments are simply elements such as $\bfA_i^T \bfx$ (in the matrix-vector case) or elements such as $\bfA_i^T \bfB_j$ (in the matrix-matrix case). In the discussion below we refer the assignment of ``symbols'' to treat both cases together, where a symbol can either be of the form $\bfA_i^T \bfx$ or $\bfA_i^T \bfB_j$. Note that we can disregard the case when multiple copies of a symbol appear within the same worker node.
Consider a  $\langle n,\ell,\Delta, r \rangle$-uncoded system with $n$ workers each of which can process $\ell \geq 1$ symbols out of a total of $\Delta$ symbols. We assume that each symbol appears $r$ times across the different worker nodes, so $n \ell = \Delta r$. Now we show a lower bound on the value of $Q$ for such a system.



\begin{theorem}
\label{thm:Q_by_Delta_bound}
For a $\langle n,\ell,\Delta, r \rangle$-uncoded system we have $Q \geq \Delta r - \frac{r}{2}\, (\ell + 1) + 1$.
\end{theorem}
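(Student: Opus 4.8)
The plan is to characterize $Q$ in terms of the worst-case partial computation that fails to decode, and then to lower bound that quantity by an averaging argument over the $\Delta$ symbols. First I would observe that since the scheme is uncoded ($\beta = 1$), the master node can recover all $\Delta$ unknowns if and only if every symbol has been processed at least once. Hence $Q-1$ equals the largest number of products that can be processed, subject to the top-to-bottom processing order, while at least one symbol is still missing.

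Because each worker always processes a \emph{prefix} of its assigned list, the maximum number of products that can be processed while a fixed symbol $j$ is never touched is obtained by processing all $\ell$ tasks in every worker that does not contain $j$, and, in every worker that contains $j$ at position $i$ (with $0 \le i \le \ell-1$), exactly the $i$ tasks above it. Writing $i^{(j)}_1, \dots, i^{(j)}_r$ for the positions of the $r$ copies of symbol $j$, this maximum is $M_j = (n-r)\ell + \sum_{k=1}^{r} i^{(j)}_k$. Using the counting identity $n\ell = \Delta r$, it simplifies to $M_j = \Delta r - r\ell + \sum_{k=1}^r i^{(j)}_k$.

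Next I would lower bound $\max_j M_j$ by the average of $M_j$ over all $\Delta$ symbols. The only symbol-dependent term is $\sum_k i^{(j)}_k$, and summing this over all symbols just sums the position index over every (worker, position) slot. Since each worker holds exactly one task in each of the positions $0, 1, \dots, \ell-1$, every worker contributes $\sum_{i=0}^{\ell-1} i = \ell(\ell-1)/2$, for a total of $n\ell(\ell-1)/2$. Dividing by $\Delta$ and again invoking $n\ell = \Delta r$ gives an average of $r(\ell-1)/2$ for $\sum_k i^{(j)}_k$. Therefore some symbol $j^\star$ satisfies $\sum_k i^{(j^\star)}_k \ge r(\ell-1)/2$, whence $M_{j^\star} \ge \Delta r - r\ell + r(\ell-1)/2 = \Delta r - \frac{r}{2}(\ell+1)$. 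Since there is a valid processing state that processes $M_{j^\star}$ products yet fails to decode, we conclude $Q \ge M_{j^\star} + 1 \ge \Delta r - \frac{r}{2}(\ell+1) + 1$.

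I expect the main subtlety to lie in the reduction of the first step: one must argue carefully that the order constraint forces the processed set within each worker to be a prefix, so that avoiding a given symbol $j$ caps the usable prefix at exactly the position of $j$, and that the worst failing configuration is realized by optimizing over which single symbol to omit. The averaging step is then a clean double counting of position indices; the only point requiring a little care is that $\max_j M_j \ge \mathrm{avg}_j M_j$ yields a possibly fractional lower bound, but since $Q$ and each $M_j$ are integers this still delivers the stated inequality. As a sanity check, specializing to the cyclic assignment (where $r=\ell$ and each symbol occupies each position exactly once, by Lemma~\ref{lem:cyclicQ}) makes the averaging bound tight and recovers $Q = \Delta\ell - \frac{\ell(\ell+1)}{2} + 1$, confirming that the bound is achievable.
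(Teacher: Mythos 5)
Your proof is correct and follows essentially the same route as the paper's: both characterize $Q-1$ as the worst-case number of prefix-processable tasks that avoid some symbol $j$, and both lower-bound the maximum over $j$ by the average via the same double count (you sum position indices per symbol; the paper sums contributions per worker, which is the identical tally). No gaps to report.
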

\begin{proof}
For the system under consideration, let $Q_j$ represent the maximum number of symbols that are processed in the worst case without processing symbol $j$ (see Fig. \ref{uncoded_matvec_prop} for an example). It is evident in this case that $Q = \max_{j=0, \dots, \Delta-1} Q_j + 1$.

Our strategy is to calculate the average $\overline{Q} = \frac{1}{\Delta}\sum_{j=0}^{\Delta-1} Q_j$ and use the simple bound $Q \geq \overline{Q} + 1$. Toward this end, note that for any uncoded solution, we can calculate $\sum_{j=0}^{\Delta-1} Q_j$ in a different way. For any worker $i$, there are $\ell$ assigned block-columns and the other $\Delta - \ell$ do not appear in it. Thus, in the calculation of $\sum_{j=0}^{\Delta-1} Q_j$, worker node $i$ contributes
\begin{align*}
 (\Delta - \ell) \ell + \sum\limits_{k=1}^{\ell} (k-1) \; \; \textrm{symbols},
\end{align*}which is clearly independent of $i$. Therefore,
\begin{align*}
\overline{Q} = n \; \frac{\left[\sum\limits_{k=1}^{\ell} (k-1) + (\Delta - \ell) \ell \right]}{\Delta} =  n \ell - \frac{n \ell}{2 \Delta} (\ell+1).
\end{align*}Thus, we have the lower bound as
\begin{align}
\label{eq:Qlb}
Q \geq \Delta r - \frac{r}{2}\, (\ell + 1) + 1 
\end{align} since $n \ell = \Delta r$.
\end{proof}
\begin{remark}
\label{remark:Delta_value}
In general, we are given the number of workers $n$ and the storage fraction $\gamma$. The parameters $\Delta$ and $\ell$ can be treated as design parameters. In this setting, from \eqref{eq:Qlb}, we have
\begin{align}
    \frac{Q}{\Delta} \geq r - \frac{r}{2} \frac{\ell}{\Delta} - \frac{r}{2 \Delta} + \frac{1}{\Delta}; \;\; \; \textrm{but $r = n \gamma$, thus} \; \; \;   \frac{Q}{\Delta} \geq n\gamma \left(1 - \frac{\gamma}{2}\right) + \left(1 - \frac{n \gamma}{2}\right) \frac{1}{\Delta}. \label{eq:uncoded_Q_by_Delta_analysis}
\end{align}
If $r = n\gamma > 2$, then the second term in the RHS above is negative and has an inverse dependence on $\Delta$.
\end{remark}
The lower bound in \eqref{eq:Qlb} is met with equality when we consider the cyclic assignment scheme. For instance, Fig. \ref{uncoded_matvec_prop} shows an example where $\Delta = n = 5$, and it can be verified that $Q = 10$ and meets the lower bound in \eqref{eq:Qlb}. A similar result holds for the matrix-matrix case. These results are discussed in the relevant parts of the remainder of this section.




\subsection{Matrix-vector Multiplication}

We consider a $\beta$-level coding matrix-vector scenario where the storage fraction $\gamma = a_1/a_2$ for positive integers $a_1$ and $a_2$ with $a_1 \leq a_2$ such that $\gamma \leq \frac{1}{\beta}$. We assume that the number of worker nodes $n = c a_2$ where $c$ is a positive integer. 

We partition $\bfA$ into $\Delta$ block-columns where $\Delta$ is divisible by $\beta$. Next, we pick a resolvable design $(\calX, \calA)$ where $\calX = \{0, \dots, \Delta -1\}$. The size of the blocks in $\calA$ is $\beta$. Let $\calP_1, \calP_2, \dots$ denote distinct parallel classes of this design. We will refer to the blocks of the design as meta-symbols (to avoid potential confusion with the term block-columns which we also have used extensively). Thus, the elements of a parallel class are meta-symbols.

The overall idea is to partition the set of worker nodes into $c$ groups denoted $\calG_0, \dots, \calG_{c-1}$. For each group we pick a parallel class and place meta-symbols from the parallel class in a cyclic fashion. The parallel classes for the different groups can be the same as well. For each meta-symbol, we generate a coded block-column by choosing a random linear combination of the $\beta$ block-columns within it. In the discussion below we refer to the block-columns as ``unknowns'' as they need to be decoded by the master nodes. A precise description appears in Algorithm \ref{Alg:betalevelcoding}. We illustrate it by means of an example below.

\begin{algorithm}[h]
	\caption{$\beta$-level coding scheme for distributed matrix-vector multiplication}
	\label{Alg:betalevelcoding}
   \SetKwInOut{Input}{Input}
   \SetKwInOut{Output}{Output}
   \Input{Matrix $\bfA$ and vector $\bfx$. Storage fraction $\gamma = \frac{a_1}{a_2} \leq \frac{1}{\beta}$, $\beta$-allowed coding level, and number of workers $n = c a_2$ where $c$ is a positive integer.}
   Set $\Delta = \beta a_2$. Partition $\bfA$ into $\Delta$ block-columns\;
   Number of assigned blocks per worker, $\ell = \Delta \gamma$\;
   Assume $\calX = \{0, 1, 2, \dots, \Delta-1 \}$ and find $c$ parallel classes $\calP_i$ having a block size $\beta$, $i = 0, 1, \dots, c - 1$\;

   \For{$i\gets 0$ \KwTo $c - 1$}{
   Let the blocks of $\calP_i$ be denoted as $p_0, p_1, \dots, p_{\frac{\Delta}{\beta} - 1}$ \;
   \For{$j\gets 0$ \KwTo $\frac{\Delta}{\beta} - 1$}{
   Assign meta-symbols $p_j, p_{j+1}, \dots, p_{j + \ell - 1}$ from top to bottom (indices reduced modulo $a_2$) and vector $\bfx$ to worker $\frac{\Delta}{\beta} i + j$\;
   For each meta-symbol choose a random linear combination of length-$\beta$ of the constituent block-columns\;
   }
   }
   \Output{Distributed matrix-vector multiplication scheme having $\beta$-level coding.}
\end{algorithm}

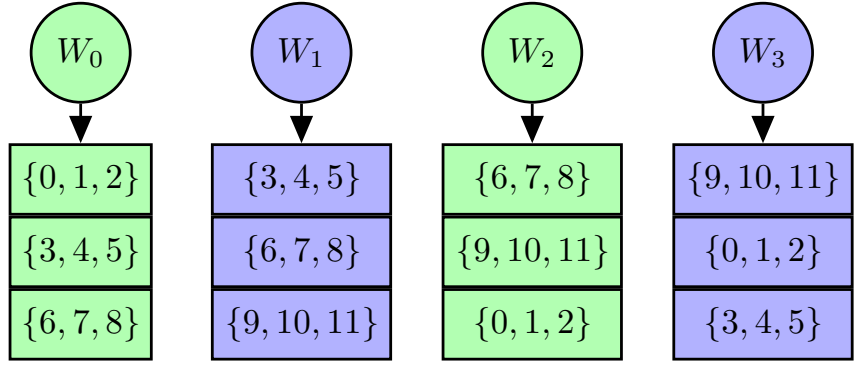
\begin{figure}[t]
\centering
\captionsetup{justification=centering}
\resizebox{0.7\linewidth}{!}{
\begin{tikzpicture}[auto, thick, node distance=2cm, >=triangle 45]
\draw
    node [sum1, minimum size = 1cm, fill=green!30] (blk1) {$W_0$}
    node [sum1, minimum size = 1cm, fill=blue!30,right = 1.2 cm of blk1] (blk2) {$W_1$}
    node [sum1, minimum size = 1cm, fill=green!30,right = 1.3 cm of blk2] (blk3) {$W_2$}
    node [sum1, minimum size = 1cm, fill=blue!30,right = 1.3 cm of blk3] (blk4) {$W_3$}

    node [block, fill=green!30,below = 0.4 cm of blk1] (blk011) {$\{ 0, 1, 2\}$}
    node [block, fill=green!30,below = 0.0005 cm of blk011] (blk012) {$\{3, 4, 5\}$}
    node [block, fill=green!30,below = 0.0005 cm of blk012] (blk013) {$\{6, 7, 8\}$}
    
    node [block, minimum width = 1.8 cm, fill=blue!30,below = 0.4 cm of blk2] (blk21) {$\{3, 4, 5\}$}
    node [block, minimum width = 1.8 cm, fill=blue!30,below = 0.0005 cm of blk21] (blk22) {$\{6, 7, 8\}$}
    node [block, minimum width = 1.8 cm,fill=blue!30,below = 0.0005 cm of blk22] (blk23) {$\{9,10,11\}$}
    
    node [block, minimum width = 1.8 cm,fill=green!30,below = 0.4 cm of blk3] (blk31) {$\{6, 7, 8\}$}
    node [block, minimum width = 1.8 cm,fill=green!30,below = 0.0005 cm of blk31] (blk32) {$\{9,10,11\}$}
    node [block, minimum width = 1.8 cm,fill=green!30,below = 0.0005 cm of blk32] (blk33) {$\{0, 1, 2\}$}
    
    node [block, minimum width = 1.8 cm,fill=blue!30,below = 0.4 cm of blk4] (blk41) {$\{9,10,11\}$}
    node [block, minimum width = 1.8 cm,fill=blue!30,below = 0.0005 cm of blk41] (blk42) {$\{0, 1, 2\}$}
    node [block, minimum width = 1.8 cm,fill=blue!30,below = 0.0005 cm of blk42] (blk43) {$\{3, 4, 5\}$}
    
    ;

\draw[->](blk1) -- node{} (blk011);
\draw[->](blk2) -- node{} (blk21);
\draw[->](blk3) -- node{} (blk31);
\draw[->](blk4) -- node{} (blk41);

\end{tikzpicture}
}
\centering
\caption{\small Job assignment for worker group $\calG_0$ for $\beta$-level matrix-vector multiplication scheme for $n = 12$ with $\gamma_{A} = \frac{1}{4}$ and $\Delta_A = 12$ using a single parallel class with $\beta = 3$. The indices $\{i,j,k\}$ indicates a random linear combination of the submatrices $\bfA_i, \bfA_j$ and $\bfA_k$. $\calG_1$ and $\calG_2$ are assigned the same symbols as workers $0-3$ but with different random coefficients.}

\label{beta_matvec}
\end{figure} 
\begin{example}
Consider a scenario with $n = 12$, $\gamma = 1/4$ and $\beta = 3$, and set $\Delta = 12$. We let $\calX = \{0, 1, \dots, 11\}$ and pick $\calP = \{\{0,1,2\}, \{3,4,5\}, \{6,7,8\}, \{9,10,11\}\}$. In this example, all three groups use the same parallel class $\calP$. As shown in Fig. \ref{beta_matvec}, in each group the meta-symbols are arranged in a cyclic fashion. For each meta-symbol a random linear combination is chosen, e.g. in worker $W_0$ the meta symbol $\{0,1,2\}$ will be replaced by $\tilde{\bfA}_0 = z_0 \bfA_0 + z_1 \bfA_1 + z_2 \bfA_2$ where the $z_i$'s are chosen at random. This implies that $W_0$ is responsible for computing $\tilde{\bfA}_0^T \bfx$, and the unknowns $\bfA_0^T \bfx, \bfA_1^T \bfx$ and $\bfA_2^T \bfx$ can be decoded if three copies of the meta-symbol $\{0,1,2\}$  are obtained from the workers as the corresponding equations are linearly independent with probability 1.
\end{example}

\begin{theorem}
\label{thm:beta_matvecstrQ}
Consider a distributed matrix-vector multiplication scheme for $n = c a_2$ workers where each worker can store $\gamma = \frac{a_1}{a_2}$ fraction of matrix $\bfA$. Suppose that $c \geq \beta$ and we use the same parallel class $\calP$ over all the worker groups. Then, the scheme described in Alg. \ref{Alg:betalevelcoding} will be resilient to $s = c \ell - \beta$ stragglers, and $Q = n \ell - \frac{c \ell (\ell + 1)}{2} + \ell (\beta - 1) + 1$.
\end{theorem}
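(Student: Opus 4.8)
The plan is to reduce decodability to a single counting condition on the number of received copies of each meta-symbol, and then treat the resilience claim and the value of $Q$ as two separate counting arguments built on Lemma \ref{lem:cyclicQ}.

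\emph{Reduction step.} Since $\calP$ is a parallel class, its blocks (the meta-symbols) partition $\calX = \{0,\dots,\Delta-1\}$, so each block-column $\bfA_i$ lies in exactly one meta-symbol, and each coded task for a meta-symbol is a random linear combination of precisely the $\beta$ block-columns inside it and of no others. Hence the global linear system that the master must solve is block-diagonal: it splits into $\Delta/\beta = a_2$ independent subsystems, one per meta-symbol, each in $\beta$ unknowns $\bfA_i^T\bfx$. I would then argue that a given meta-symbol's subsystem is solvable exactly when at least $\beta$ of its coded copies have been received: any $\beta$ copies have independent random coefficient vectors, so their $\beta\times\beta$ coefficient matrix is invertible with probability $1$, whereas fewer than $\beta$ copies leave it underdetermined. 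Thus the full product is recoverable \emph{iff every meta-symbol has been received at least $\beta$ times}.

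\emph{Resilience.} Inside each of the $c$ groups the construction is exactly the cyclic assignment of Lemma \ref{lem:cyclicQ}, where both the number of symbols and the number of workers equal $a_2 = \Delta/\beta$ (note $\ell = \Delta\gamma = \beta a_1 \le a_2$ because $\gamma \le 1/\beta$). By part (a) of that lemma, each meta-symbol occurs in exactly $\ell$ workers per group, hence $c\ell$ workers overall, at most once per worker. Deleting any $s$ workers removes at most $s$ copies of a fixed meta-symbol, so if $s \le c\ell-\beta$ every meta-symbol survives in at least $\beta$ of the remaining workers; when those finish all their tasks the reduction guarantees decoding. For tightness, deleting $c\ell-\beta+1$ of the $c\ell$ workers holding one fixed meta-symbol leaves only $\beta-1$ copies, making that subsystem unsolvable, so the resilience is exactly $s = c\ell-\beta$.

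\emph{Value of $Q$.} By the reduction, a received collection fails to decode iff some meta-symbol $j$ has at most $\beta-1$ copies, so $Q-1$ is the largest total number of products returnable (respecting the top-to-bottom processing order) while keeping some $j$ at $\le \beta-1$ copies; by the symmetry of the cyclic construction this maximum is the same for every $j$. If group $g$ returns exactly $c_g$ copies of $j$, part (b) of Lemma \ref{lem:cyclicQ} applied inside that group (again with $\Delta\to a_2$) bounds its returned products by $\alpha_{c_g} = a_2\ell - \tfrac{\ell(\ell+1)}{2} + \sum_{i=0}^{c_g-1}(\ell-i)$; summing over groups and using $c\,a_2 = n$ gives total $n\ell - \tfrac{c\ell(\ell+1)}{2} + \sum_{g}\sum_{i=0}^{c_g-1}(\ell-i)$, to be maximized over integers $c_g\ge 0$ with $\sum_g c_g \le \beta-1$. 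The increments $\ell,\ell-1,\ell-2,\dots$ are decreasing, so spreading the $\beta-1$ allowed copies over distinct groups (each contributing its largest increment $\ell$) beats concentrating them, and the maximum is $(\beta-1)\ell$. This spreading is feasible precisely because $c\ge\beta$ supplies enough groups, which is exactly where that hypothesis is needed; substituting yields $Q-1 = n\ell - \tfrac{c\ell(\ell+1)}{2} + \ell(\beta-1)$, i.e. the claimed $Q$. I expect the main obstacle to be this final optimization, namely justifying rigorously that the decreasing-increment (concavity) structure forces the optimum to spread copies and that $c\ge\beta$ is exactly what makes the spread achievable; the block-diagonal decomposition and the linear-independence-with-probability-one claim of the reduction are the other points demanding care.
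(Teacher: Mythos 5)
Your proposal is correct and follows essentially the same route as the paper: both reduce decodability to the condition that every meta-symbol be received at least $\beta$ times, both count the $c\ell$ appearances per meta-symbol for the resilience claim, and both obtain $Q$ by maximizing the per-group totals from Lemma \ref{lem:cyclicQ} subject to at most $\beta-1$ received copies, concluding that the optimum spreads one copy into each of $\beta-1$ groups (the paper parameterizes this by the number $\eta_i$ of groups receiving $i$ copies rather than by the per-group counts $c_g$, but the optimization is identical). Your explicit block-diagonal decomposition just makes precise the ``iff'' that the paper uses implicitly.
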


\begin{proof}
Based on our construction we know that any meta-symbol $\in \calP$ will appear in $\ell$ distinct workers in each worker group consisting of $\Delta/\beta = a_2$ workers ({\it cf.} Lemma \ref{lem:cyclicQ}). Thus there are $\frac{n}{a_2} = c$ such worker groups and it follows that there are a total of $c \ell$ appearances of that meta-symbol across all the worker nodes. Furthermore, each meta-symbol corresponds to a random linear combination of the corresponding unknowns (block-columns). As the choice of these random coefficients is made from a continuous distribution, as long as {\it any} $\beta$ meta-symbols are processed across all the worker nodes, the constituent unknowns will be decodable with probability $1$. Thus, the scheme is resilient to the failure of any $c \ell - \beta$ stragglers.


For the second claim, suppose that there exists a meta-symbol $\star \in \calP$ that is processed at most $\beta-1$ times when $n \ell - \frac{c \ell (\ell + 1)}{2} + \ell (\beta - 1) + 1$ meta-symbols have been processed. For each worker group, the meta-symbol $\star$ appears in all the positions $0, \dots, \ell-1$. Suppose that $\star$ appears $i$ times in $\eta_i$ worker groups for $i= 1, \dots, y$. Thus, $\sum_{i=1}^y i \eta_i \leq \beta -1$ and the maximum number of meta-symbols that can be processed is
\begin{align*}
Q{'} =  \sum_{i=1}^y \eta_i \alpha_i + (c- \sum_{i=1}^y \eta_i) \alpha_0
\end{align*} where $\alpha_0 = \frac{\Delta}{\beta} \ell - \frac{\ell(\ell+1)}{2}$ and $\alpha_i = \alpha_0 + \sum_{j=0}^{i-1} (\ell - i) = \alpha_0 + i \ell - \frac{i(i-1)}{2}$ as specified in Lemma \ref{lem:cyclicQ} (by setting the number of symbols to $\Delta/\beta$). Thus,
\begin{align}
    \label{eq:Qbar}
    Q^{'} = c \alpha_0 + \ell \sum\limits_{i=1}^y i \eta_i  - \sum\limits_{i=1}^y \eta_i \frac{i(i-1)}{2} \leq c \alpha_0 + \ell (\beta - 1)
\end{align} since we have $\sum_{i=1}^y i \eta_i \leq \beta -1$. Equality holds in \eqref{eq:Qbar} if we have $y=1$ and $\eta_1 = \beta - 1$.

In the worst case therefore, we can process $\alpha_1$ symbols from $\beta-1$ groups and $\alpha_0$ symbols from the remaining groups. This gives a total of
$$ (\beta-1)\alpha_1 + (c - \beta + 1)\alpha_0 = n \ell - \frac{c \ell (\ell + 1)}{2} + \ell (\beta - 1) $$
symbols, which is the same as the upper bound in \eqref{eq:Qbar}. Thus if $Q \geq n \ell - \frac{c \ell (\ell + 1)}{2} + \ell (\beta - 1) + 1$ then we are guaranteed that every meta-symbol is processed at least $\beta$ times. This concludes the proof. 
\end{proof}

It can be verified that the distributed matrix-vector multiplication scheme shown in Fig. \ref{beta_matvec} is resilient to $s = c \ell - \beta = 3\times 3 - 3 = 6$ stragglers and has  $Q = 25$. Theorem \ref{thm:beta_matvecstrQ} provides the value for $s$ and $Q$ for distributed matrix-vector multiplication when $\beta \leq c$. In Appendix \ref{app:diffc}, we show the calculation for $s$ and $Q$  for the case when $\beta > c$.

\begin{remark}
The proposed $\beta$-level coding scheme leads to an algorithm for uncoded matrix-vector multiplication when we set $\beta = 1$ (see Fig. \ref{uncoded_matvec_prop} for an example). The ratio $Q/\Delta$ for the construction in Alg. \ref{Alg:betalevelcoding} is lower in general as compared to the scheme in \cite{c3les}. For instance,  with $n = 10$ and $\gamma = 2/5$, Alg. \ref{Alg:betalevelcoding} results in a scheme with $Q/\Delta = 3.0$, whereas the \cite{c3les} scheme has $Q/\Delta=3.1$. The reduction is due to the lower value of $\Delta$ ({\it cf.} Remark \ref{remark:Delta_value}). 

\end{remark}

\begin{remark}
For $\beta > 1$ the $Q/\Delta$ ratio can be reduced significantly as compared to the uncoded ($\beta=1$) case. To see this consider 
$n = c a_2$ and $\gamma = \frac{a_1}{a_2}$, where $c \geq \beta$. For the uncoded case, we set $\Delta_{unc} = a_2$, and we have $Q_{unc} = n \ell - \frac{c \ell (\ell + 1)}{2} + 1$ where $\ell = a_1$. On the other hand for $\beta$-level coding, we set $\Delta_{\beta} = \beta a_2$, and we have $Q_{\beta} = n \ell - \frac{c \ell (\ell + 1)}{2} + \ell (\beta - 1) + 1$ where $\ell = \beta a_1$. This implies that 
\begin{align*}
    \frac{Q_{unc}}{\Delta_{unc}} - \frac{Q_{\beta}}{\Delta_{\beta}} 
    & = (\beta - 1) \left(\gamma \left( \frac{c a_1}{2} - 1\right) + \frac{1}{\beta a_2}\right) > 0.
\end{align*}

\end{remark}

It turns out that the recovery threshold can be further reduced if we judiciously choose different parallel classes for the different worker groups in Alg. \ref{Alg:betalevelcoding}. Utilizing these parallel classes, we present a method that improves on Theorem \ref{thm:beta_matvecstrQ} if we assume the property that the blocks among different parallel classes have intersection size to be at most one. Before stating the theorem, we discuss the decodabilty of the approach since this is not as straightforward as the single parallel class $\beta$-level coding.

To understand the decoding in this setting we consider a bipartite graph $\bfG_{dec} = \calU \cup \calV$ whose vertex set consists of the unknowns ($\calU$) on the left and the processed meta-symbols ($\calV$) on the right; an example is shown in Fig. \ref{hall_bipartite}. A meta-symbol is connected to its constituent unknowns. Note that $\bfG_{dec}$ specifies a system of equations in $\Delta$ unknowns and we need to argue that this system is invertible. In the argument below, suppose that the random linear coefficients of each meta-symbol are indeterminates and we argue that there exists a matching in $\bfG_{dec}$ where all the unknowns in $\calU$ are matched.

\begin{figure}[t]
\definecolor{myblue}{RGB}{80,80,160}
\definecolor{mygreen}{RGB}{80,160,80}
\centering
\captionsetup{justification=centering}
\resizebox{0.4\linewidth}{!}{

\begin{tikzpicture}[thick,
  every node/.style={draw,circle},
  fsnode/.style={fill=myblue},
  ssnode/.style={fill=mygreen},
  every fit/.style={ellipse,draw,inner sep=-2pt,text width=2cm},
  ->,shorten >= 3pt,shorten <= 3pt
]
\begin{scope}[start chain=going below,node distance=7mm]
\foreach \i in {1,2,...,4}
{
  \pgfmathtruncatemacro{\j}{\i - 1}
  \node[fsnode,on chain] (f\i) [label=left: $u_{\j}$] {};
}
\end{scope}

\begin{scope}[xshift=4cm,yshift=0.5cm,start chain=going below,node distance=7mm]
\foreach \i in {5,6,...,9}
{ 
  \pgfmathtruncatemacro{\j}{\i - 5}
  \node[ssnode,on chain] (s\i) [label=right: $v_{\j}$] {};
}
\end{scope}

\node [myblue,fit=(f1) (f4),label=above:$\mathcal{U}$] {};
\node [myblue,fit=(f1) (f4),label=below:$\textrm{Unknowns}$] {};
\node [mygreen,fit=(s5) (s9),label=above:$\mathcal{V}$] {};
\node [mygreen,fit=(s5) (s9),label=below:$\textrm{Symbols}$] {};

\draw (f1) -- (s5);
\draw (f1) -- (s6);
\draw (f2) -- (s5);
\draw (f2) -- (s6);
\draw (f2) -- (s7);
\draw (f3) -- (s7);
\draw (f3) -- (s8);
\draw (f3) -- (s9);
\draw (f4) -- (s8);
\draw (f4) -- (s9);

\end{tikzpicture}
}
\caption{\small For the case $\beta = 2$, every symbol is a random linear combination of two unknowns, thus there is a bipartite graph between the unknowns and symbols.}
\label{hall_bipartite}
\end{figure}
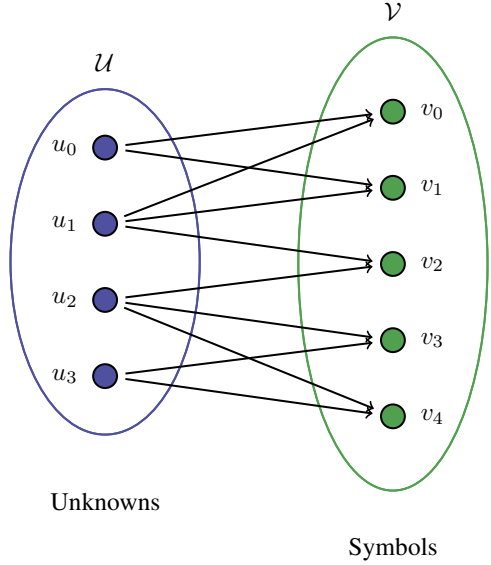 

Consider a set $\tilde{\calU}$ of $d$ unknowns from $\bfG_{dec}$ and the corresponding neighborhood $\tilde{\calV} = \calN(\tilde{\calU})$. Suppose we have a set of equations where these $d$ unknowns, namely $u_0, u_1, \dots, u_{d-1}$, participate in $\ell_0, \ell_1, \dots, \ell_{d-1}$ equations. Thus the number of outgoing edges from $\tilde{\calU}$ is $\sum_{i=0}^{d-1} \ell_i$. On the other hand, because of the structure of $\beta$-level coding approach, any symbol in $\tilde{\calV}$ has a degree $\beta$, thus the number of incoming edges in $\tilde{\calV}$ is $\beta |\tilde{\calV}|$. 

Now, suppose that a matching where all the elements of $\calU$ are matched does not exist. Hall's marriage theorem \cite{marshall1986combinatorial} gives a necessary and sufficient condition for the existence of the matching. Suppose that Hall's condition is violated for the set $\tilde{\calU}$, i.e., $|\tilde{\calV}| \leq d-1$. This means that
\begin{align}
\label{eq:hallstheorem}
    \sum\limits_{i=0}^{d-1} \ell_i \; \leq \; \beta (d - 1) .
\end{align} 

\begin{lemma}
\label{lem:allbeta}
Suppose that $\bfG_{dec}$ is such that each of $\Delta$ unknowns has at least degree $1$ and at least $\Delta - 1$ unknowns have degree at least $\beta$ each. Then, the master node can decode all the unknowns.
\end{lemma}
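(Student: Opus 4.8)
The plan is to invoke Hall's marriage theorem \cite{marshall1986combinatorial} on the bipartite graph $\bfG_{dec}$ and show that, under the stated degree hypotheses, Hall's condition can never be violated; a matching saturating $\calU$ then exists, which (as set up in the discussion preceding the lemma) certifies decodability with probability $1$. Concretely, I would argue by contradiction: assume Hall's condition fails for some $\tilde{\calU} \subseteq \calU$ of size $d$, so that the already-derived inequality \eqref{eq:hallstheorem}, namely $\sum_{i=0}^{d-1} \ell_i \leq \beta (d-1)$, holds, and then derive a contradiction from the prescribed degrees.

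The crux is a short degree count over the $d$ unknowns in $\tilde{\calU}$. Since at most one unknown in all of $\calU$ has degree strictly below $\beta$, at most one of the $d$ unknowns in $\tilde{\calU}$ can have degree below $\beta$, and even that one has degree at least $1$ by hypothesis. Hence, regardless of whether the single possible low-degree unknown lies in $\tilde{\calU}$, I would establish
\begin{align*}
\sum_{i=0}^{d-1} \ell_i \;\geq\; \beta(d-1) + 1,
\end{align*}
since either all $d$ degrees are at least $\beta$ (giving $\beta d \geq \beta(d-1)+1$ as $\beta \geq 1$) or exactly $d-1$ of them are at least $\beta$ while the last is at least $1$ (giving $\beta(d-1)+1$). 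This strictly contradicts \eqref{eq:hallstheorem}, so no Hall-violating set $\tilde{\calU}$ exists. Note the boundary case $d=1$ is consistent: \eqref{eq:hallstheorem} would force a degree-$0$ unknown, which the degree-$\geq 1$ hypothesis rules out.

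Having excluded every Hall violation, I would conclude that $\bfG_{dec}$ admits a matching saturating $\calU$, i.e.\ a selection of $\Delta$ distinct meta-symbols, one assigned to each unknown. Treating the random combination coefficients as indeterminates, this matching singles out a monomial in the Leibniz expansion of the associated $\Delta \times \Delta$ coefficient determinant that cannot be cancelled by any other permutation, so the determinant is a nonzero polynomial in those coefficients. Because the coefficients are drawn from a continuous distribution, the determinant is nonzero with probability $1$, the coefficient matrix is invertible, and all $\Delta$ unknowns are recovered.

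The point requiring the most care is calibrating the count so that the strict inequality $\sum_{i=0}^{d-1}\ell_i > \beta(d-1)$ holds in every case: the hypothesis ``$\Delta-1$ unknowns of degree at least $\beta$, all of degree at least $1$'' is exactly tight here, since permitting two unknowns of degree below $\beta$ (or a single unknown of degree $0$) would admit a Hall-violating subset and break the argument. The remaining bridge from the combinatorial matching to probability-$1$ invertibility is the standard Schwartz--Zippel / measure-zero argument and is essentially already in place in the text preceding the lemma.
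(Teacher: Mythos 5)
Your proof is correct and follows essentially the same route as the paper's: both verify Hall's condition by noting that any $d$-subset of unknowns contains at most one unknown of degree below $\beta$ (and that one still has degree at least $1$), giving $\sum_{i}\ell_i \geq \beta(d-1)+1 > \beta(d-1)$, which rules out any violation of \eqref{eq:hallstheorem}. The only difference is that you spell out the final matching-to-invertibility step (nonzero determinant monomial, probability-$1$ nonsingularity) in more detail than the paper, which leaves it implicit in the discussion preceding the lemma.
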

\begin{proof}
First, consider $d = 1$, so $|\tilde{\calU}| = 1$. Since each unknown has at least degree $1$, thus $\sum_{i=0}^{d-1} \ell_i \geq 1 > \beta (d - 1)$. Next we consider any set of $d \geq 2$ unknowns, where we know that at least $(d-1)$ unknowns have degree at least $\beta$. In that case, $\sum_{i=0}^{d-1} \ell_i \geq 1 + \beta(d-1) > \beta (d-1)$. Thus \eqref{eq:hallstheorem} cannot be satisfied for any $d \geq 2$. So, there exists a matching in $\bfG_{dec}$ where all the unknowns are matched, hence the master node can decode all the unknowns.
\end{proof}

Now we state the result when different parallel classes are utilized. 

\begin{theorem}
\label{thm:diff_par}
Consider $c$ distinct parallel classes (with block size $\beta$) such that the size of the intersection between any two blocks from different parallel classes is at most $1$. Using Alg. \ref{Alg:betalevelcoding}, the distributed matrix-vector multiplication scheme can be resilient to at least $c \ell - \beta + \lambda$ stragglers, when $c \geq \beta + \lambda$ and $\lambda < \beta$.
\end{theorem}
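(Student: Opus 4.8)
The plan is to reduce the claim to the decodability criterion of Lemma~\ref{lem:allbeta}. Recall that resilience to $s$ stragglers means the master can recover all $\Delta$ unknowns from the tasks of any $n-s$ surviving workers, so I would fix an arbitrary set of $s = c\ell - \beta + \lambda$ straggling workers, form the bipartite graph $\bfG_{dec}$ induced by the surviving workers, and show that it satisfies the two hypotheses of Lemma~\ref{lem:allbeta}: every unknown has degree at least $1$, and at least $\Delta-1$ unknowns have degree at least $\beta$. The starting observation is that each unknown (point) $i$ lies in exactly one block of each parallel class, since each class partitions $\calX$. Hence, by Lemma~\ref{lem:cyclicQ}, $i$ is carried by exactly $\ell$ workers in each group and by $c\ell$ workers in total, so its degree in $\bfG_{dec}$ equals $c\ell - |F_i|$, where $F_i$ denotes the set of straggling workers containing $i$.

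The first hypothesis is immediate: since $|F_i| \le s = c\ell - \beta + \lambda$ and $\lambda < \beta$, every unknown retains degree at least $\beta - \lambda \ge 1$. The second hypothesis is the heart of the argument, and I would establish it by contradiction. Suppose two distinct unknowns $i, i'$ both have degree less than $\beta$, i.e.\ $|F_i|, |F_{i'}| \ge c\ell - \beta + 1$. Since $|F_i \cup F_{i'}| \le s$, inclusion--exclusion yields the lower bound
\begin{align*}
|F_i \cap F_{i'}| \;\ge\; 2(c\ell - \beta + 1) - (c\ell - \beta + \lambda) \;=\; c\ell - \beta + 2 - \lambda .
\end{align*}

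It then remains to upper bound $|F_i \cap F_{i'}|$, the number of straggling workers carrying \emph{both} $i$ and $i'$. This is exactly where the hypothesis that blocks from different parallel classes intersect in at most one point is used: if $i$ and $i'$ lay in a common block in two different classes, those two blocks would share both $i$ and $i'$, violating the intersection bound. Hence $i$ and $i'$ lie in a common block in at most one of the $c$ groups, which contributes at most $\ell$ such workers; each of the other groups places $i, i'$ in distinct blocks, and a cyclic window of length $\ell$ covers two prescribed distinct blocks in at most $\ell-1$ workers. Summing over groups gives $|F_i \cap F_{i'}| \le \ell + (c-1)(\ell-1) = c\ell - c + 1$. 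Combining the two bounds forces $c\ell - \beta + 2 - \lambda \le c\ell - c + 1$, i.e.\ $c \le \beta + \lambda - 1$, contradicting $c \ge \beta + \lambda$. Therefore at most one unknown has degree below $\beta$, both hypotheses of Lemma~\ref{lem:allbeta} hold, and the master can decode all unknowns from the surviving $n-s$ workers.

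I expect the main obstacle to be the structural upper bound on $|F_i \cap F_{i'}|$: one must argue that the intersection-at-most-one property confines the co-occurrence of a pair of points to a single parallel class, and then count, using the cyclic assignment of Lemma~\ref{lem:cyclicQ}, how many length-$\ell$ windows can cover two prescribed blocks in the remaining groups. The two-term count (at most $\ell$ from the shared group and $\ell-1$ from each other group) is precisely what makes the threshold $c \ge \beta + \lambda$ sharp, so pinning down these constants is the delicate part; the degree bookkeeping and the appeal to Hall's theorem via Lemma~\ref{lem:allbeta} are routine by comparison.
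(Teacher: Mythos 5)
Your proof is correct and takes essentially the same route as the paper's: both rest on the observation that a pair of unknowns can share a meta-symbol in at most one parallel class, hence co-occur in at most $\ell + (c-1)(\ell-1) = c\ell - c + 1$ workers, and both then invoke Lemma~\ref{lem:allbeta}. The paper phrases the counting as a lower bound of $\ell + (\ell+1)(c-1)$ on the number of workers carrying either unknown (so at least $2\beta-1$ survive and one unknown keeps degree $\geq \beta$), whereas you argue the contrapositive via an intersection bound; the two are arithmetically identical.
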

\begin{proof}
The main idea is to find the scenario where Lemma \ref{lem:allbeta} can be directly applicable. To establish that we consider two unknowns $u_0$ and $u_1$. The event that a pair of unknowns belong to the same meta-symbol can happen within at most one parallel class (in other words, within only one worker group) according to our choices of parallel classes. Thus in the remaining $(c - 1)$ worker groups, those two unknowns exist in different meta-symbols. If they appear in the same meta-symbol, then there are $\ell$ workers within the worker group where they appear. On the other hand, if they appear in different meta-symbols, then there are at least $\ell + 1$ workers within the worker group where either $u_0$ or $u_1$ or both appear as part of a meta-symbol.

So, the unknowns $u_0$ or $u_1$ or both participate in different meta-symbols in at least in $\ell + (\ell + 1) (c - 1)$ workers. Now since we have $c \ell - \beta + \lambda$ stragglers, using $c \geq \beta + \lambda$, we have still
\begin{align*}
    \ell + (\ell + 1) (c - 1) - \left( c \ell - \beta + \lambda \right) \geq 2 \beta -1 \; 
\end{align*} workers left. This means that either $u_0$ or $u_1$ or both exist in at least $2\beta-1$ workers after the stragglers are removed. This in turn implies that the corresponding $\bfG_{dec}$ has at least $2\beta-1$ edges emanating from the pair of unknowns $u_0$ and $u_1$, so that at least one of them has degree $\geq \beta$. Thus Lemma \ref{lem:allbeta} is satisfied and we can decode all the unknowns.
\end{proof}

\begin{example}
Consider a scenario with $n = 20$ workers with $\gamma = \frac{1}{5}$, thus $c = 4$, and we apply $\beta$-level coding approach with $\beta = 3$. In this regard, we incorporate $four$ different parallel classes of block size $\beta = 3$ obtained from the solution of the famous Kirkman's Schoolgirl problem \cite{kirkman1850lady}, where any two blocks from any two different parallel classes have an intersection size at most $one$. It can be verified that the distributed matrix-vector multiplication scheme will be resilient to at least $s = c \ell - \beta  + 1 = 10$ stragglers whereas the number of stragglers if we used the single parallel class would have been $9$.
\end{example}

Table \ref{compbeta} compares experimental results for different matrix-vector multiplication approaches in terms of number of stragglers and $Q/\Delta$ values. For every case, we observe a significant improvement of the metrics if we incorporate multiple parallel classes instead of a single parallel class. We note here that the $Q/\Delta$ was computed via computer experiments. 

\begin{table}[t]
\caption{{\small Comparison of different metrics for different approaches.}}
\label{compbeta}
\begin{center}
\begin{small}
\begin{sc}
\begin{tabular}{c c c c c}
\hline
\toprule
\multirow{2}{*}{System} & \multirow{2}{*}{Metrics} & Dense Codes & $\beta$-level Coding & $\beta$-level Coding \\ 
 & & \cite{yu2017polynomial}, \cite{8919859}, \cite{8849468} & (Single parallel class) & (Multiple parallel classes) \\ 
 \midrule
$n = 8$, $\gamma_A = \frac{1}{4}$ & $s$ & $4$ & $2$ & $3$ \\
\cline{2-5}
and $\; \beta = 2$ & $Q/\Delta$ & $-$ & $13/8$ & $11/8$ \\
 \midrule
$n = 8$, $\gamma_A = \frac{1}{4}$  & $s$ & $4$ & $3$ & $4$ \\
\cline{2-5}
and $\; \beta = 3$ & $Q/\Delta$ & $-$ & $19/12$ & $14/12$ \\
 \midrule
$n = 10$, $\gamma_A = \frac{1}{5}$  & $s$ & $5$ & $3$ & $4$ \\
\cline{2-5}
and $\; \beta = 3$ & $Q/\Delta$ & $-$ & $25/15$ & $20/15$ \\
\bottomrule
\end{tabular}
\end{sc}
\end{small}
\end{center}
\end{table}%
The analysis in Theorem \ref{thm:diff_par} above is somewhat loose as we only assume that the intersection sizes between blocks from different parallel classes is at most 1. Indeed, exploiting more structure in the choice of the parallel classes can yield better results, though the analysis becomes significantly harder.
Here we present a method that improves on Theorem \ref{thm:beta_matvecstrQ} when $c = \beta = 2, \ell \leq \Delta/2-2$ and $\Delta \geq 8$.  Let $\calX = \{0, 1, \dots, \Delta-1\}$ where $\Delta = n = 2a_2$. The block size of the design is two and the parallel classes are given as follows.
\begin{align}
    \label{paracls}
    \calP_0 & = \left\lbrace \{0, 1 \},  \{2, 3 \}, \dots, \{\Delta-2, \Delta-1 \} \right\rbrace \nonumber \\
\textrm{and}\; \;\calP_1 &= \left\lbrace \{0, 5 \},  \{2, 7 \}, \dots, \{\Delta-2, 3\} \right\rbrace .
\end{align}
Thus, the $i$-th block in $\calP_0$ and $\calP_1$, for $0 \leq i \leq \Delta/2 -1$ is given by $\{2i, 2i+1\}$ and $\{2i, 2i+5\} \pmod \Delta$, respectively. We follow the Alg. \ref{Alg:betalevelcoding} for the specification of the coding scheme.

\begin{theorem}
\label{thm:beta2_matvecstrQ}
Let $c=\beta =2$, $\ell \leq \Delta/2-2$ and $\Delta \geq 8$. If we use the parallel classes in \eqref{paracls}, then the matrix-vector scheme described in Alg. \ref{Alg:betalevelcoding} will be resilient to $s = 2\ell  - 1$ stragglers, and $Q = n \ell - \ell (\ell + 1) + 1$.
\end{theorem}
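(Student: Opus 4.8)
The plan is to analyze the bipartite decoding graph $\bfG_{dec}$ and reduce both claims to the combinatorial geometry of the two parallel classes in \eqref{paracls}. First I would record the key structural fact that drives everything: each unknown $k$ appears in exactly $2\ell$ worker nodes ($\ell$ in group $\calG_0$, where its $\calP_0$-block occupies a cyclic interval of $\ell$ consecutive workers, and $\ell$ in $\calG_1$ similarly for its $\calP_1$-block), and for any two distinct unknowns $k,k'$ the sets $W_k, W_{k'}$ of workers containing them satisfy $|W_k \cap W_{k'}| \leq 2\ell - 2$. This overlap bound is where the offset in \eqref{paracls} enters: if $\{k,k'\}$ is a block of $\calP_0$ (overlap $\ell$ in $\calG_0$), then in $\calP_1$ the two unknowns land in blocks whose indices differ by exactly $2$ (the even element $2i$ stays in block $i$ while its odd partner $2i+1$ moves to block $i-2$), so their $\calG_1$-intervals overlap in only $\ell-2$ workers; the case of a $\calP_1$-block is symmetric, and if $\{k,k'\}$ is a block in neither class the overlap is at most $\ell-1$ in each group. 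The hypotheses $\Delta \geq 8$ and $\ell \leq \Delta/2 - 2$ are precisely what guarantee that a block-index difference of $2$ yields a genuine cyclic-interval overlap of $\ell-2$ with no wrap-around contribution, so in every case $|W_k \cap W_{k'}| \leq 2\ell - 2$.

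For the resilience claim I would argue via Lemma \ref{lem:allbeta}. With any set $S$ of $s = 2\ell-1$ stragglers, a surviving worker completes all its tasks, so $\deg(k) = 2\ell - |W_k \cap S| \geq 2\ell - (2\ell-1) = 1$ for every unknown $k$. If two distinct unknowns $k,k'$ both had degree at most $1$, then $|W_k \cap S| = |W_{k'} \cap S| = 2\ell - 1 = |S|$, forcing $S \subseteq W_k \cap W_{k'}$, which is impossible since $|W_k \cap W_{k'}| \leq 2\ell - 2 < |S|$. Hence at most one unknown has degree below $\beta = 2$, so every unknown has degree at least $1$ and at least $\Delta - 1$ have degree at least $\beta$; Lemma \ref{lem:allbeta} then certifies decodability.

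For the value of $Q$ I would prove matching bounds. For the lower bound I would exhibit the worst-case ``avoid one unknown'' configuration: choosing an unknown $k$ and, in each group, applying Lemma \ref{lem:cyclicQ} with $c=0$ to its $a_2$ meta-symbols, one can process $\alpha_0 = \frac{\Delta}{2}\ell - \frac{\ell(\ell+1)}{2}$ tasks per group without ever touching $k$'s block, for a total of $2\alpha_0 = n\ell - \ell(\ell+1)$ processed tasks in which $k$ stays undecodable; hence $Q \geq n\ell - \ell(\ell+1) + 1$. For the upper bound I must show every configuration of $n\ell - \ell(\ell+1)+1$ processed tasks is decodable, i.e.\ that any Hall-violating set $\tilde\calU$ forces at least $\ell(\ell+1)$ unprocessed tasks. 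The case $|\tilde\calU| = 1$ (a single degree-$0$ unknown) wastes exactly $\ell(\ell+1)$, since in each group the block occupies positions $0,\dots,\ell-1$ and avoiding it forces the worker holding it at position $p$ to stop after $p$ tasks, wasting $\sum_{p=0}^{\ell-1}(\ell-p) = \ell(\ell+1)/2$ per group; the remaining work is to show that every larger violating set is at least this costly.

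The main obstacle is exactly this last step. I note that Lemma \ref{lem:allbeta} alone is insufficient for the upper bound: there exist configurations of $n\ell - \ell(\ell+1)+1$ tasks with two degree-one unknowns (so the lemma's hypothesis fails) that are nonetheless decodable, because those two unknowns sit in distinct processed equations. Thus the upper bound genuinely requires Hall's condition rather than the sufficient condition of Lemma \ref{lem:allbeta}. My plan is to take an inclusion-minimal violating set $\tilde\calU$ with $|\tilde\calU| = d$ and $|\calN(\tilde\calU)| \leq d-1$, use $\sum_{k \in \tilde\calU}\deg(k) \leq \beta(d-1) = 2(d-1)$ to control the surviving degree on $\tilde\calU$, and convert this degree deficit into a lower bound on unprocessed tasks by charging each unprocessed suffix. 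The delicacy is that a single unprocessed task can absorb two units of deficit only when both of its unknowns lie in $\tilde\calU$, and the distance-$2$ separation from \eqref{paracls} limits how often two members of $\tilde\calU$ can share a block; combining this restriction with the per-worker suffix structure to show the total waste never drops below $\ell(\ell+1)$, with equality only in the single-unknown case, is the crux of the argument.
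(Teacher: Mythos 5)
Your resilience argument is complete and correct, and it follows the same route as the paper: the key overlap bound $|W_k \cap W_{k'}| \le 2\ell-2$ (with the block-index offset of $2$ in \eqref{paracls} supplying the $\ell-2$ overlap in the second group, and $\ell \le \Delta/2-2$, $\Delta \ge 8$ ruling out wrap-around) forces at most one unknown to have degree below $2$ after removing any $2\ell-1$ workers, and Lemma \ref{lem:allbeta} finishes. Your version is, if anything, cleaner than the paper's ($S \subseteq W_k \cap W_{k'}$ is impossible since $|S| = 2\ell-1 > 2\ell-2$). The lower bound $Q \ge n\ell - \ell(\ell+1) + 1$ via the avoid-one-unknown configuration ($2\alpha_0$ processed tasks per Lemma \ref{lem:cyclicQ}) is also correct.

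The genuine gap is the converse: you never prove that \emph{every} pattern of $n\ell - \ell(\ell+1)+1$ processed tasks is decodable. What you give is a plan --- take a minimal Hall-violating set, use $\sum_i \ell_i \le 2(d-1)$, and charge the degree deficit to unprocessed suffixes --- and you explicitly flag its key step (``the total waste never drops below $\ell(\ell+1)$'') as open. That step is the entire content of the upper bound, so the proof is incomplete as it stands. For comparison, the paper closes it differently: writing $\rho_0 \ge \rho_1$ for the per-group counts of processed meta-symbols and $\alpha_0 = \frac{\Delta}{2}\ell - \frac{\ell(\ell+1)}{2}$, it splits into $\rho_0 \ge \alpha_0+\ell+1$, $\alpha_0+2 \le \rho_0 \le \alpha_0+\ell$, and $\rho_0 = \alpha_0+1$, and argues in each case that at most one unknown has degree one, so that Lemma \ref{lem:allbeta} again applies. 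I will add that your side observation --- that Lemma \ref{lem:allbeta} alone cannot carry the upper bound --- is correct and sharper than you may realize: for $\Delta=8$, $\ell=2$ one can process $Q=11$ tasks so that the meta-symbols $\{0,5\}$ and $\{4,1\}$ of $\calP_1$ are each processed once and all copies of $\{0,1\}\in\calP_0$ are unprocessed, leaving both $\bfA_0^T\bfx$ and $\bfA_1^T\bfx$ with degree one while decoding still succeeds (their partners $\bfA_5$ and $\bfA_4$ are recovered from the two processed copies of $\{4,5\}$). This configuration falls outside the paper's Case~3, which only treats adjacent once-processed blocks, so any correct completion must handle Hall's condition (or the once-processed blocks at block-distance exactly $2$) directly rather than reducing everything to Lemma \ref{lem:allbeta}. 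Until that step is carried out, the claimed value of $Q$ is only a lower bound in your write-up.
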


\begin{proof}
The detailed proof is discussed in Appendix \ref{App:diff_beta_2}
\end{proof}

\subsection{Matrix-matrix Multiplication}
Now we consider the case of matrix-matrix multiplication, where we assume that each of the $n$ worker nodes can store $\gamma_A = \frac{a_1}{a_2}$ and $\gamma_B = \frac{b_1}{b_2}$ fractions of matrices $\bfA$ and $\bfB$. In this case, we consider $\beta_A$ and $\beta_B$-level coding for $\bfA$ and $\bfB$, respectively so that $\gamma_A \leq \frac{1}{\beta_A}$ and $\gamma_B \leq \frac{1}{\beta_B}$. We partition matrices $\bfA$ and $\bfB$ into $\Delta_A$ and $\Delta_B$ block-columns, respectively, and so, we have, in total, $\Delta = \Delta_A \Delta_B$ unknowns. Next we assign $\ell_A = \Delta_A \gamma_A$ block-columns from $\bfA$ and $\ell_B = \Delta_B \gamma_B$ block-columns of $\bfB$ to each of the workers. Thus each worker computes $\ell = \ell_A \ell_B$ submatrix products according to the natural order discussed in Section \ref{sec:prel}.

Once the matrices are decomposed into block-columns, we allow $\beta_A$-level and $\beta_B$-level coding for matrices $\bfA$ and $\bfB$, respectively.  In this case we choose two separate resolvable designs with block sizes $\beta_A$ and $\beta_B$ supported on point sets $\{0, 1, \dots, \Delta_A -1\}$ and $\{0, 1, \dots, \Delta_B -1\}$ respectively. Furthermore, we assume that  the number of worker nodes $n = c \times a_2 b_2$ where $c$ is a positive integer.

\begin{algorithm}[h]
	\caption{$\beta$-level coding scheme for matrix-matrix multiplication}
	\label{Alg:betalevelcoding_matmat}
   \SetKwInOut{Input}{Input}
   \SetKwInOut{Output}{Output}
   \Input{Matrices $\bfA$ and $\bfB$, storage fractions of the workers $\gamma_A = \frac{a_1}{a_2}$ $\gamma_B = \frac{b_1}{b_2}$, $\beta_A, \beta_B$-coding level for $\bfA$ and $\bfB$, respectively, and number of worker nodes, $n = c \times a_2 b_2$, where $c$ is a positive integer.}
   Partition $\bfA$ into $\Delta_A = \beta_A a_2$ block-columns and partition $\bfB$ into $\Delta_B = \beta_B b_2$ block-columns\; 
   
  $\Delta = \Delta_A \Delta_B$, $\ell_A = \Delta_A \gamma_A$, $\ell_B = \Delta_B \gamma_B$, $\beta = \beta_A \beta_B$\;
  Assume $\calX_A = \{0, 1, 2, \dots, \Delta_A-1 \}$ and find parallel classes $\calP^A_i$ having block size $\beta_A$, $i = 0, 1, \dots, c - 1$\; 
  Assume $\calX_B = \{0, 1, 2, \dots, \Delta_B-1 \}$ and find parallel classes $\calP^B_i$ having block size $\beta_B$, $i = 0, 1, \dots, c - 1$\; 
  
  \For{$i\gets 0$ \KwTo $c - 1$}{
   Let the blocks of $\calP^A_i$ be denoted as $p_{A_0}, p_{A_1}, \dots, p_{A_{\alpha_A - 1}}$ \;
   Let the blocks of $\calP^B_i$ be denoted as $p_{B_0}, p_{B_1},\dots, p_{B_{\alpha_B - 1}}$ \;
   
   \For{$j\gets 0$ \KwTo $\frac{\Delta}{\beta} - 1$}{
     Assign sets $p_{A_j}, p_{A_{j+1}}, \dots, p_{A_{j + \ell_A - 1}}$ from top to bottom (indices reduced modulo $a_2$) to worker $\frac{\Delta}{\beta} i + j$\;
     
     $k \gets \floor{\frac{j}{a_2}}$, and assign sets $p_{B_k}, p_{B_{k+1}}$, $\dots$, $p_{B_{k + \ell_B - 1}}$ from top to bottom (indices reduced modulo $b_2$) to worker $\frac{\Delta}{\beta} i + j$\;
     
     Choose random linear combinations of the constituent block-columns of the meta-symbols of $\calP^A_i$ and $\calP^B_i$ of length $\beta_A$ and $\beta_B$ respectively\;
   }
   }
   \Output{Distributed matrix-matrix multiplication scheme having $\beta$-level coding.}
\end{algorithm}

Let $\calP^A$ and $\calP^B$ denote parallel classes for the matrices $\bfA$ and $\bfB$ respectively. As in the matrix-vector scheme, the coding scheme is specified by the meta-symbols (blocks) of $\calP^A$ and $\calP^B$. 
Let $\calN_A$ and $\calN_B$ denote the corresponding incidence matrices of these parallel classes. Recall that each meta-symbol is in one-to-one correspondence with the columns of the incidence matrices. 
Consider the matrix $\calN_{AB}$ formed by considering pair-wise Kronecker products of columns from $\calN_A$ and $\calN_B$. Then the rows of $\calN_{AB}$ correspond to unknowns of the form $\bfA_i^T \bfB_j$ and the columns correspond to the support of the random linear equations that are formed by considering the pairwise products. We will refer to the meta-symbols of $\calN_{AB}$ as product meta-symbols and denote it by $ \calP^{AB}$. For example, suppose that $\beta_A = \beta_B = 2$ and consider two meta-symbols $\{0,1\} \in \calP^{A} $ and $\{0,1\} \in \calP^B$. If these symbols are placed in a worker, the corresponding product would be $(x_0 \bfA_0^T + x_1 \bfA_1^T)(y_0 \bfB_0 + y_1 \bfB_1) = x_0 y_0 \bfA_0^T \bfB_0 + x_0 y_1 \bfA_0^T \bfB_1 + x_1 y_0 \bfA_1^T \bfB_0 + x_1 y_1 \bfA_1^T \bfB_1$ where $x_0, x_1, y_0, y_1$ are chosen i.i.d. at random from a continuous distribution. Thus, the coefficients of the corresponding equation can be expressed as 
\begin{align}
[x_0 ~x_1] \otimes [y_0 ~y_1] \label{eq:kron_prod_matmat}    
\end{align}
 where $\otimes$ denotes the Kronecker product.
\begin{claim}
If $\calN_A$ (of size $\Delta_A \times \Delta_A/\beta_A$) and $\calN_B$ (of size $\Delta_B \times \Delta_B/\beta_B$) correspond to incidence matrices of parallel classes, then $\calN_{AB}$ also forms a parallel class of size $\Delta_A\Delta_B \times \Delta_A \Delta_B/\beta_A \beta_B$.
\end{claim}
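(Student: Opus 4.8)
The plan is to reduce everything to the following characterization of the incidence matrix of a parallel class: since the blocks of a parallel class partition the point set, a $0$--$1$ matrix is the incidence matrix of a parallel class with uniform block size $\beta$ \emph{if and only if} every row contains exactly one $1$ (each point lies in exactly one block) and every column contains exactly $\beta$ ones (every block has size $\beta$). First I would record that the hypotheses give this form for the factors: $\calN_A$ is a $\Delta_A \times \Delta_A/\beta_A$ matrix with a single $1$ per row and $\beta_A$ ones per column, and likewise $\calN_B$ has a single $1$ per row and $\beta_B$ ones per column.

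Next I would fix the indexing induced by the Kronecker product. Writing the columns of $\calN_A$ as $\mathbf{a}_0,\dots,\mathbf{a}_{\Delta_A/\beta_A-1}$ and of $\calN_B$ as $\mathbf{b}_0,\dots,\mathbf{b}_{\Delta_B/\beta_B-1}$, the columns of $\calN_{AB}$ are the pairwise products $\mathbf{a}_i \otimes \mathbf{b}_j$. Indexing the $\Delta_A\Delta_B$ rows by pairs $(k,l)$ with $0\le k\le \Delta_A-1$, $0\le l\le \Delta_B-1$, and the columns by pairs $(i,j)$, the entry of $\calN_{AB}$ in position $\big((k,l),(i,j)\big)$ equals $(\calN_A)_{k,i}\,(\calN_B)_{l,j}$. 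Under this convention the $(k,l)$-th row is precisely the unknown $\bfA_k^T\bfB_l$, which matches the stated correspondence between rows and products.

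Two short counting checks then finish the argument. For column weights, the number of nonzero entries of $\mathbf{a}_i \otimes \mathbf{b}_j$ is the product of the weights of $\mathbf{a}_i$ and $\mathbf{b}_j$, namely $\beta_A\beta_B$, so every block of $\calN_{AB}$ has size $\beta_A\beta_B$. For row weights, fix a row $(k,l)$: because $\calN_A$ is a parallel class there is a unique $i$ with $(\calN_A)_{k,i}=1$, and because $\calN_B$ is a parallel class there is a unique $j$ with $(\calN_B)_{l,j}=1$; hence exactly one column $(i,j)$ has a $1$ in row $(k,l)$, so every row of $\calN_{AB}$ has weight one.

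Combining the two counts, $\calN_{AB}$ is a $\Delta_A\Delta_B \times (\Delta_A/\beta_A)(\Delta_B/\beta_B)$ binary matrix with exactly one $1$ per row and exactly $\beta_A\beta_B$ ones per column. By the characterization above, its columns are non-empty, pairwise disjoint (two blocks sharing a point would force some row to have two $1$s) and cover all $\Delta_A\Delta_B$ points; hence they form a parallel class with block size $\beta=\beta_A\beta_B$ on the point set of $\Delta_A\Delta_B$ unknowns, containing $\Delta_A\Delta_B/(\beta_A\beta_B)$ blocks, which is exactly the claim. I do not expect a genuine obstacle here: everything rests on the multiplicativity of the support of a Kronecker product, and the only place demanding care is keeping the $(k,l)$/$(i,j)$ bookkeeping of the Kronecker indexing consistent throughout.
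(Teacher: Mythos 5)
Your proof is correct, but it takes a slightly different route from the paper's. The paper establishes that the columns of $\calN_{AB}$ are pairwise disjoint by computing the inner product $(\bfu_0 \otimes \bfv_0)^T(\bfu_1 \otimes \bfv_1) = (\bfu_0^T\bfu_1)(\bfv_0^T\bfv_1) = 0$ for distinct column pairs (using orthogonality of distinct columns within each factor's parallel class), and then deduces coverage of all $\Delta_A\Delta_B$ points by a counting argument: there are $\Delta_A\Delta_B/(\beta_A\beta_B)$ disjoint columns each of support size $\beta_A\beta_B$. You instead work row-by-row: the ``exactly one $1$ per row'' characterization of a parallel class incidence matrix, combined with the multiplicativity of Kronecker supports, directly gives that every point $(k,l)$ lies in exactly one product block, which yields disjointness and coverage simultaneously and dispenses with the counting step. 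Both arguments hinge on the same underlying fact (the support of $\mathbf{a}_i \otimes \mathbf{b}_j$ is the product of the supports), so the difference is one of bookkeeping rather than substance; your version is marginally more self-contained, while the paper's inner-product formulation is the one it reuses implicitly when reasoning about product meta-symbols later in the section. Either proof is acceptable.
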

\begin{proof}
Let $\bfu_i \otimes \bfv_i$ for $i = 0, 1$ denote two distinct columns of $\calN_{AB}$ such that $\bfu_i$ and $\bfv_i$ are columns in $\calN_A$ and $\calN_B$ respectively. Then, 
\begin{align*}
(\bfu_0 \otimes \bfv_0)^T (\bfu_1 \otimes \bfv_1) &= \bfu_0^T \bfu_1 \times \bfv_0^T \bfv_1\\
&=0
\end{align*}
since either $\bfu_0 \neq \bfu_1$ or $\bfv_0 \neq \bfv_1$. Moreover, there are $\frac{\Delta_A\Delta_B}{\beta_A\beta_B}$ distinct columns in $\calN_{AB}$ each with a support of size $\beta_A \beta_B$. This implies that together all the product meta-symbols in $\calN_{AB}$ cover all the $\Delta_A\Delta_B$ points.
\end{proof}

As in the matrix-vector case, the scheme operates by placing cyclically shifted meta-symbols from $\calP^A$ with $\ell_A$ meta-symbols in each worker for the first $\Delta_A/\beta_A$ workers. For these workers, the assignment of meta-symbols from $\calP^B$ is the same. For the next set of $\Delta_A/\beta_A$ workers the assignment of meta-symbols from $\calP^A$ repeats; however, we now employ a cyclic shift for the assignment of meta-symbols from $\calP^B$. The complete algorithm is specified in Alg. \ref{Alg:betalevelcoding_matmat} and an example is depicted in Fig. \ref{beta_matmat}. As before, a group in this setting contains $\Delta/\beta$ workers and there a total of $\frac{n}{\Delta/\beta} = c$ groups denoted $\calG_i, i = 0, 1, \dots, c-1$. Let $\calX_{AB} = \{\bfA_0^T\bfB_0, \bfA_0^T\bfB_1, \bfA_0^T\bfB_2, \dots, \bfA_{\Delta_A - 1}^T \bfB_{\Delta_B - 1} \}$ denote the set of unknowns. The product of two assigned coded block-columns consists of a random linear combination of $\beta = \beta_A \beta_B$ unknowns from $\calX_{AB}$.

\begin{example}
We consider an example with $n = 36$ workers in Fig. \ref{beta_matmat}, each of which can store $\gamma_A = \gamma_B = \frac{1}{3}$ of each of matrices $\bfA$ and $\bfB$, and $\beta_A = \beta_B = 2$. We set $\Delta_A = \Delta_B = 6$, thus the cardinality of $\calX_{AB}$ is $36$. In terms of indices, we use the same parallel class, $\{\{0,1\},\{2,3\},\{4,5\}\}$ for both $\bfA$ and $\bfB$. Finally we use random vectors of length $\beta_A = \beta_B = 2$ to obtain the symbols from the submatrices of the elements of the parallel classes, $\calP^A_i$ and $\calP^B_i$ in any worker group $\calG_i$, for $i = 0, 1, 2, 3$, as $c = 36/9 = 4$.
\begin{figure*}[t]
\centering
\captionsetup{justification=centering}
\resizebox{0.95\linewidth}{!}{
\begin{tikzpicture}[auto, thick, node distance=2cm, >=triangle 45]
\draw
    node [sum1, minimum size = 1 cm, fill=green!30] (blk1) {$W_0$}
    node [sum1, minimum size = 1 cm, fill=blue!30,right = 0.4 cm of blk1] (blk2) {$W_1$}
    node [sum1, minimum size = 1 cm, fill=green!30,right = 0.4 cm of blk2] (blk3) {$W_2$}
    node [sum1, minimum size = 1 cm, fill=blue!30,right = 0.4 cm of blk3] (blk4) {$W_3$}
    node [sum1, minimum size = 1 cm, fill=green!30,right = 0.4 cm of blk4] (blk5) {$W_4$}
	node [sum1, minimum size = 1 cm, fill=blue!30,right = 0.4 cm of blk5] (blk6) {$W_5$}
	node [sum1, minimum size = 1 cm, fill=green!30,right = 0.4 cm of blk6] (blk7) {$W_6$}
	node [sum1, minimum size = 1 cm, fill=blue!30,right = 0.4 cm of blk7] (blk8) {$W_7$}
	node [sum1, minimum size = 1 cm, fill=green!30,right = 0.4 cm of blk8] (blk9) {$W_8$}

    node [block, fill=green!30,below = 0.4 cm of blk1] (blk011) {$\{0, 1\}$}
    node [block, fill=green!30,below = 0.0005 cm of blk011] (blk012) {$\{2, 3\}$}
    node [block, fill=green!30,below = 0.4 cm of blk012] (blk013) {$\{0, 1\}$}
    node [block, fill=green!30,below = 0.0005 cm of blk013] (blk014) {$\{2, 3\}$}
    
    node [block, fill=blue!30,below = 0.4 cm of blk2] (blk21) {$\{2, 3\}$}
    node [block, fill=blue!30,below = 0.0005 cm of blk21] (blk22) {$\{4, 5\}$}
     node [block, fill=blue!30,below = 0.4 cm of blk22] (blk23) {$\{0, 1\}$}
    node [block, fill=blue!30,below = 0.0005 cm of blk23] (blk24) {$\{2, 3\}$}

    node [block, fill=green!30,below = 0.4 cm of blk3] (blk31) {$\{4,5\}$}
    node [block, fill=green!30,below = 0.0005 cm of blk31] (blk32) {$\{0, 1\}$}
    node [block, fill=green!30,below = 0.4 cm of blk32] (blk33) {$\{0, 1\}$}
    node [block, fill=green!30,below = 0.0005 cm of blk33] (blk34) {$\{2, 3\}$}
    
    node [block, fill=blue!30,below = 0.4 cm of blk4] (blk41) {$\{0,1\}$}
    node [block, fill=blue!30,below = 0.0005 cm of blk41] (blk42) {$\{2,3\}$}
    node [block, fill=blue!30,below = 0.4 cm of blk42] (blk43) {$\{2, 3\}$}
    node [block, fill=blue!30,below = 0.0005 cm of blk43] (blk44) {$\{4, 5\}$}
    
    node [block, fill=green!30,below = 0.4 cm of blk5] (blk51) {$\{2,3\}$}
    node [block, fill=green!30,below = 0.0005 cm of blk51] (blk52) {$\{4,5\}$}
    node [block, fill=green!30,below = 0.4 cm of blk52] (blk53) {$\{2, 3\}$}
    node [block, fill=green!30,below = 0.0005 cm of blk53] (blk54) {$\{4, 5\}$}
    
    node [block, fill=blue!30,below = 0.4 cm of blk6] (blk61) {$\{4,5\}$}
    node [block, fill=blue!30,below = 0.0005 cm of blk61] (blk62) {$\{0,1\}$}
    node [block, fill=blue!30,below = 0.4 cm of blk62] (blk63) {$\{2, 3\}$}
    node [block, fill=blue!30,below = 0.0005 cm of blk63] (blk64) {$\{4, 5\}$}
    
    node [block, fill=green!30,below = 0.4 cm of blk7] (blk71) {$\{0,1\}$}
    node [block, fill=green!30,below = 0.0005 cm of blk71] (blk72) {$\{2,3\}$}
    node [block, fill=green!30,below = 0.4 cm of blk72] (blk73) {$\{4, 5\}$}
    node [block, fill=green!30,below = 0.0005 cm of blk73] (blk74) {$\{0, 1\}$}
    
    node [block, fill=blue!30,below = 0.4 cm of blk8] (blk81) {$\{2,3\}$}
    node [block, fill=blue!30,below = 0.0005 cm of blk81] (blk82) {$\{4,5\}$}
      node [block, fill=blue!30,below = 0.4 cm of blk82] (blk83) {$\{4, 5\}$}
    node [block, fill=blue!30,below = 0.0005 cm of blk83] (blk84) {$\{0, 1\}$}  
    
    node [block, fill=green!30,below = 0.4 cm of blk9] (blk91) {$\{4,5\}$}
    node [block, fill=green!30,below = 0.0005 cm of blk91] (blk92) {$\{0,1\}$}
    node [block, fill=green!30,below = 0.4 cm of blk92] (blk93) {$\{4, 5\}$}
    node [block, fill=green!30,below = 0.0005 cm of blk93] (blk94) {$\{0, 1\}$}

    ;

\draw[->](blk1) -- node{} (blk011);
\draw[->](blk2) -- node{} (blk21);
\draw[->](blk3) -- node{} (blk31);
\draw[->](blk4) -- node{} (blk41);
\draw[->](blk5) -- node{} (blk51);
\draw[->](blk6) -- node{} (blk61);
\draw[->](blk7) -- node{} (blk71);
\draw[->](blk8) -- node{} (blk81);
\draw[->](blk9) -- node{} (blk91);

\end{tikzpicture}
}
\centering
\caption{\small Job assignment for worker group $\calG_0$ for $\beta$-level matrix-matrix multiplication scheme with $n = 36$ with $\gamma_{A} = \gamma_B = \frac{1}{3}$ and $\Delta_A = \Delta_B = 6$ using a single parallel class with $\beta_A = \beta_B = 2$. The indices $\{i,j\}$ on top and bottom parts indicate random linear combinations of the submatrices of $\bfA$ and $\bfB$, respectively. $\calG_1$, $\calG_2$ and $\calG_3$ are assigned the same symbols as workers $W_0 - W_8$, but with different random coefficients.}

\label{beta_matmat}
\end{figure*}
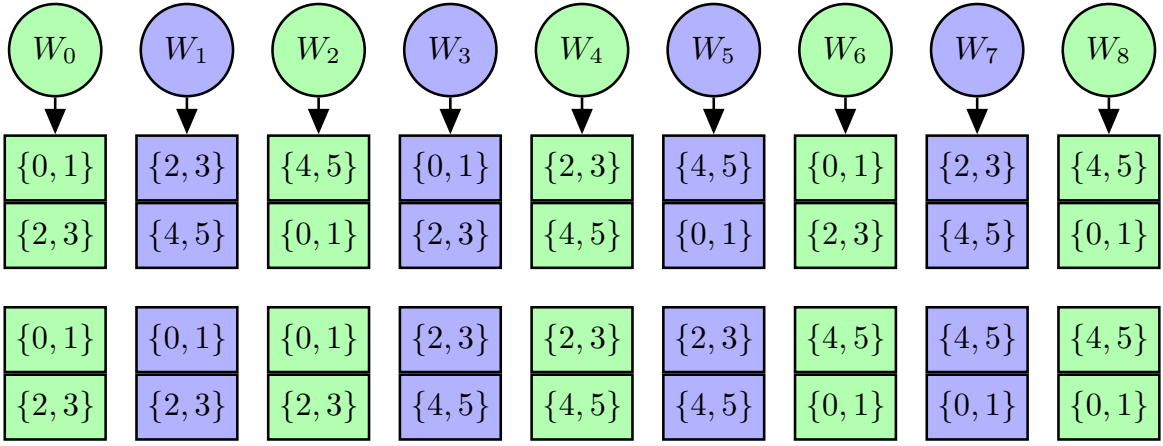
\end{example}

\begin{lemma}
\label{lem:beta_index}
The matrix-matrix multiplication scheme in Alg. \ref{Alg:betalevelcoding_matmat} is such that there are $\ell = \ell_A \ell_B$ symbols corresponding to any product meta-symbol $\in \calP_i^{AB}$ in a group $\calG_i$. Furthermore, this product meta-symbol appears in all locations $0, 1, 2, \dots, \ell - 1$ within $\calG_i$.
\end{lemma}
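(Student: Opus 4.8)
The plan is to exploit the product (Cartesian) structure of the in-group assignment and reduce everything to the single-class cyclic statement of Lemma~\ref{lem:cyclicQ}. First I would set up coordinates for the workers of $\calG_i$. Since $\Delta/\beta = (\Delta_A/\beta_A)(\Delta_B/\beta_B) = a_2 b_2$, each worker of $\calG_i$ is indexed by $j \in \{0,\dots,a_2 b_2-1\}$; writing $j = k a_2 + j'$ with $j' = j \bmod a_2 \in \{0,\dots,a_2-1\}$ and $k = \lfloor j/a_2\rfloor \in \{0,\dots,b_2-1\}$, Alg.~\ref{Alg:betalevelcoding_matmat} assigns the $\ell_A$ consecutive $\calP^A_i$-blocks starting at $p_{A_{j'}}$ (top to bottom) and the $\ell_B$ consecutive $\calP^B_i$-blocks starting at $p_{B_k}$. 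As $j$ ranges over all workers, the pair $(j',k)$ ranges over the full product $\{0,\dots,a_2-1\}\times\{0,\dots,b_2-1\}$ exactly once, so the $\bfA$- and $\bfB$-offsets vary independently.

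Fix a product meta-symbol of $\calP_i^{AB}$, which by the preceding Claim corresponds to a pair $(p_{A_m},p_{B_n})$ with $m\in\{0,\dots,a_2-1\}$ and $n\in\{0,\dots,b_2-1\}$. It is present in worker $j$ precisely when $p_{A_m}$ lies in the $\bfA$-window and $p_{B_n}$ lies in the $\bfB$-window, i.e. when $m\in\{j',\dots,j'+\ell_A-1\}\pmod{a_2}$ and $n\in\{k,\dots,k+\ell_B-1\}\pmod{b_2}$. Viewing the $\bfA$-assignment as a standalone cyclic assignment of $a_2$ symbols with window length $\ell_A$, the first part of Lemma~\ref{lem:cyclicQ} says $p_{A_m}$ occurs for exactly $\ell_A$ offsets $j'$; likewise $p_{B_n}$ occurs for exactly $\ell_B$ offsets $k$. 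Because $(j',k)$ runs over the full product, the number of workers of $\calG_i$ containing the product meta-symbol is $\ell_A\ell_B=\ell$, which is the first claim.

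For the positions I would invoke the processing order of Section~\ref{sec:prel}: if the $\bfA$-block sits at position $a\in\{0,\dots,\ell_A-1\}$ in the $\bfA$-list and the $\bfB$-block at position $b\in\{0,\dots,\ell_B-1\}$ in the $\bfB$-list, then the product occupies position $a\ell_B+b$ in that worker, where $a=(m-j')\bmod a_2$ and $b=(n-k)\bmod b_2$. By the second part of Lemma~\ref{lem:cyclicQ} applied separately to the two assignments, as $j'$ sweeps its $\ell_A$ admissible values the offset $a$ takes each value in $\{0,\dots,\ell_A-1\}$ exactly once, and $b$ takes each value in $\{0,\dots,\ell_B-1\}$ exactly once. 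Since $(j',k)$ ranges over the full product, $(a,b)$ ranges over $\{0,\dots,\ell_A-1\}\times\{0,\dots,\ell_B-1\}$ with each pair hit exactly once, so $a\ell_B+b$ realizes every value of $\{0,\dots,\ell-1\}$ exactly once.

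The only real care needed — and the main (mild) obstacle — is bookkeeping: checking that the floor/modulo decomposition $j=k a_2+j'$ genuinely decouples the two cyclic shifts, and that the digit map $(a,b)\mapsto a\ell_B+b$ is a bijection onto $\{0,\dots,\ell-1\}$, so that the two independent single-class applications of Lemma~\ref{lem:cyclicQ} compose correctly. Everything else is a direct appeal to that lemma.
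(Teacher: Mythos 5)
Your proof is correct and follows essentially the same route as the paper: the paper partitions $\calG_i$ into $\Delta_B/\beta_B$ subgroups of $\Delta_A/\beta_A$ workers (your $(j',k)$ coordinates), observes that the $\bfA$-meta-symbol occupies each location $0,\dots,\ell_A-1$ once per subgroup and the $\bfB$-meta-symbol each location $0,\dots,\ell_B-1$ once across subgroups, and composes them via the same position formula $i_1\ell_B + j_1$. The only cosmetic difference is that you explicitly invoke Lemma~\ref{lem:cyclicQ} for each factor, whereas the paper restates the cyclic-assignment facts directly.
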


\begin{proof}
Any group $\calG_i$ can be partitioned into $\alpha_B = \frac{\Delta_B}{\beta_B}$ disjoint subgroups each of which consists of $\alpha_A = \frac{\Delta_A}{\beta_A}$ workers. These subgroups are denoted as $\calH_j$ where in terms of group worker indices, $\calH_j = \{j \alpha_A , j \alpha_A + 1, \dots, (j+1)\alpha_A -1\}$, for $j = 0, 1, \dots, \alpha_B - 1$. 

If meta-symbols $x \in \calP_i^A$ and $y \in \calP_i^B$ appear at locations $i_1$ and $j_1$, respectively, $0 \leq i_1 \leq \ell_A - 1$ and $0 \leq j_1 \leq \ell_B - 1$, then the product meta-symbol $x \otimes y$ appears at location $i_1 \ell_B + j_1$ in the ordering. In our case, meta-symbol $x$ appears $\ell_A$ times within subgroup $\calH_j$ at distinct locations $0, \dots, \ell_A -1$. Thus, if meta-symbol $y \in \calP_i^B$ appears in $\calH_j$ at location $j_1$ then the product meta-symbol $x \otimes y$ appears $\ell_A$ times at locations $j_1, \ell_B + j_1, 2 \ell_B + j_1, \dots, (\ell_A - 1)\ell_B + j_1$. 
The result follows by realizing that there are $\ell_B$ subgroups where meta-symbol $y$ appears. Moreover, $y \in \calP_i^B$ appears at all locations $0, \dots, \ell_B -1$ across these subgroups.
\end{proof}

\begin{theorem}
\label{beta_matmatstrQ}
If we use a single parallel class $\calP_A$ for $\bfA$ and a single parallel class $\calP_B$ for $\bfB$ across all the worker groups, then the scheme described in Alg. \ref{Alg:betalevelcoding_matmat} will be resilient to $s = c \ell - \beta$ stragglers and will have, $Q = n \ell - \frac{c \ell (\ell + 1)}{2} + \ell (\beta - 1) + 1$, where $\ell = \ell_A \ell_B$ and $\beta = \beta_A \beta_B \leq c$.
\end{theorem}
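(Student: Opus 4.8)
The plan is to reduce the analysis to the matrix-vector case of Theorem~\ref{thm:beta_matvecstrQ} by working with the product meta-symbols $\calP^{AB}$ in place of the single meta-symbols used there. Two facts make this reduction possible. First, by the Claim, $\calP^{AB}$ is itself a parallel class of block size $\beta = \beta_A\beta_B$ on the $\Delta = \Delta_A\Delta_B$ unknowns of $\calX_{AB}$; since a single $\calP^A$ and a single $\calP^B$ are used across all worker groups, every product computed by any worker is a random linear combination supported on exactly one product meta-symbol, and distinct product meta-symbols have disjoint supports. Hence the global linear system that the master must solve is block-diagonal: it decouples into $\Delta/\beta$ independent subsystems, one per product meta-symbol, each in $\beta$ unknowns. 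Second, Lemma~\ref{lem:beta_index} shows that within each of the $c$ groups a given product meta-symbol appears exactly $\ell = \ell_A\ell_B$ times, occupying the locations $0,1,\dots,\ell-1$ once each. This is precisely the cyclic location structure that Lemma~\ref{lem:cyclicQ} and the proof of Theorem~\ref{thm:beta_matvecstrQ} exploit, so both the straggler count and the $Q$ count carry over once decodability is justified.

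For the resilience claim, I would first observe that each product meta-symbol appears $c\ell$ times across the $n$ workers and at most once per worker (within a worker the $\ell_A\ell_B$ products come from distinct factor pairs, and a product meta-symbol's support determines its factors). To decode its $\beta$ unknowns it suffices to collect $\beta$ copies whose coefficient vectors are linearly independent. The key technical point, which distinguishes this from the matrix-vector setting, is that the coefficient vector of each copy is not a generic length-$\beta$ vector but a Kronecker product $\mathbf{x}\otimes\mathbf{y}$ as in~\eqref{eq:kron_prod_matmat}, with $\mathbf{x}\in\mathbb{R}^{\beta_A}$ and $\mathbf{y}\in\mathbb{R}^{\beta_B}$ drawn independently across distinct workers. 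I would argue that any $\beta$ such copies are independent with probability $1$: the determinant of the $\beta\times\beta$ coefficient matrix is a polynomial in the entries of the underlying $\mathbf{x}$'s and $\mathbf{y}$'s, and it is not identically zero, because one may choose those vectors to be standard basis vectors so that the rows $\mathbf{x}\otimes\mathbf{y}$ realize the standard basis $\{\mathbf{e}_a\otimes\mathbf{e}_b\}$ of $\mathbb{R}^\beta$, making the matrix a permutation matrix. A nonzero polynomial vanishes only on a measure-zero set, so a union bound over the finitely many $\beta$-subsets of copies and over the $\Delta/\beta$ product meta-symbols gives almost-sure decodability. Since each straggler deletes at most one copy of any fixed product meta-symbol, removing any $s = c\ell - \beta$ workers leaves at least $\beta$ copies of every product meta-symbol, which suffices; hence the scheme is resilient to $s = c\ell-\beta$ stragglers.

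For the value of $Q$, I would transcribe the worst-case argument of Theorem~\ref{thm:beta_matvecstrQ}, using product meta-symbols and applying Lemma~\ref{lem:cyclicQ} with the per-group symbol count set to $\Delta/\beta = a_2 b_2$. The worst case is that some product meta-symbol $\star$ is processed at most $\beta-1$ times; writing $\alpha_0 = \frac{\Delta}{\beta}\ell - \frac{\ell(\ell+1)}{2}$ and $\alpha_i = \alpha_0 + i\ell - \frac{i(i-1)}{2}$, the number of processed symbols is at most $c\alpha_0 + \ell(\beta-1)$, the maximum being attained when $\star$ appears once in each of $\beta-1$ distinct groups (feasible precisely because $\beta\le c$). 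Using $c\cdot\Delta/\beta = n$ (which holds since $\Delta/\beta = a_2 b_2$ and $n = c a_2 b_2$) this simplifies to $n\ell - \frac{c\ell(\ell+1)}{2} + \ell(\beta-1)$; adding one yields $Q = n\ell - \frac{c\ell(\ell+1)}{2} + \ell(\beta-1) + 1$, guaranteeing that beyond this threshold every product meta-symbol, and hence every unknown, is recoverable.

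I expect the main obstacle to be the decodability step rather than the counting: establishing that $\beta$ Kronecker-structured coefficient vectors are almost surely independent is the one place where the matrix-matrix problem genuinely differs from the matrix-vector problem, since the coefficients are constrained to the variety of rank-one tensors rather than being uniformly generic. Everything else is a faithful translation of Theorem~\ref{thm:beta_matvecstrQ} through Lemmas~\ref{lem:beta_index} and~\ref{lem:cyclicQ}.
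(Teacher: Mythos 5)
Your proposal is correct and follows essentially the same route as the paper, which proves this result by reducing to Theorem~\ref{thm:beta_matvecstrQ} via the location structure of product meta-symbols established in Lemma~\ref{lem:beta_index}. The one place you go beyond the paper's (very terse) proof is in explicitly justifying decodability when the coefficient vectors are constrained to Kronecker products as in~\eqref{eq:kron_prod_matmat} rather than being fully generic; your argument that the relevant $\beta\times\beta$ determinant is a not-identically-zero polynomial (witnessed by choosing standard basis vectors) is a valid and worthwhile filling of a detail the paper leaves implicit.
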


\begin{proof}
The proof is very similar to the proof of Theorem \ref{thm:beta_matvecstrQ} once we use the fact that each product meta-symbol appears in all locations $0, \dots, \ell-1$ within the group in which it appears ({\it cf.} Lemma \ref{lem:beta_index}).
\end{proof}

It can be verified that the distributed scheme shown in Fig. \ref{beta_matmat} is resilient to $s = c \ell - \beta = 4 \times 4 - 4 = 12$ stragglers and has  $Q = 117$. Theorem \ref{beta_matmatstrQ} provides the value for $s$ and $Q$ for distributed matrix-matrix multiplication when $\beta \leq c$. In Appendix \ref{app:diffc}, we explicitly calculate the values for $s$ and $Q$ for the case when $\beta > c$. 
\begin{remark}
Similar to the matrix-vector case, the uncoded matrix-matrix multiplication scheme can also be thought as a special case of $\beta$-level coding scheme with $\beta = 1$. The lower bound given in \eqref{eq:Qlb} is matched by the proposed  scheme here with $\beta_A = \beta_B = 1$ (i.e., the uncoded scheme). An example appears in Fig. \ref{uncoded_matmat_prop} where we have $n = 12$ workers and the master node can recover the final product as soon as it receives $Q = 52$ symbols across all the workers.
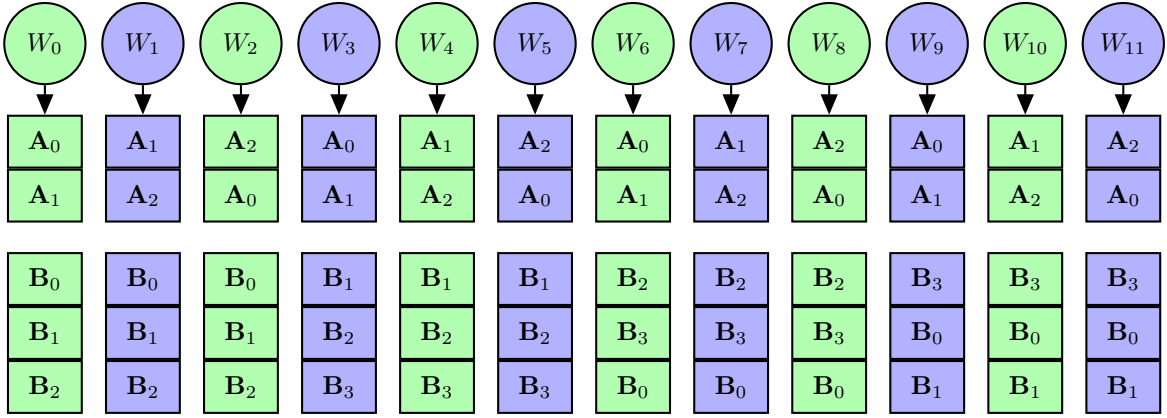
\begin{figure*}[t]
\centering
\captionsetup{justification=centering}
\resizebox{0.95\linewidth}{!}{
\begin{tikzpicture}[auto, thick, node distance=2cm, >=triangle 45]
\draw
    node [sum1, fill=green!30] (blk1) {$W_0$}
    node [sum1, fill=blue!30,right = 0.2 cm of blk1] (blk2) {$W_1$}
    node [sum1, fill=green!30,right = 0.2 cm of blk2] (blk3) {$W_2$}
    node [sum1, fill=blue!30,right = 0.2 cm of blk3] (blk4) {$W_3$}
    node [sum1, fill=green!30,right = 0.2 cm of blk4] (blk5) {$W_4$}
	node [sum1, fill=blue!30,right = 0.2 cm of blk5] (blk6) {$W_5$}
	node [sum1, fill=green!30,right = 0.2 cm of blk6] (blk7) {$W_6$}
	node [sum1, fill=blue!30,right = 0.2 cm of blk7] (blk8) {$W_7$}
	node [sum1, fill=green!30,right = 0.2 cm of blk8] (blk9) {$W_8$}
	node [sum1, fill=blue!30,right = 0.2 cm of blk9] (blk10) {$W_9$}
	node [sum1, fill=green!30,right = 0.2 cm of blk10] (blk11) {$W_{10}$}
	node [sum1, fill=blue!30,right = 0.2 cm of blk11] (blk12) {$W_{11}$}

    node [block, minimum width = 1 cm, fill=green!30,below = 0.4 cm of blk1] (blk011) {$\bfA_0$}
    node [block, minimum width = 1 cm,fill=green!30,below = 0.0005 cm of blk011] (blk012) {$\bfA_1$}
    
    node [block, minimum width = 1 cm,fill=blue!30,below = 0.4 cm of blk2] (blk21) {$\bfA_1$}
    node [block, minimum width = 1 cm,fill=blue!30,below = 0.0005 cm of blk21] (blk22) {$\bfA_2$}
    
    node [block, minimum width = 1 cm,fill=green!30,below = 0.4 cm of blk3] (blk31) {$\bfA_2$}
    node [block, minimum width = 1 cm,fill=green!30,below = 0.0005 cm of blk31] (blk32) {$\bfA_0$}
    
    node [block, minimum width = 1 cm,fill=blue!30,below = 0.4 cm of blk4] (blk41) {$\bfA_0$}
    node [block, minimum width = 1 cm,fill=blue!30,below = 0.0005 cm of blk41] (blk42) {$\bfA_1$}
    
    node [block, minimum width = 1 cm,fill=green!30,below = 0.4 cm of blk5] (blk51) {$\bfA_1$}
    node [block, minimum width = 1 cm,fill=green!30,below = 0.0005 cm of blk51] (blk52) {$\bfA_2$}
    
    node [block, minimum width = 1 cm,fill=blue!30,below = 0.4 cm of blk6] (blk61) {$\bfA_2$}
    node [block, minimum width = 1 cm,fill=blue!30,below = 0.0005 cm of blk61] (blk62) {$\bfA_0$}
    
    node [block, minimum width = 1 cm,fill=green!30,below = 0.4 cm of blk7] (blk71) {$\bfA_0$}
    node [block, minimum width = 1 cm,fill=green!30,below = 0.0005 cm of blk71] (blk72) {$\bfA_1$}
    
    node [block, minimum width = 1 cm,fill=blue!30,below = 0.4 cm of blk8] (blk81) {$\bfA_1$}
    node [block, minimum width = 1 cm,fill=blue!30,below = 0.0005 cm of blk81] (blk82) {$\bfA_2$}
    
    node [block, minimum width = 1 cm,fill=green!30,below = 0.4 cm of blk9] (blk91) {$\bfA_2$}
    node [block, minimum width = 1 cm,fill=green!30,below = 0.0005 cm of blk91] (blk92) {$\bfA_0$}
    
    node [block, minimum width = 1 cm,fill=blue!30,below = 0.4 cm of blk10] (blk101) {$\bfA_0$}
    node [block, minimum width = 1 cm,fill=blue!30,below = 0.0005 cm of blk101] (blk102) {$\bfA_1$}
    
    node [block, minimum width = 1 cm,fill=green!30,below = 0.4 cm of blk11] (blk111) {$\bfA_1$}
    node [block, minimum width = 1 cm,fill=green!30,below = 0.0005 cm of blk111] (blk112) {$\bfA_2$}
    
    node [block, minimum width = 1 cm,fill=blue!30,below = 0.4 cm of blk12] (blk121) {$\bfA_2$}
    node [block, minimum width = 1 cm,fill=blue!30,below = 0.0005 cm of blk121] (blk122) {$\bfA_0$}
    
    node [block, minimum width = 1 cm,fill=green!30,below = 0.4 cm of blk012] (blk0111) {$\bfB_0$}
    node [block, minimum width = 1 cm,fill=green!30,below = 0.0005 cm of blk0111] (blk0112) {$\bfB_1$}
    node [block, minimum width = 1 cm,fill=green!30,below = 0.0005 cm of blk0112] (blk0113) {$\bfB_2$}
       
    node [block, minimum width = 1 cm,fill=blue!30,below = 0.4 cm of blk22] (blk211) {$\bfB_0$}
    node [block, minimum width = 1 cm,fill=blue!30,below = 0.0005 cm of blk211] (blk212) {$\bfB_1$}
    node [block, minimum width = 1 cm,fill=blue!30,below = 0.0005 cm of blk212] (blk0213) {$\bfB_2$}
    
    node [block, minimum width = 1 cm,fill=green!30,below = 0.4 cm of blk32] (blk0111) {$\bfB_0$}
    node [block, minimum width = 1 cm,fill=green!30,below = 0.0005 cm of blk0111] (blk0112) {$\bfB_1$}
    node [block, minimum width = 1 cm,fill=green!30,below = 0.0005 cm of blk0112] (blk0113) {$\bfB_2$}
    
    node [block, minimum width = 1 cm,fill=blue!30,below = 0.4 cm of blk42] (blk211) {$\bfB_1$}
    node [block, minimum width = 1 cm,fill=blue!30,below = 0.0005 cm of blk211] (blk212) {$\bfB_2$}
    node [block, minimum width = 1 cm,fill=blue!30,below = 0.0005 cm of blk212] (blk0213) {$\bfB_3$}
    
    node [block, minimum width = 1 cm,fill=green!30,below = 0.4 cm of blk52] (blk0111) {$\bfB_1$}
    node [block, minimum width = 1 cm,fill=green!30,below = 0.0005 cm of blk0111] (blk0112) {$\bfB_2$}
    node [block, minimum width = 1 cm,fill=green!30,below = 0.0005 cm of blk0112] (blk0113) {$\bfB_3$}
    
    node [block, minimum width = 1 cm,fill=blue!30,below = 0.4 cm of blk62] (blk211) {$\bfB_1$}
    node [block, minimum width = 1 cm,fill=blue!30,below = 0.0005 cm of blk211] (blk212) {$\bfB_2$}
    node [block, minimum width = 1 cm,fill=blue!30,below = 0.0005 cm of blk212] (blk0213) {$\bfB_3$}
    
    node [block, minimum width = 1 cm,fill=green!30,below = 0.4 cm of blk72] (blk0111) {$\bfB_2$}
    node [block, minimum width = 1 cm,fill=green!30,below = 0.0005 cm of blk0111] (blk0112) {$\bfB_3$}
    node [block, minimum width = 1 cm,fill=green!30,below = 0.0005 cm of blk0112] (blk0113) {$\bfB_0$}
    
    node [block, minimum width = 1 cm,fill=blue!30,below = 0.4 cm of blk82] (blk211) {$\bfB_2$}
    node [block, minimum width = 1 cm,fill=blue!30,below = 0.0005 cm of blk211] (blk212) {$\bfB_3$}
    node [block, minimum width = 1 cm,fill=blue!30,below = 0.0005 cm of blk212] (blk0213) {$\bfB_0$}
    
    node [block, minimum width = 1 cm,fill=green!30,below = 0.4 cm of blk92] (blk0111) {$\bfB_2$}
    node [block, minimum width = 1 cm,fill=green!30,below = 0.0005 cm of blk0111] (blk0112) {$\bfB_3$}
    node [block, minimum width = 1 cm,fill=green!30,below = 0.0005 cm of blk0112] (blk0113) {$\bfB_0$}
    
    node [block, minimum width = 1 cm,fill=blue!30,below = 0.4 cm of blk102] (blk211) {$\bfB_3$}
    node [block, minimum width = 1 cm,fill=blue!30,below = 0.0005 cm of blk211] (blk212) {$\bfB_0$}
    node [block, minimum width = 1 cm,fill=blue!30,below = 0.0005 cm of blk212] (blk0213) {$\bfB_1$}
    
    node [block, minimum width = 1 cm,fill=green!30,below = 0.4 cm of blk112] (blk0111) {$\bfB_3$}
    node [block, minimum width = 1 cm,fill=green!30,below = 0.0005 cm of blk0111] (blk0112) {$\bfB_0$}
    node [block, minimum width = 1 cm,fill=green!30,below = 0.0005 cm of blk0112] (blk0113) {$\bfB_1$}
    
    node [block, minimum width = 1 cm,fill=blue!30,below = 0.4 cm of blk122] (blk211) {$\bfB_3$}
    node [block, minimum width = 1 cm,fill=blue!30,below = 0.0005 cm of blk211] (blk212) {$\bfB_0$}
    node [block, minimum width = 1 cm,fill=blue!30,below = 0.0005 cm of blk212] (blk0213) {$\bfB_1$}

    ;

\draw[->](blk1) -- node{} (blk011);
\draw[->](blk2) -- node{} (blk21);
\draw[->](blk3) -- node{} (blk31);
\draw[->](blk4) -- node{} (blk41);
\draw[->](blk5) -- node{} (blk51);
\draw[->](blk6) -- node{} (blk61);
\draw[->](blk7) -- node{} (blk71);
\draw[->](blk8) -- node{} (blk81);
\draw[->](blk9) -- node{} (blk91);
\draw[->](blk10) -- node{} (blk101);
\draw[->](blk11) -- node{} (blk111);
\draw[->](blk12) -- node{} (blk121);

\end{tikzpicture}
}
\centering
\caption{\small Uncoded matrix-matrix multiplication with $n = 12$ and $s = 5$ with $\gamma_A = \frac{2}{3}$ and $\gamma_B = \frac{3}{4}$ where $\Delta_A = 3$ and $\Delta_B = 4$.}

\label{uncoded_matmat_prop}
\end{figure*} 
\end{remark}

In the matrix-matrix case for $\beta_A =2, \beta_B = 1$ we can show that using different parallel classes can improve the straggler resilience of the system. The corresponding $Q$ analysis is harder to do and is part of future work.
\begin{theorem}
\label{thm:beta2_matmatstrQ}
Let $\ell_A \leq \frac{\Delta_A}{2}-2$ and $\Delta_A \geq 8$. If we use the parallel classes in \eqref{paracls} for encoding $\bfA$, then the matrix-matrix multiplication scheme described in Alg. \ref{Alg:betalevelcoding_matmat} will be resilient to $s = 2\ell  - 1$ stragglers, when $\beta_A = 2$ and $\beta_B = 1$ such that $c = \beta = 2$.
\end{theorem}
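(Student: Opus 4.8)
Since $\beta_B = 1$ the block-columns of $\bfB$ are uncoded, so my first move is to argue that the decoding \emph{decouples across the $\bfB$-index}. For each $j \in \{0,\dots,\Delta_B-1\}$ I define the slice of unknowns $\calU_j = \{\bfA_0^T\bfB_j,\dots,\bfA_{\Delta_A-1}^T\bfB_j\}$. Every product computed by a worker has the form $\tilde{\bfA}_i^T\bfB_{j'}$ and contributes only to slice $j'$; moreover, because $\tilde{\bfA}_i$ is a random combination of the two block-columns of a meta-symbol of $\calP_0$ or $\calP_1$, that product is a random linear combination of exactly two unknowns of $\calU_{j'}$. Thus the $\Delta_A\Delta_B$ unknowns split into $\Delta_B$ independent systems, and it suffices to show that each slice is decodable after removing \emph{any} $s = 2\ell - 1$ full stragglers. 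For a fixed slice I invoke Lemma \ref{lem:allbeta} with the role of $\beta$ played by $\beta_A = 2$: the slice is recoverable provided that, among the surviving workers carrying $\bfB_j$, every unknown of $\calU_j$ participates in at least one equation and at least $\Delta_A - 1$ of them participate in at least two.

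Next I count multiplicities. Applying Lemma \ref{lem:beta_index} to the $\bfA$-part, the workers of group $\calG_0$ that carry $\bfB_j$ form $\ell_B$ cyclic copies of the $\calP_0$-assignment, so each unknown $\bfA_m^T\bfB_j$ occurs exactly $\ell = \ell_A\ell_B$ times among them, and group $\calG_1$ (using $\calP_1$) contributes another $\ell$ occurrences, for a total of $2\ell$. Within a single worker a point $m$ lies in exactly one meta-symbol, so each removed worker deletes at most one occurrence of a given unknown; hence deleting $2\ell - 1$ workers cannot drive any unknown to degree $0$ and the first Hall condition holds automatically. For the second condition I show that no two distinct unknowns $\bfA_{m_1}^T\bfB_j,\bfA_{m_2}^T\bfB_j$ can be simultaneously reduced to degree $\leq 1$. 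Letting $t_0,t_1$ denote the number of surviving group-$0$ and group-$1$ slice workers carrying \emph{both} $m_1$ and $m_2$, a short LP-type argument (minimize $a+b+c$ subject to $a+c\geq 2\ell-1$, $b+c\geq 2\ell-1$, $c\leq t_0+t_1$) gives that the minimum number of removals making both unknowns degree $\leq 1$ equals $4\ell - 2 - (t_0+t_1)$. So it is enough to establish the pairwise bound $t_0 + t_1 \leq 2\ell - 2$, which forces this minimum to be at least $2\ell > s$.

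The crux, and the step I expect to be hardest, is the bound $t_0 + t_1 \leq 2\ell - 2$, and this is precisely where the explicit classes \eqref{paracls} and the hypotheses $\Delta_A \geq 8$, $\ell_A \leq \Delta_A/2 - 2$ are used. Inside one cyclic copy, two points whose $\calP_g$-meta-symbols sit at cyclic distance $\delta$ co-occur in a length-$\ell_A$ window exactly $\max(\ell_A-\delta,0)+\max(\ell_A-(\Delta_A/2-\delta),0)$ times; the hypothesis $\ell_A \leq \Delta_A/2 - 2$ kills the wrap-around term for every $\delta \geq 1$, leaving $t_g = \ell_B\,\max(\ell_A-\delta,0)$. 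The construction \eqref{paracls} has the structural property that if $\{m_1,m_2\}$ is a meta-symbol of one class then in the other class the two points lie in \emph{distinct} meta-symbols at cyclic distance exactly $2$ (and $\Delta_A \geq 8$ guarantees $\calP_0 \neq \calP_1$ and that this distance does not collapse). Checking the three cases — $\{m_1,m_2\}$ a pair in $\calP_0$, a pair in $\calP_1$, or a pair in neither — then yields in each case $t_0 + t_1 \leq \ell_B(2\ell_A - 2) = 2\ell - 2\ell_B \leq 2\ell - 2$. Combined with the previous paragraph, Lemma \ref{lem:allbeta} applies to every slice, so $\bfA^T\bfB$ is recoverable after any $2\ell - 1$ full stragglers, giving $s = 2\ell - 1$. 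This mirrors the matrix-vector argument of Theorem \ref{thm:beta2_matvecstrQ} (Appendix \ref{App:diff_beta_2}), the new ingredient being the factor-$\ell_B$ inflation of all multiplicities induced by the uncoded $\bfB$ blocks.
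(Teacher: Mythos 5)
Your proposal is correct and follows essentially the same route as the paper: the paper's proof simply decouples the unknowns into the slices $\calB_m$ (exactly your $\calU_j$) using the fact that $\bfB$ is uncoded and then declares the rest "analogous to the proof of Theorem~\ref{thm:beta2_matvecstrQ}," which is precisely the pairwise-intersection/Hall argument you spell out, with the multiplicities inflated by $\ell_B$ and the intersection bound $|X_{m_1}\cap X_{m_2}|\le 2\ell-2\ell_B\le 2\ell-2$ coming from the distance-$2$ property of the classes in \eqref{paracls}. Your LP-style phrasing of the final step is equivalent to the paper's union bound $|X_{m_1}\cup X_{m_2}|\ge 2\ell+2$, so no gap remains.
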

\begin{proof}
Consider the set $\calB_m = \{\bfA_0^T \bfB_m, \bfA_1^T \bfB_m, \dots, \bfA_{\Delta_A-1}^T \bfB_m \}$, i.e., the set of all unknowns corresponding to $\bfB_m$, for $m = 0, 1, \dots, \Delta_B - 1$, so $|\calB_m| = \Delta_A$. As $\bfB$ is uncoded, the equations consisting of the unknowns in $\calB_m$ are disjoint of the equations consisting of the unknowns of $\calB_p$, ($m \neq p$). Thus, we can form $\calG_{dec}^m$ using the unknowns corresponding to the set $\calB_m$ and analyze the decoding using it. The rest of the argument follows analogous to the proof of Theorem \ref{thm:beta2_matvecstrQ}.
%
\end{proof}

\subsection{Coded at bottom scheme}
\label{sec:coded_blocks_bottom}
Intuitively, the $\beta$-level coding schemes can be improved if we allow for the inclusion of some densely coded block-columns. We now consider a variant of the uncoded scheme where such densely coded block-columns are added at the end of uncoded computations. This improves both the straggler resilience and the $Q$ value of the scheme. 

We now assume that each node receives $\gamma = \gamma_{u} + \gamma_{c}$ fraction of the columns of $\bfA$ and the vector $\bfx$. Here $\gamma_{u}$ corresponds to the storage fraction of the uncoded parts of $\bfA$, whereas $\gamma_{c}$ corresponds to the coded portion. The coded blocks appear at the bottom of each node. Thus, under normal operating circumstances (no slow or failed nodes), the master node can simply decode the intended result from the uncoded computations. If some nodes are operating slower than normal, then the coded computations can be leveraged. 

As in the uncoded setup let  $\ell_u = \Delta \gamma_u$ be the number of uncoded block-columns and $r_u$ be the replication factor. Likewise $\ell_c = \Delta \gamma_c$ represents the number of coded blocks in each worker. In this construction we set $\Delta = n$ so that $r_u = \ell_u$. In this case, the results from Theorem \ref{thm:beta_matvecstrQ} immediately imply that $Q \geq \max (\Delta, \Delta r_u - \frac{r_u}{2} (\ell_u + 1) +1)$.  This follows by applying $\beta = 1$ to the uncoded part of the solution where $r_u = \ell_u$. A construction that meets these bounds is outlined in Algorithm \ref{Alg:Cyclic_Partially_Uncoded_matvec}. The algorithm uses a random matrix of dimension $n \ell_c \times \Delta$.

\begin{algorithm}[t]
	\caption{Cyclic coded at the bottom scheme for distributed matrix-vector multiplication}
	\label{Alg:Cyclic_Partially_Uncoded_matvec}
   \SetKwInOut{Input}{Input}
   \SetKwInOut{Output}{Output}
   \Input{Matrix $\bfA$ and vector $\bfx$, $n$-number of worker nodes, total storage capacity fraction $\gamma$, replication factor for uncoded portion $r_u$.}
   Set $\Delta = n$, $\ell_u = r_u$, $\ell = \gamma \Delta$, $\ell_c = \ell - \ell_u$\; 
   Partition $\bfA$ into $\Delta$ block-columns $\bfA_0, \bfA_1, \dots, \bfA_{\Delta-1}$\;
   \For{$i\gets 0$ \KwTo $n-1$}{
   Define $T = \left\lbrace i, i+1, \dots, i + \ell_u - 1 \right\rbrace$ (mod $\Delta$)\;
   Assign all $\bfA_{m}$'s sequentially from top to bottom to worker node $i$, where $m \in T$\;
   Assign $\ell_c$ different random linear combinations of $\bfA_m$'s for $m \notin T$\;
   
   }
   \Output{Cyclic coded at the bottom scheme for matrix-vector multiplication.}
\end{algorithm}

\begin{theorem}
\label{thm:Q_partially_coded_end}
The scheme in Alg.\ref{Alg:Cyclic_Partially_Uncoded_matvec} satisfies $Q = \max (\Delta, \Delta r_u - \frac{r_u}{2} (\ell_u + 1) +1)$. Furthermore, it is resilient to  $\floor[\bigg]{\frac{n^2 \gamma_c + n \gamma_u - 1}{n \gamma_c + 1}}$ stragglers.
\end{theorem}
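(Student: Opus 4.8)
The plan is to establish the two claims separately, in each case reducing decodability to a Hall-type condition on the bipartite graph between the $\Delta$ unknowns $\bfA_0^T\bfx,\dots,\bfA_{\Delta-1}^T\bfx$ and the processed symbols: an uncoded symbol is adjacent only to the single unknown it equals, while a dense coded symbol of worker $i$ is adjacent to every unknown $\bfA_m^T\bfx$ with $m\notin T_i$. Because the coded coefficients are drawn from a continuous distribution, the received linear system has rank $\Delta$ with probability $1$ precisely when this graph admits a matching saturating all $\Delta$ unknowns; by Hall's theorem this fails iff there is a set $W$ of unknowns whose neighbourhood $N(W)$ satisfies $|N(W)|\le|W|-1$.

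For the value of $Q$, the lower bound $Q\ge\max(\Delta,\Delta r_u-\tfrac{r_u}{2}(\ell_u+1)+1)$ is already recorded in the remark preceding Algorithm \ref{Alg:Cyclic_Partially_Uncoded_matvec} (the first term is the rank count; the second is Theorem \ref{thm:Q_by_Delta_bound} applied with $\beta=1$ to the uncoded layer, where $r_u=\ell_u$). I would prove the matching upper bound by showing that after processing $Q^\star=\max(\Delta,\Delta r_u-\tfrac{r_u}{2}(\ell_u+1)+1)$ symbols no violating set $W$ can exist. The governing case is the singleton $W=\{j\}$: to leave $j$ untouched one must avoid all $r_u$ uncoded copies of $\bfA_j$ and, by the top-to-bottom processing rule, every coded symbol as well, since a coded symbol of worker $i$ with $j\in T_i$ cannot be reached without first processing the uncoded $\bfA_j$, while a coded symbol with $j\notin T_i$ is itself adjacent to $j$. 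Hence the maximum number of symbols processable without touching $j$ equals $\alpha_0=\Delta\ell_u-\tfrac{\ell_u(\ell_u+1)}{2}$ from Lemma \ref{lem:cyclicQ} (with $\ell=\ell_u$), giving $Q\le\alpha_0+1$ when that term dominates, and the remaining rank requirement supplies the $\Delta$ term.

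The crux here is ruling out the intermediate sets $2\le|W|\le\ell_u$: unconstrained, a two-element $W$ admits strictly more processed symbols than the singleton once $\ell_u\ge3$, so the argument must exploit the processing order. Any coded symbol that avoids $W$ sits below a full uncoded prefix containing $|W|$ copies of $W$-unknowns, which already exceeds the budget $|W|-1$; such coded symbols are therefore unreachable, and the collectable uncoded-$\overline{W}$ symbols are throttled below $\alpha_0$. This order-based throttling is the main obstacle and the step demanding the most careful bookkeeping.

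For the straggler resilience I would run the same Hall analysis over the symbols of an arbitrary surviving set $S$. The key structural observation is that Hall can fail at $W$ only if no surviving worker has $W\subseteq T_i$, since any such worker already contributes $|W|$ uncoded copies and forces $|N(W)|\ge|W|$. Consequently any worker whose length-$\ell_u$ cyclic window contains $W$ must be a straggler, and every surviving worker (necessarily with $W\not\subseteq T_i$) contributes all $\ell_c$ of its dense coded symbols to $N(W)$; the adversary minimizes $|N(W)|$ by keeping only survivors whose windows are disjoint from $W$, each contributing exactly $\ell_c$. Taking $\overline{W}$ to be a consecutive block of size $g$, the number of such disjoint windows is at most $g-\ell_u+1$ while the admissible budget is $|W|-1=\Delta-g-1$; balancing $\ell_c(g-\ell_u+1)\le\Delta-g-1$ shows that the largest surviving set that can still defeat decoding has size $k_{\max}=\big\lfloor\tfrac{\Delta-\ell_u}{\ell_c+1}\big\rfloor$. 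Hence decoding succeeds whenever more than $k_{\max}$ workers survive, i.e.\ the scheme tolerates $s=n-1-k_{\max}$ stragglers; substituting $\Delta=n$, $\ell_u=n\gamma_u$, $\ell_c=n\gamma_c$ and using the identity $\lceil(m+1)/d\rceil=\lfloor m/d\rfloor+1$ simplifies this to $\big\lfloor\tfrac{n^2\gamma_c+n\gamma_u-1}{n\gamma_c+1}\big\rfloor$. The main obstacle in this part is the combinatorial optimization over the shape of $W$ on the cyclic window structure, together with the integrality (floor) bookkeeping.
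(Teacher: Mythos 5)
Your straggler-resilience argument is correct and, once unwound, is essentially the paper's own: the observation that at most $g-\ell_u+1$ windows of length $\ell_u$ fit inside a set of size $g$ is the paper's inductive claim that $k$ surviving cyclic windows cover at least $\min(\ell_u+k-1,\Delta)$ points, and your balancing inequality $k(\ell_c+1)\leq\Delta-\ell_u$ is the complement of the paper's condition $\ell_u+(k-1)+k(\ell-\ell_u)\geq\Delta$; the floor/ceiling bookkeeping agrees. The Hall packaging is legitimate in this half because each survivor contributes its entire window, so reducing to ``no survivor's window contains $W$'' is valid.

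The gap is in the $Q$ claim, and it is exactly the step you flag but do not carry out. A direct Hall analysis must exclude violating sets $W$ of every size; your singleton computation (yielding $\alpha_0+1$) only shows that every unknown has degree at least one, which is necessary but not sufficient, and Lemma \ref{lem:allbeta} cannot be invoked here because the coded symbols have degree $\Delta-\ell_u$ rather than a fixed small $\beta$. For $|W|=w\geq 2$ the adversary has a budget of $w-1$ neighbourhood vertices: each unit spent on an uncoded copy of a $W$-element at position $0$ of some worker unlocks $\ell_u-1$ further free symbols in that worker, and one must also handle configurations where several elements of $W$ share a window and where budget is spent on coded symbols. Proving that the total processed count never exceeds $\alpha_0$ therefore requires a genuine optimization over the cyclic placement of the $w$ unknowns and the budget allocation; asserting that it is ``throttled below $\alpha_0$'' is the theorem, not a proof of it. The paper sidesteps this entirely with an exchange argument: given a state vector with $Q$ processed symbols, some of them coded, it caps each coded worker at $\ell_u$ and redistributes the surplus as hypothetical uncoded symbols at other workers to obtain an all-uncoded pattern of the same total size $Q$; Theorem \ref{thm:beta_matvecstrQ} with $\beta=1$ shows that pattern is decodable, and a counting step shows the actually-processed coded equations are at least as numerous as the unknowns not covered by the actually-processed uncoded symbols, after which invertibility of a random square matrix finishes the proof. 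To complete your route you must supply the $|W|\geq 2$ optimization; otherwise the paper's reduction is the cleaner path.
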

\begin{proof}
The detailed proof is discussed in Appendix \ref{App:coded_bot}
\end{proof}

\begin{example}
\label{eg:coded_bottom_example}
Consider the setting where we have $n = 5$ workers with $\gamma = \frac{3}{5}$ where we set $\Delta = n = 5$. Fig. \ref{uncoded_matvec_prop} shows the job assignments according to the uncoded scheme ($\beta = 1$). According to Theorem \ref{thm:beta_matvecstrQ} in Section \ref{sec:beta_level}, the system is resilient to $\beta (n\gamma - 1) = 2$ stragglers and $Q = 5 \times 3 - \frac{3 \times 4}{2} + 1 = 10$ which can be verified from Fig. \ref{uncoded_matvec_prop}.

Now we assume that the whole storage fraction can be distributed into an uncoded storage fraction $\gamma_{u} = \frac{2}{5}$ and a coded storage fraction $\gamma_{c} = \frac{1}{5}$. Using the coded scheme, we get the job assignments shown in Fig. \ref{coded_bottom_prop}. This scheme is resilient to $\floor[\bigg]{\frac{n^2 \gamma_c + n \gamma_u - 1}{n \gamma_c + 1}} = 3$ stragglers and it can be verified from  that $\bfA^T \bfx$ can be computed once any $ Q = \Delta r_u - \frac{r_u}{2}( \ell_u + 1 ) + 1 = 8$ block-columns have been processed. Thus, we can conclude that introducing a single coded block in each worker (at the bottom), helps to improve both $Q$ and the straggler resilience of the system as compared to an uncoded system.
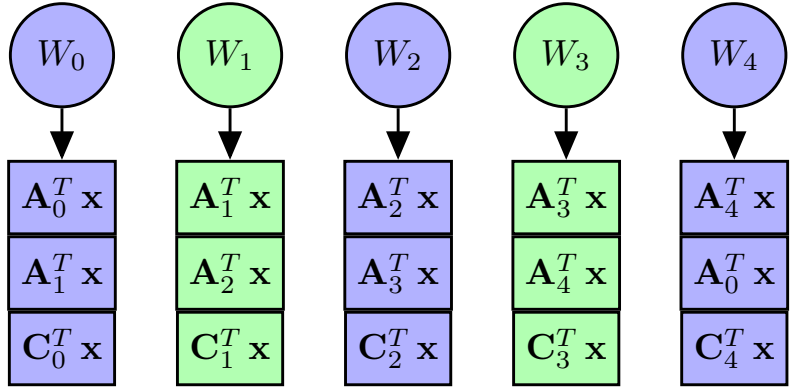
\begin{figure}[t]
\centering
\captionsetup{justification=centering}
\resizebox{0.65\linewidth}{!}{
\begin{tikzpicture}[auto, thick, node distance=2cm, >=triangle 45]

\draw
	node at (0,0)[right=-3mm]{}
	node [sum1, minimum size = 1cm, fill=blue!30] (blk1){$W_0$}
    node [sum1, minimum size = 1cm, fill=green!30,right = 0.6cm of blk1] (blk2) {$W_1$}
    node [sum1, minimum size = 1cm, fill=blue!30,right = 0.6cm of blk2] (blk3) {$W_2$}
    node [sum1, minimum size = 1cm, fill=green!30,right = 0.6cm of blk3] (blk4) {$W_3$}
    node [sum1, minimum size = 1cm, fill=blue!30,right = 0.6cm of blk4] (blk5) {$W_4$}
    
    node [block, fill=blue!30,below = 0.5 cm of blk1] (blk11) {$\bfA_{0}^T \, \bfx$}
    node [block, fill=blue!30,below = 0.0005 cm of blk11] (blk12) {$\bfA_{1}^T\, \bfx$}
    node [block, fill=blue!30,below = 0.0005 cm of blk12] (blk13) {$\bfC_{0}^T\, \bfx$}

    node [block, fill=green!30,below = 0.5 cm of blk2] (blk21) {$\bfA_{1}^T\, \bfx$}
    node [block, fill=green!30,below = 0.0005 cm of blk21] (blk22) {$\bfA_{2}^T\, \bfx$}
    node [block, fill=green!30,below = 0.0005 cm of blk22] (blk23) {$\bfC_{1}^T\, \bfx$}

    node [block, fill=blue!30,below = 0.5 cm of blk3] (blk31) {$\bfA_{2}^T\, \bfx$}
    node [block, fill=blue!30,below = 0.0005 cm of blk31] (blk32) {$\bfA_{3}^T\, \bfx$}
    node [block, fill=blue!30,below = 0.0005 cm of blk32] (blk33) {$\bfC_{2}^T\, \bfx$}

    node [block, fill=green!30,below = 0.5 cm of blk4] (blk41) {$\bfA_{3}^T\, \bfx$}
    node [block, fill=green!30,below = 0.0005 cm of blk41] (blk42) {$\bfA_{4}^T\, \bfx$}
    node [block, fill=green!30,below = 0.0005 cm of blk42] (blk43) {$\bfC_{3}^T\, \bfx$}

    node [block, fill=blue!30,below = 0.5 cm of blk5] (blk51) {$\bfA_{4}^T\, \bfx$}
    node [block, fill=blue!30,below = 0.0005 cm of blk51] (blk52) {$\bfA_{0}^T\, \bfx$}
    node [block, fill=blue!30,below = 0.0005 cm of blk52] (blk53) {$\bfC_{4}^T\, \bfx$}
    ; 
\draw[->](blk1) -- node{} (blk11);
\draw[->](blk2) -- node{} (blk21);
\draw[->](blk3) -- node{} (blk31);
\draw[->](blk4) -- node{} (blk41);
\draw[->](blk5) -- node{} (blk51);


\end{tikzpicture}
}
\caption{\small Partitioning matrix $A$ into five submatrices and assigning two uncoded and one coded task to each of the five workers. The coded submatrix assigned to $W_i$ is denoted as $\bfC_i$.}
\label{coded_bottom_prop}
\end{figure} 
\end{example}

Similar schemes can be arrived at for the matrix-matrix case. 
We assume that the uncoded storage fraction for $\bfA$ is $\gamma_{Au} = \frac{a_u}{a_2}$ and the coded storage fraction is $\gamma_{Ac} = \frac{a_c}{a_2}$, so that the total storage fraction is $\gamma_{A} = \frac{a_1}{a_2}$. Each worker also receives $\gamma_{B} = \frac{b_1}{b_2}$ fraction of the uncoded columns of matrix $B$. 


\begin{algorithm}[h]
	\caption{Cyclic coded at the bottom scheme for distributed matrix-matrix multiplication}
	\label{Alg:Cyclic_Partially_Uncoded_matmat}
   \SetKwInOut{Input}{Input}
   \SetKwInOut{Output}{Output}
   \Input{Matrices $\bfA$ and $\bfB$, $n$-number of workers. Storage fractions $\gamma_{Au} = \frac{a_u}{a_2}$ and $\gamma_{Ac} = \frac{a_c}{a_2}$, so that $\gamma_{A} = \frac{a_1}{a_2}$ and $\gamma_{B} = \frac{b_1}{b_2}$.}
   Set $\Delta_A = a_2$, $\Delta_B = m b_2$, $m = \frac{n}{\left(a_2 \times b_2 \right)}$. Partition $\bfA$ and $\bfB$ into $\Delta_A$ and $\Delta_B$ block-columns, respectively\;
   \For{$i\gets 0$ \KwTo $n - 1$}{
   Define $T = \left\lbrace i, i+1, \dots, i + a_u - 1 \right\rbrace$ (mod $\Delta_A$)\;
   Assign all $\bfA_{m}$'s sequentially from top to bottom to worker node $i$, where $m \in T$\;
   Assign $a_c$ different random linear combinations of $\bfA_m$'s for $m \notin T$\;
   $j \gets \floor{\frac{i}{a_2}}$ and assign $\bfB_{j}, \bfB_{j+1}, \dots, \bfB_{j + m b_1 - 1}$ from top to bottom (subscripts reduced modulo $\Delta_B$) to worker node $i$\;
   }
   \Output{Cyclic coded at the bottom scheme for matrix-matrix multiplication.}
\end{algorithm}

\begin{theorem}
\label{str:coded_bottom_matmat}
The recovery threshold for the matrix-matrix multiplication scheme Alg. \ref{Alg:Cyclic_Partially_Uncoded_matmat} is given by, $\tau = n - m a_2 b_1 + \kappa_{min}$, where $\kappa_{min}$ is the minimum positive integer for $\kappa$ satisfying the inequality
\begin{align*}
\ceil[\bigg]{\frac{\kappa}{m b_1}}  + \kappa a_c  \geq a_2 - a_u + 1 .
\end{align*}
\end{theorem}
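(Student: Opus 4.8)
The plan is to reduce the matrix-matrix recovery-threshold question to $\Delta_B$ independent matrix-vector recovery problems, one per block-column $\bfB_t$, and then to count exactly how many of the workers holding $\bfB_t$ are needed to decode the associated unknowns. First I would exploit that $\bfB$ is uncoded. As in the proof of Theorem \ref{thm:beta2_matmatstrQ}, partition the $\Delta = \Delta_A\Delta_B$ unknowns into the sets $\calB_t = \{\bfA_0^T\bfB_t, \dots, \bfA_{\Delta_A-1}^T\bfB_t\}$, $t=0,\dots,\Delta_B-1$. Every computed product has the form $(\text{coded or uncoded }\bfA\text{-column})^T\bfB_t$, so equations involving distinct $\bfB_t$ never mix; hence a set $S$ of completed workers decodes the full product iff for each $t$ the unknowns in $\calB_t$ are decodable from the equations in $S$ involving $\bfB_t$. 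From the assignment rule ($j=\floor{i/a_2}$, then $\bfB_j,\dots,\bfB_{j+mb_1-1}$), a fixed $\bfB_t$ occurs in exactly $mb_1$ consecutive groups of $a_2$ workers, i.e.\ in $m a_2 b_1$ workers, and within each such group the index $i \bmod a_2$ ranges over all of $\{0,\dots,a_2-1\}$, giving a full cyclic $\bfA$-assignment. Thus, for fixed $\bfB_t$, the problem is $mb_1$ independent copies of the matrix-vector coded-at-bottom scheme of Alg.\ \ref{Alg:Cyclic_Partially_Uncoded_matvec} over $a_2$ block-columns, with uncoded window $a_u$ and $a_c$ coded blocks per worker.

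Next I would establish the decodability criterion for one $\calB_t$ from a set of $\kappa$ completed workers holding $\bfB_t$. Let $U$ be the set of $\bfA$-indices covered by the uncoded windows of those $\kappa$ workers and $u=|U|$. The uncoded products directly yield $\{\bfA_i^T\bfB_t : i\in U\}$; after substituting these, each coded product becomes a random linear combination of the remaining $a_2-u$ unknowns, since the missing indices lie outside every chosen worker's window and therefore appear with random coefficients. As the coefficients are i.i.d.\ from a continuous distribution, the $\kappa a_c$ residual equations have rank $\min(\kappa a_c,\,a_2-u)$ with probability $1$, so $\calB_t$ is decodable iff $u+\kappa a_c \ge a_2$.

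I then bound the worst case over adversarial choices of $\kappa$ workers by minimizing the coverage $u$. Within one cyclic group, $\kappa_g$ workers with consecutive start indices cover $a_u+\kappa_g-1$ indices; across the $mb_1$ groups the union coverage is controlled by the group carrying the most selected workers (all windows being alignable to the same block of indices). Minimizing the maximum group load subject to $\sum_g \kappa_g=\kappa$ gives $u_{\min}(\kappa)=a_u-1+\ceil{\kappa/(mb_1)}$. Substituting into $u_{\min}(\kappa)+\kappa a_c\ge a_2$ yields exactly
\begin{align*}
\ceil[\bigg]{\frac{\kappa}{m b_1}} + \kappa a_c \ge a_2 - a_u + 1,
\end{align*}
so the minimal $\kappa$ guaranteeing decodability of every $\calB_t$ is the stated $\kappa_{min}$. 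Finally, converting to the threshold: a set $S$ fails iff some $\bfB_t$ is blocked, and by symmetry blocking one $\bfB_{t}$ forces $S$ to retain at most $\kappa_{min}-1$ of its $m a_2 b_1$ workers while all $n-m a_2 b_1$ workers not holding $\bfB_t$ may remain. Hence the largest failing set has size $n-m a_2 b_1+\kappa_{min}-1$, giving $\tau=n-m a_2 b_1+\kappa_{min}$.

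The main obstacle I anticipate is the tight worst-case coverage identity $u_{\min}(\kappa)=a_u-1+\ceil{\kappa/(mb_1)}$: I must show both that the adversary attains it (by equally loading the groups and aligning their windows) and that no selection does better (a covering argument proving coverage is governed by the most heavily loaded cyclic group), while carefully handling the boundary cases where a window would wrap around and force full coverage. The rank step and the reduction to $\calB_t$ are comparatively routine given the earlier cyclic-assignment lemmas.
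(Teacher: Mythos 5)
Your proposal is correct and follows essentially the same route as the paper's proof: decompose the unknowns by the uncoded block-columns $\bfB_t$, reduce to $mb_1$ copies of the cyclic matrix-vector coverage argument, and count $a_u+\ceil{\kappa/(mb_1)}-1$ uncoded recoveries plus $\kappa a_c$ coded equations against $a_2$. Your treatment is somewhat more explicit than the paper's on the rank of the residual random system and on the tightness (exhibiting a failing set of size $\tau-1$), but these are refinements of the same argument rather than a different approach.
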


\begin{proof}
The detailed proof is discussed in Appendix \ref{App:coded_matmat}.
\end{proof}

\begin{example}
\begin{figure*}[t]
\centering
\captionsetup{justification=centering}
\resizebox{0.99\linewidth}{!}{
\begin{tikzpicture}[auto, thick, node distance=2cm, >=triangle 45]
\draw
    node [sum1, minimum size = 1cm, fill=green!30] (blk1) {$W_0$}
    node [sum1, minimum size = 1cm, fill=blue!30,right = 0.3 cm of blk1] (blk2) {$W_1$}
    node [sum1, minimum size = 1cm, fill=green!30,right = 0.3 cm of blk2] (blk3) {$W_2$}
    node [sum1, minimum size = 1cm, fill=blue!30,right = 0.3 cm of blk3] (blk4) {$W_3$}
    node [sum1, minimum size = 1cm, fill=green!30,right = 0.3 cm of blk4] (blk5) {$W_4$}
	node [sum1, minimum size = 1cm, fill=blue!30,right = 0.3 cm of blk5] (blk6) {$W_5$}
	node [sum1, minimum size = 1cm, fill=green!30,right = 0.3 cm of blk6] (blk7) {$W_6$}
	node [sum1, minimum size = 1cm, fill=blue!30,right = 0.3 cm of blk7] (blk8) {$W_7$}
	node [sum1, minimum size = 1cm, fill=green!30,right = 0.3 cm of blk8] (blk9) {$W_8$}
	node [sum1, minimum size = 1cm, fill=blue!30,right = 0.3 cm of blk9] (blk10) {$W_9$}
	node [sum1, minimum size = 1cm, fill=green!30,right = 0.3 cm of blk10] (blk11) {$W_{10}$}
	node [sum1, minimum size = 1cm, fill=blue!30,right = 0.3 cm of blk11] (blk12) {$W_{11}$}

    node [block, minimum width = 0.9 cm, fill=green!30,below = 0.4 cm of blk1] (blk011) {$\bfA_0$}
    node [block, minimum width = 0.9 cm, fill=green!30,below = 0.0005 cm of blk011] (blk012) {$\bfC_{0}$}
    
    node [block, minimum width = 0.9 cm,fill=blue!30,below = 0.4 cm of blk2] (blk21) {$\bfA_1$}
    node [block, minimum width = 0.9 cm, fill=blue!30,below = 0.0005 cm of blk21] (blk22) {$\bfC_{1}$}
    
    node [block, minimum width = 0.9 cm,fill=green!30,below = 0.4 cm of blk3] (blk31) {$\bfA_2$}
    node [block, minimum width = 0.9 cm, fill=green!30,below = 0.0005 cm of blk31] (blk32) {$\bfC_{2}$}
    
    node [block, minimum width = 0.9 cm,fill=blue!30,below = 0.4 cm of blk4] (blk41) {$\bfA_0$}
    node [block, minimum width = 0.9 cm, fill=blue!30,below = 0.0005 cm of blk41] (blk42) {$\bfC_{3}$}
    
    node [block, minimum width = 0.9 cm,fill=green!30,below = 0.4 cm of blk5] (blk51) {$\bfA_1$}
    node [block, minimum width = 0.9 cm, fill=green!30,below = 0.0005 cm of blk51] (blk52) {$\bfC_{4}$}
    
    node [block, minimum width = 0.9 cm,fill=blue!30,below = 0.4 cm of blk6] (blk61) {$\bfA_2$}
    node [block, minimum width = 0.9 cm, fill=blue!30,below = 0.0005 cm of blk61] (blk62) {$\bfC_{5}$}
    
    node [block, minimum width = 0.9 cm,fill=green!30,below = 0.4 cm of blk7] (blk71) {$\bfA_0$}
    node [block, minimum width = 0.9 cm, fill=green!30,below = 0.0005 cm of blk71] (blk72) {$\bfC_{6}$}
    
    node [block, minimum width = 0.9 cm,fill=blue!30,below = 0.4 cm of blk8] (blk81) {$\bfA_1$}
    node [block, minimum width = 0.9 cm, fill=blue!30,below = 0.0005 cm of blk81] (blk82) {$\bfC_{7}$}
    
    node [block, minimum width = 0.9 cm,fill=green!30,below = 0.4 cm of blk9] (blk91) {$\bfA_2$}
    node [block, minimum width = 0.9 cm, fill=green!30,below = 0.0005 cm of blk91] (blk92) {$\bfC_{8}$}
    
    node [block, minimum width = 0.9 cm,fill=blue!30,below = 0.4 cm of blk10] (blk101) {$\bfA_0$}
    node [block, minimum width = 0.9 cm,fill=blue!30,below = 0.0005 cm of blk101] (blk102) {$\bfC_{9}$}
    
    node [block, minimum width = 0.9 cm,fill=green!30,below = 0.4 cm of blk11] (blk111) {$\bfA_1$}
    node [block, minimum width = 0.9 cm, fill=green!30,below = 0.0005 cm of blk111] (blk112) {$\bfC_{10}$}
    
    node [block, minimum width = 0.9 cm,fill=blue!30,below = 0.4 cm of blk12] (blk121) {$\bfA_2$}
    node [block, minimum width = 0.9 cm, fill=blue!30,below = 0.0005 cm of blk121] (blk122) {$\bfC_{11}$}
    
    node [block, minimum width = 0.9 cm,fill=green!30,below = 0.4 cm of blk012] (blk0111) {$\bfB_0$}
    node [block, minimum width = 0.9 cm,fill=green!30,below = 0.0005 cm of blk0111] (blk0112) {$\bfB_1$}
    node [block, minimum width = 0.9 cm,fill=green!30,below = 0.0005 cm of blk0112] (blk0113) {$\bfB_2$}
       
    node [block, minimum width = 0.9 cm,fill=blue!30,below = 0.4 cm of blk22] (blk211) {$\bfB_0$}
    node [block, minimum width = 0.9 cm,fill=blue!30,below = 0.0005 cm of blk211] (blk212) {$\bfB_1$}
    node [block, minimum width = 0.9 cm,fill=blue!30,below = 0.0005 cm of blk212] (blk0213) {$\bfB_2$}
    
    node [block, minimum width = 0.9 cm,fill=green!30,below = 0.4 cm of blk32] (blk0111) {$\bfB_0$}
    node [block, minimum width = 0.9 cm,fill=green!30,below = 0.0005 cm of blk0111] (blk0112) {$\bfB_1$}
    node [block, minimum width = 0.9 cm,fill=green!30,below = 0.0005 cm of blk0112] (blk0113) {$\bfB_2$}
    
    node [block, minimum width = 0.9 cm,fill=blue!30,below = 0.4 cm of blk42] (blk211) {$\bfB_1$}
    node [block, minimum width = 0.9 cm,fill=blue!30,below = 0.0005 cm of blk211] (blk212) {$\bfB_2$}
    node [block, minimum width = 0.9 cm,fill=blue!30,below = 0.0005 cm of blk212] (blk0213) {$\bfB_3$}
    
    node [block, minimum width = 0.9 cm,fill=green!30,below = 0.4 cm of blk52] (blk0111) {$\bfB_1$}
    node [block, minimum width = 0.9 cm,fill=green!30,below = 0.0005 cm of blk0111] (blk0112) {$\bfB_2$}
    node [block, minimum width = 0.9 cm,fill=green!30,below = 0.0005 cm of blk0112] (blk0113) {$\bfB_3$}
    
    node [block, minimum width = 0.9 cm,fill=blue!30,below = 0.4 cm of blk62] (blk211) {$\bfB_1$}
    node [block, minimum width = 0.9 cm,fill=blue!30,below = 0.0005 cm of blk211] (blk212) {$\bfB_2$}
    node [block, minimum width = 0.9 cm,fill=blue!30,below = 0.0005 cm of blk212] (blk0213) {$\bfB_3$}
    
    node [block, minimum width = 0.9 cm,fill=green!30,below = 0.4 cm of blk72] (blk0111) {$\bfB_2$}
    node [block, minimum width = 0.9 cm,fill=green!30,below = 0.0005 cm of blk0111] (blk0112) {$\bfB_3$}
    node [block, minimum width = 0.9 cm,fill=green!30,below = 0.0005 cm of blk0112] (blk0113) {$\bfB_0$}
    
    node [block, minimum width = 0.9 cm,fill=blue!30,below = 0.4 cm of blk82] (blk211) {$\bfB_2$}
    node [block, minimum width = 0.9 cm,fill=blue!30,below = 0.0005 cm of blk211] (blk212) {$\bfB_3$}
    node [block, minimum width = 0.9 cm,fill=blue!30,below = 0.0005 cm of blk212] (blk0213) {$\bfB_0$}
    
    node [block, minimum width = 0.9 cm,fill=green!30,below = 0.4 cm of blk92] (blk0111) {$\bfB_2$}
    node [block, minimum width = 0.9 cm,fill=green!30,below = 0.0005 cm of blk0111] (blk0112) {$\bfB_3$}
    node [block, minimum width = 0.9 cm,fill=green!30,below = 0.0005 cm of blk0112] (blk0113) {$\bfB_0$}
    
    node [block, minimum width = 0.9 cm,fill=blue!30,below = 0.4 cm of blk102] (blk211) {$\bfB_3$}
    node [block, minimum width = 0.9 cm,fill=blue!30,below = 0.0005 cm of blk211] (blk212) {$\bfB_0$}
    node [block, minimum width = 0.9 cm,fill=blue!30,below = 0.0005 cm of blk212] (blk0213) {$\bfB_1$}
    
    node [block, minimum width = 0.9 cm,fill=green!30,below = 0.4 cm of blk112] (blk0111) {$\bfB_3$}
    node [block, minimum width = 0.9 cm,fill=green!30,below = 0.0005 cm of blk0111] (blk0112) {$\bfB_0$}
    node [block, minimum width = 0.9 cm,fill=green!30,below = 0.0005 cm of blk0112] (blk0113) {$\bfB_1$}
    
    node [block, minimum width = 0.9 cm,fill=blue!30,below = 0.4 cm of blk122] (blk211) {$\bfB_3$}
    node [block, minimum width = 0.9 cm,fill=blue!30,below = 0.0005 cm of blk211] (blk212) {$\bfB_0$}
    node [block, minimum width = 0.9 cm,fill=blue!30,below = 0.0005 cm of blk212] (blk0213) {$\bfB_1$}

    ;

\draw[->](blk1) -- node{} (blk011);
\draw[->](blk2) -- node{} (blk21);
\draw[->](blk3) -- node{} (blk31);
\draw[->](blk4) -- node{} (blk41);
\draw[->](blk5) -- node{} (blk51);
\draw[->](blk6) -- node{} (blk61);
\draw[->](blk7) -- node{} (blk71);
\draw[->](blk8) -- node{} (blk81);
\draw[->](blk9) -- node{} (blk91);
\draw[->](blk10) -- node{} (blk101);
\draw[->](blk11) -- node{} (blk111);
\draw[->](blk12) -- node{} (blk121);

\end{tikzpicture}
}
\centering
\caption{\small Coded matrix-matrix multiplication with $n = 12$ with $\gamma_{Au} = \frac{1}{3}$, $\gamma_{Ac} = \frac{1}{3}$ and $\gamma_{B} = \frac{3}{4}$ where $\Delta_A = 3$ and $\Delta_B = 4$. The coded submatrix for $\bfA$ assigned to $W_i$ is denoted as $\bfC_i$.}

\label{coded_bottom_matmat_prop}
\end{figure*}
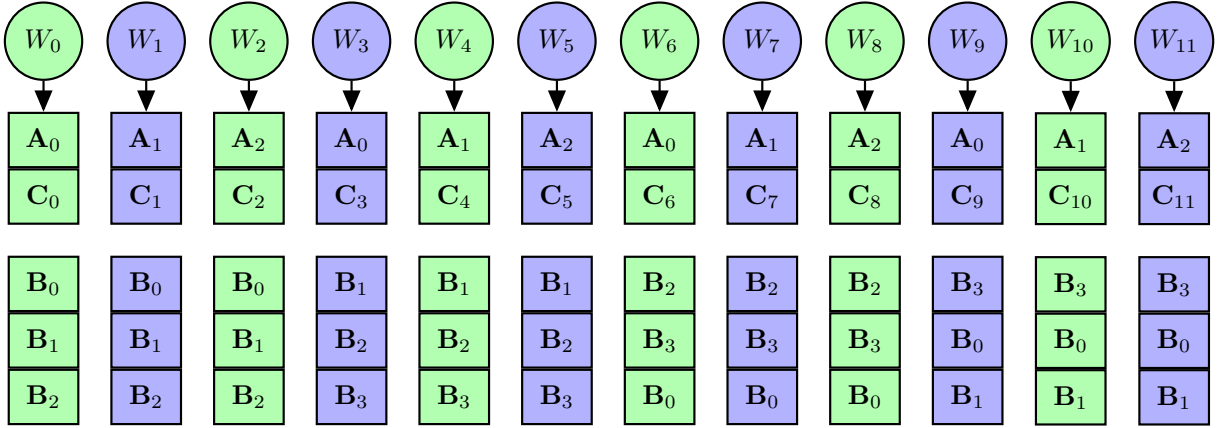 

We consider the scenario as before, where $\gamma_A = \frac{2}{3}$ and $\gamma_B = \frac{3}{4}$, and $n = 12$, so $m = \frac{12}{3 \times 4}= 1$. According to Alg. \ref{Alg:Cyclic_Partially_Uncoded_matmat}, we set $\ell_A = 2$, $\Delta_A = 3 $ and $\ell_B =3$, $\Delta_B = 4$. So, we need to recover $\Delta = \Delta_A \Delta_B = 12$ block products. Figs. \ref{uncoded_matmat_prop} and \ref{coded_bottom_matmat_prop} show the job assignments to the workers for the uncoded case and the proposed coded scheme, respectively. For the coded scheme, we assume $\gamma_{Au} = \frac{1}{3}$ and $\gamma_{Ac} = \frac{1}{3}$, and on the other hand, for the uncoded scheme, we have $\gamma_{Ac} = 0$, so $a_c = 0$. 

Now for the uncoded case, according to Theorem \ref{beta_matmatstrQ}, the recovery threshold is $\tau = n - (c \ell - \beta) = 12 - (1 \times \ell_A \ell_B - 1) = 7$. On the other hand, according to Theorem \ref{str:coded_bottom_matmat}, the recovery threshold for the coded case is, $\tau = n - m a_2 b_1 + \kappa = 12 - 3 \times 3 + 2 = 5$ since the minimum positive integer $\kappa$ that satisfies $\ceil[\big]{\frac{\kappa}{3}} + \kappa \geq 3$ is $2$. 
\end{example}

We expect that the benefits of having densely coded block-columns at the bottom should extend for the case of general $\beta > 1$ and the $Q/\Delta$ analysis should be possible to perform for the matrix-matrix case. However, this appears to be more challenging and will be investigated as part of future work.

\section{Sparsely Coded Straggler (SCS) Optimal Matrix Computations}
\label{sec:optimal}

In this section, we develop schemes for distributed matrix computations which perform optimally in terms of straggler resilience. For example, in matrix-matrix multiplication case, if the storage fractions of each worker node are $\gamma_A = 1/k_A$ and $\gamma_B = 1/k_B$ then it can be shown the lowest possible threshold is $k_A k_B$ \cite{yu2017polynomial}. Similarly, for the matrix-vector multiplication case the optimal threshold is $k_A$. Prior work has also demonstrated schemes that achieve these thresholds. In what follows, we present schemes that are similar in spirit to our constructions in Section \ref{sec:beta_level} which are suitable for sparse matrices while continuing to enjoy the optimal threshold $k_A k_B$. Moreover, unlike the previously available dense coded approaches, our proposed sparsely coded straggler (SCS) optimal scheme can utilize the partial computations of the slow workers and can provide significantly small $Q/\Delta$.

\subsection{Matrix-vector Multiplication} 

In our proposed scheme in Alg. \ref{Alg:Optimal_Matvec}, we set $\Delta = \textrm{LCM}(n, k_A)$ and assign the uncoded jobs in such a way that all the workers are assigned the uncoded jobs in an equal manner and the replication factor of the uncoded symbols over all $n$ workers is, $r_u = 1$. Thus each of the workers is assigned $\Delta/n$ uncoded jobs and the rest $\ell_c = \frac{\Delta}{k_A} - \frac{\Delta}{n}$ jobs are assigned using a random linear encoding matrix, $\calR$ of size $n \ell_c \times \Delta$. Since any $(\Delta - \lambda) \times (\Delta - \lambda)$ submatrix of $\calR$ is full rank with probability $1$, the master node can decode all the unknowns if it receives any $\lambda$ uncoded symbols and any $\Delta - \lambda$ coded symbols from all the workers. Thus we can say that $Q = \Delta$, and since each worker stores $\Delta/k_A$ block-columns, we have the recovery threshold, $\tau = \frac{\Delta}{\Delta/k_A} = k_A$. 

\begin{algorithm}[t]
	\caption{SCS Optimal scheme for matrix-vector multiplication}
	\label{Alg:Optimal_Matvec}
   \SetKwInOut{Input}{Input}
   \SetKwInOut{Output}{Output}
   \Input{Matrix $\bfA$ and vector $\bfx$, $n$-number of worker nodes, storage fraction $\gamma_A = \frac{1}{k_A}$.}
   Set $\Delta = \textrm{LCM}(n, k_A)$. Partition $\bfA$ into $\Delta$ block-columns $\bfA_0, \bfA_1, \dots, \bfA_{\Delta-1}$\;
   Number of coded submatrices of $\bfA$ in each worker node, $\ell_c = \frac{\Delta}{k_A} - \frac{\Delta}{n}$\;
   \For{$i\gets 0$ \KwTo $n-1$}{
   $u \gets i \times \frac{\Delta}{n}$\; 
   Define $T = \left\lbrace u, u+1, \dots, u + \frac{\Delta}{n} - 1 \right\rbrace$ (mod $\Delta$)\;
   Assign all $\bfA_{m}$'s sequentially from top to bottom to worker node $i$, where $m \in T$\;
   Assign $\ell_c$ different random linear combinations of $\bfA_m$'s for $m \notin T$\;
   }
   \Output{$\langle n, \gamma_A \rangle$ SCS optimal-scheme for matrix-vector multiplication with optimal $Q/\Delta$.}
\end{algorithm}

\begin{example}
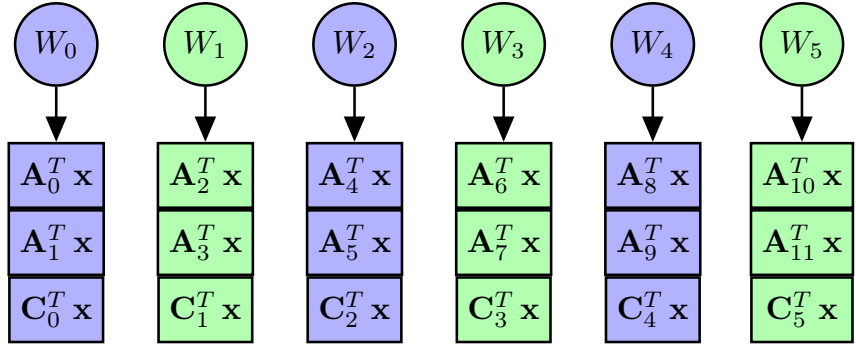
\begin{figure}[t]
\centering
\captionsetup{justification=centering}
\resizebox{0.7\linewidth}{!}{
\begin{tikzpicture}[auto, thick, node distance=2cm, >=triangle 45]

\draw
	node at (0,0)[right=-3mm]{}
	node [sum, fill=blue!30] (blk1){$W_0$}
    node [sum, fill=green!30,right = 0.7cm of blk1] (blk2) {$W_1$}
    node [sum, fill=blue!30,right = 0.7cm of blk2] (blk3) {$W_2$}
    node [sum, fill=green!30,right = 0.7cm of blk3] (blk4) {$W_3$}
    node [sum, fill=blue!30,right = 0.7cm of blk4] (blk5) {$W_4$}
    node [sum, fill=green!30,right = 0.7 cm of blk5] (blk6) {$W_5$}

    node [block, fill=blue!30,below = 0.6 cm of blk1] (blk11) {$\bfA_{0}^T \, \bfx$}
    node [block, fill=blue!30,below = 0.0005 cm of blk11] (blk12) {$\bfA_{1}^T\, \bfx$}
    node [block, fill=blue!30,below = 0.0005 cm of blk12] (blk13) {$\bfC_{0}^T\, \bfx$}

    node [block, fill=green!30,below = 0.6 cm of blk2] (blk21) {$\bfA_{2}^T\, \bfx$}
    node [block, fill=green!30,below = 0.0005 cm of blk21] (blk22) {$\bfA_{3}^T\, \bfx$}
    node [block, fill=green!30,below = 0.0005 cm of blk22] (blk23) {$\bfC_{1}^T\, \bfx$}

    node [block, fill=blue!30,below = 0.6 cm of blk3] (blk31) {$\bfA_{4}^T\, \bfx$}
    node [block, fill=blue!30,below = 0.0005 cm of blk31] (blk32) {$\bfA_{5}^T\, \bfx$}
    node [block, fill=blue!30,below = 0.0005 cm of blk32] (blk33) {$\bfC_{2}^T\, \bfx$}

    node [block, fill=green!30,below = 0.6 cm of blk4] (blk41) {$\bfA_{6}^T\, \bfx$}
    node [block, fill=green!30,below = 0.0005 cm of blk41] (blk42) {$\bfA_{7}^T\, \bfx$}
    node [block, fill=green!30,below = 0.0005 cm of blk42] (blk43) {$\bfC_{3}^T\, \bfx$}

    node [block, fill=blue!30,below = 0.6 cm of blk5] (blk51) {$\bfA_{8}^T\, \bfx$}
    node [block, fill=blue!30,below = 0.0005 cm of blk51] (blk52) {$\bfA_{9}^T\, \bfx$}
    node [block, fill=blue!30,below = 0.0005 cm of blk52] (blk53) {$\bfC_{4}^T\, \bfx$}
    
    node [block, fill=green!30,below = 0.6 cm of blk6] (blk61) {$\bfA_{10}^T\, \bfx$}
    node [block, fill=green!30,below = 0.0005 cm of blk61] (blk62) {$\bfA_{11}^T\, \bfx$}
    node [block, fill=green!30, minimum width = 1.1 cm, below = 0.0005 cm of blk62] (blk63) {$\bfC_{5}^T\, \bfx$}
    ; 
\draw[->](blk1) -- node{} (blk11);
\draw[->](blk2) -- node{} (blk21);
\draw[->](blk3) -- node{} (blk31);
\draw[->](blk4) -- node{} (blk41);
\draw[->](blk5) -- node{} (blk51);
\draw[->](blk6) -- node{} (blk61);


\end{tikzpicture}
}
\caption{\small Partitioning matrix $A$ into $\Delta = 12$ submatrices and assigning to $n = 6$ workers each of which has been assigned two uncoded and one coded task to be resilient to $s = 2$ stragglers. The coded submatrix assigned to $W_i$ is denoted as $\bfC_i$.}
\label{opt_matvec}
\end{figure} 
We consider an example in Fig. \ref{opt_matvec} with $n = 6$ and $\gamma = \frac{1}{4}$, so $k_A = 4$. We set $\Delta = \textrm{LCM} (6, 4) = 12$, and $\ell_c = \frac{12}{4} - \frac{12}{6} = 1$. Thus, we assign two uncoded jobs and one coded job to each worker where the coded job assignment would be incorporated using a random matrix $\calR$ of size $6 \times 12$. In this case, $Q = 12$, thus $Q / \Delta = 1$, and $\tau = 4$.
\end{example}

\begin{remark}
On the surface Fig. \ref{opt_matvec} may appear equivalent to a systematic version of the RKRP coded scheme \cite{8919859} with the same number of matrix partitions. However, there is a significant difference that the idea in the RKRP coded scheme is to assign the systematic versions to some workers and the coded versions to other workers, whereas we assign the jobs in a symmetric fashion so that every worker receives same number of uncoded and same number of coded jobs. If the input matrices are sparse, then the parity workers in the RKRP coded scheme will be significantly slower than the systematic workers.
\end{remark}

\subsection{Matrix-matrix Multiplication} 

\begin{algorithm}[t]
	\caption{SCS Optimal scheme for distributed matrix-matrix multiplication}
	\label{Alg:Optimal_Matmat}
   \SetKwInOut{Input}{Input}
   \SetKwInOut{Output}{Output}
   \Input{Matrices $\bfA$ and $\bfB$, $n$-number of worker nodes, storage fraction $\gamma_A = \frac{1}{k_A}$ and $\gamma_B = \frac{1}{k_B}$. So, $s = n - k_A k_B$.}
   Set $\Delta_A = \textrm{LCM}(n, k_A)$ and $\Delta_B = k_B$\; 
   Partition $\bfA$ and $\bfB$ into $\Delta_A$ and $\Delta_B$ block-columns, and $\Delta = \Delta_A \Delta_B$\;
   Number of coded submatrices of $\bfA$ in each worker node, $\ell_c = \frac{\Delta_A}{k_A} - \frac{\Delta}{n}$\;
   \For{$i\gets 0$ \KwTo $n-1$}{
   $u \gets i \times \frac{\Delta_A}{n}$\; 
   Define $T = \left\lbrace u, u+1, \dots, u + \frac{\Delta}{n} - 1 \right\rbrace$ (modulo $\Delta_A$)\;
   Assign all $\bfA_{m}$'s sequentially from top to bottom to worker node $i$, where $m \in T$\;
   Assign $\ell_c$ different random linear combinations of $\bfA_m$'s for $m \notin T$\;
   Assign a single random linear combination of all block-columns of $\bfB$\;
   
   }
   \Output{$\langle n, \gamma_A, \gamma_B \rangle$ SCS optimal-scheme for distributed matrix-matrix multiplication.}
\end{algorithm}
We propose a matrix-matrix multiplication scheme in Alg. \ref{Alg:Optimal_Matmat} with storage fractions $\gamma_A = 1/k_A$ and $\gamma_B = 1/k_B$ and recovery threshold $k_A k_B$. Furthermore, $Q/\Delta = 1 + (k_B-1) \ell_c/\Delta$, where $\ell_c$ is the number of coded-coded matrix-matrix products assigned to each worker node.

\begin{theorem}
\label{thm:opt_straggler_resilience}
Alg. \ref{Alg:Optimal_Matmat} proposes a distributed matrix-matrix multiplication scheme being resilient to $s = n - k_A k_B$ stragglers.
\end{theorem}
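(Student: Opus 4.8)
The plan is to show that the recovery threshold is exactly $\tau = k_A k_B$, i.e. that $\bfA^T\bfB$ can be reconstructed from the products returned by \emph{any} set $S$ of $k_A k_B$ workers; since $s=n-\tau$ this is the claim. Each worker stores $\ell_A=\Delta_A/k_A$ block-columns of $\bfA$, with coefficient vectors $\bfw_{i,k}\in\mathbb{R}^{\Delta_A}$ ($k=0,\dots,\ell_A-1$, the uncoded ones being standard basis vectors and the coded ones rows of $\calR$), together with a single coded block-column $\tilde\bfB_i=\sum_{j} g_{ij}\bfB_j$, so $\bfg_i\in\mathbb{R}^{k_B}$. Because $\ell_B=\Delta_B/k_B=1$, worker $i$ returns the $\ell_A$ products $\tilde\bfA_{i,k}^T\tilde\bfB_i=\sum_{m,j} w_{i,k,m}\,g_{ij}\,\bfA_m^T\bfB_j$. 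Writing the unknowns as $\bfU_{m,j}=\bfA_m^T\bfB_j$ and fixing an arbitrary scalar coordinate of these matrix products (by linearity this loses nothing), the coefficient of the returned product on unknown $(m,j)$ is $w_{i,k,m}g_{ij}$, so the row associated with $(i,k)$ is the Kronecker product $\bfw_{i,k}\otimes\bfg_i$. Collecting the $|S|\,\ell_A=k_Ak_B\cdot \Delta_A/k_A=\Delta_A k_B=\Delta$ rows yields a $\Delta\times\Delta$ matrix $\bfM$, and it suffices to prove $\bfM$ is invertible with probability $1$.

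First I would invoke a standard Schwartz--Zippel / dimension argument: $\det\bfM$ is a polynomial in the encoding variables $\{w_{i,k,m}\}$ and $\{g_{ij}\}$, which are drawn independently from a continuous distribution. Hence $\det\bfM\neq0$ with probability $1$ as soon as $\det\bfM$ is not the identically-zero polynomial, and the latter follows from exhibiting a single assignment of the variables for which $\bfM$ is nonsingular.

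The exhibit is the key step. Partition $S$ into $k_B$ groups $S_0,\dots,S_{k_B-1}$, each of size $k_A$, and on the group $S_j$ set $\bfg_i=\mathbf{e}_j$, i.e. make $\tilde\bfB_i=\bfB_j$ \emph{uncoded}, while keeping the $\bfA$-encoding variables $\{w_{i,k,m}\}$ at a generic value. With this choice every product returned by a worker in $S_j$ equals $\sum_m w_{i,k,m}\,\bfA_m^T\bfB_j$, so it involves only column $j$ of $\bfU$, and decoding \emph{decouples} across $j$: after reordering rows and columns by the value of $j$, $\bfM$ becomes block diagonal with $k_B$ blocks. The $j$-th block is exactly the $\Delta_A\times\Delta_A$ coefficient matrix of the matrix-\emph{vector} SCS scheme of Alg.~\ref{Alg:Optimal_Matvec} restricted to the $k_A$ workers of $S_j$; by the recovery-threshold-$k_A$ property of that scheme, the $\bfA$-coefficient vectors of any $k_A$ workers span $\mathbb{R}^{\Delta_A}$ for generic $\{w_{i,k,m}\}$, so each block is nonsingular. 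A single generic realization of the $\bfA$-code makes all $k_B$ blocks nonsingular simultaneously (a finite intersection of probability-$1$ events), so for this assignment $\det\bfM=\prod_j\det(\text{block }j)\neq0$. Therefore $\det\bfM\not\equiv0$, $\bfM$ is invertible almost surely, and recovery from any $k_Ak_B$ workers succeeds.

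I would flag that the only real obstacle is that the $\bfB$-coefficient $\bfg_i$ is \emph{shared} by all $\ell_A$ products of worker $i$, so $\bfM$ does not globally factor as a clean Kronecker product of an $\bfA$-matrix and a $\bfB$-matrix; the block-decoupling exhibit (choosing the $\bfg_i$ as basis vectors group-by-group) is precisely what sidesteps this coupling and reduces the matrix-matrix statement to the already-established matrix-vector one.
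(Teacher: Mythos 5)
Your overall strategy---reduce to showing $\det\bfM\not\equiv 0$ as a polynomial in the encoding coefficients, then exhibit one specialization of the $\bfB$-vectors $\bfg_i$ that makes the system decouple into $k_B$ blocks of size $\Delta_A\times\Delta_A$---is sound and in fact mirrors the specialization device the paper uses to prove Theorem \ref{thm:optimalQ}. (The paper's own proof of this theorem is quite different: it argues linear independence directly, via a Khatri--Rao full-rank lemma and a counting argument on how many $\bfA_i$'s appear fewer than $k_B$ times among the stragglers.) However, there is a genuine gap in your key step, namely the claim that each block ``is exactly the $\Delta_A\times\Delta_A$ coefficient matrix of the matrix-vector SCS scheme of Alg.~\ref{Alg:Optimal_Matvec} restricted to the $k_A$ workers of $S_j$'' and is therefore nonsingular. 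It is not: in Alg.~\ref{Alg:Optimal_Matvec} the uncoded windows are pairwise disjoint (replication factor $1$), whereas in Alg.~\ref{Alg:Optimal_Matmat} each worker carries $\Delta_A k_B/n$ uncoded block-columns of $\bfA$ with start points spaced only $\Delta_A/n$ apart, so each uncoded $\bfA_m$ is replicated $k_B$ times across the $n$ workers. Consequently, for an \emph{arbitrary} partition of $S$ into groups of size $k_A$, two workers in the same group $S_j$ can share an uncoded $\bfA_m$; the block then contains the row $e_m\otimes e_j$ twice and is singular no matter how the coded rows are chosen. Concretely, in the paper's example ($n=5$, $k_A=k_B=2$, Fig.~\ref{opt_matmat}), taking $S=\{W_0,W_1,W_2,W_3\}$ and grouping $S_0=\{W_0,W_1\}$ puts the repeated rows $e_2$ and $e_3$ into block $0$, so your exhibited specialization gives $\det\bfM=0$ and proves nothing. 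Your closing remark identifies the shared $\bfg_i$ as ``the only real obstacle,'' but this second obstacle---overlap of the uncoded $\bfA$-assignments---is the one your argument actually trips over.

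The gap is repairable, but the repair needs to be stated and proved: you must choose the partition so that workers within a group have pairwise disjoint uncoded windows. Since the windows of workers $i$ and $i'$ overlap exactly when their circular index distance is less than $k_B$, listing the $k_Ak_B$ workers of $S$ in circular order and assigning them round-robin to the groups $S_0,\dots,S_{k_B-1}$ guarantees that any two workers in the same group have at least $k_B-1$ members of $S$ between them in each circular direction, hence index distance at least $k_B$ both ways, hence disjoint windows. With that choice each block consists of $k_A\cdot\Delta_Ak_B/n$ \emph{distinct} standard basis rows plus $k_A\ell_c$ fully random rows totalling $\Delta_A$, which is nonsingular with probability $1$, and the rest of your argument (independence of the coefficient sets across blocks, product of block determinants, Schwartz--Zippel) goes through. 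Without this combinatorial step the proof is incomplete.
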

\begin{proof}
According to this scheme, we know that every worker is assigned $\frac{\Delta_A}{k_A}$ block-columns (uncoded and coded) from $\bfA$ and one coded block-column from $\bfB$, which indicates that we can obtain, in total, $\frac{\Delta_A}{k_A}$ products from each of the workers. Thus from any $k_A k_B$ workers, the master node can obtain $\frac{\Delta_A}{k_A} \times k_A k_B = \Delta_A k_B = \Delta_A \Delta_B = \Delta$ products. A simple counting argument applied to Alg. \ref{Alg:Optimal_Matmat} shows that any uncoded block-column of $\bfA$ appears exactly $k_B$ times over all $n$ workers.

In what follows we show that each of these block products corresponds to a linearly independent equation where the variables are $\bfA_i^T \bfB_j$ for $i=0, 1, \dots, \Delta_A-1, j = 0, 1, \dots, \Delta_B-1$. Let $e_i$ denote the $i$-th unit vector of length $\Delta_A$, $i = 0, \dots, \Delta_A -1$. It follows that the product $(\sum_{i=0}^{\Delta_A-1} u_i \bfA_i)^T (\sum_{j=0}^{\Delta_B-1} v_j \bfB_j)$ corresponds to the vector $\sum_{i=0}^{\Delta_A-1} u_i (e_i \otimes v$), where $v$ is the vector $[v_0 ~ v_1 ~ \dots ~v_{\Delta_B-1}]^T$ ({\it cf.} discussion around \eqref{eq:kron_prod_matmat}).

Now, suppose that we consider a subset of $k = k_A k_B$ workers indexed by the set $\calI = \{i_0, i_1, \dots, i_{k-1}\}$. Within this worker node set, let $\calJ_i$ denote the index set of the worker nodes where $\bfA_i$ appears uncoded. The random encoding vectors for $\bfA$ and $\bfB$ in worker $W_\ell$ are denoted by $u^{(\ell,j)}$ (of length $\Delta_A$) for $j = 0, 1, \dots, \ell_c-1$ and $v^{(\ell)}$ (of length $\Delta_B$) respectively.

It follows that the products involving the uncoded block-column $\bfA_i$ can be expressed as
\begin{align*}
e_i \otimes v^{(\ell)} \text{~for~} \ell \in \calJ_i.
\end{align*}

Our first observation is that the collection of vectors $\{e_i \otimes v^{(\ell)} \}$ for $\ell \in \calJ_i, i = 0, \dots, \Delta_A -1$ is linearly independent. This follows because any linear combination of these vectors can equivalently be expressed as
\begin{align*}
    \sum_{i = 0}^{\Delta_A -1} \left( e_i \otimes \sum_{\ell \in \calJ_i} \alpha^{(i)}_{\ell} v^{(\ell)} \right)
\end{align*}
where $\alpha^{(i)}_{\ell}$'s are the linear combination coefficients and each term in the above sum needs to be forced to zero.
Note that $|\calJ_i| \leq k_B$. Therefore, the vectors $v^{(\ell)}$ for $\ell \in \calJ_i$ are linearly independent with probability 1, since $v^{(\ell)}$ has length $\Delta_B = k_B$. Thus, there is no setting of $\alpha^{(i)}_{\ell}$'s for which the above sum can be forced to the zero vector.


The product of the coded $\bfA$ and $\bfB$ matrices can be represented by $u^{(\ell,j)} \otimes v^{(\ell)}$ for $j = 0, 1, \dots, \ell_c -1$ and $\ell \in \calI$. We will now show that the overall collection of vectors that we obtain is linearly independent with probability 1. To see this suppose that there exist coefficients $\alpha^{(i)}_{\ell}$'s and $\kappa^{(j)}_\ell$'s not all zero such that
\begin{align*}
    \sum_{i = 0}^{\Delta_A -1} e_i \otimes  \sum_{\ell \in \calJ_i} \alpha^{(i)}_{\ell} v^{(\ell)} 
    = \sum_{\ell \in \calI} \sum_{j=0}^{\ell_c -1} \kappa^{(j)}_\ell u^{(\ell,j)} \otimes v^{(\ell)} \nonumber = \sum_{\ell \in \calI} \sum_{j=0}^{\ell_c -1} \kappa^{(j)}_\ell \sum_{j_1=0}^{\Delta_A-1} u^{(\ell,j)}_{j_1} e_{j_1} \otimes v^{(\ell)}.
\end{align*} 
It can be observed that this decouples into finding solutions for 
\begin{align}
    e_i \otimes \sum_{\ell \in \calJ_i} \alpha^{(i)}_{\ell} v^{(\ell)} = e_i \otimes \sum_{\ell \in \calI} \sum_{j=0}^{\ell_c -1} \kappa^{(j)}_\ell  u^{(\ell,j)}_{i} v^{(\ell)} \label{eq:opt_mat_2}
\end{align}
where the $\alpha^{(i)}_\ell$ values on the LHS can be chosen freely given the RHS. For a given choice of the $\kappa^{(j)}_\ell$'s the above equation can definitely be satisfied if $|\calJ_i|=k_B$. If we $|\calJ_i| < k_B$ then this may not be true depending on the values of the $\kappa^{(j)}_\ell$'s.

The $n-k_A k_B$ stragglers together contain $(n-k_A k_B)\Delta_A k_B/n$ uncoded block-columns of $\bfA$. It is not too hard to see that not all $\bfA_i$'s that appear within the stragglers appear $k_B$ times within the stragglers (see Appendix \ref{aux_argument}). Thus, the number of $\bfA_i$'s with $|\calJ_i| < k_B$ is $\geq (n-k_A k_B)\Delta_A/n + 1$.

In the argument below we only consider the $\alpha^{(i)}_\ell$'s corresponding to these uncoded block-columns and suppose that there is an assignment of $\alpha^{(i)}_\ell$'s that satisfy (\ref{eq:opt_mat_2}). In this case the problem of finding the corresponding $\kappa^{(j)}_\ell$'s is equivalent to solving a block system of equations described below.

Let $\bfA_\delta$ be an uncoded block-column that appears less than $k_B$ times in $\calI$. The block row corresponding to it ({\it cf.} (\ref{eq:opt_mat_2})) is given by $\tilde{\bfV} \odot \tilde{\bfU}$ where
\begin{align*}
\tilde{\bfV}  =    &\begin{bmatrix}
    \bovermat{$\ell_c$}{v^{(i_0)}~ \dots~ v^{(i_0)}} | ~\dots~| \bovermat{$\ell_c$}{v^{(i_{k-1})}~ \dots~ v^{(i_{k-1})}}
    \end{bmatrix}, \;\; \textrm{and} \\ 
    \tilde{\bfU} = & [u^{(i_0,0)}_\delta~\dots~ u^{(i_0,\ell_c-1)}_\delta| ~\dots~| u^{(i_{k-1},0)}_\delta~\dots~ u^{(i_{k-1},\ell_c-1)}_\delta] \label{eq:khatri-rao}
\end{align*}
where $\odot$ represents the Khatri-Rao product that corresponds to column-wise Kronecker products.

Appendix \ref{app:argumentKR} shows that the concatenation of block rows in $\tilde{\bfV} \odot \tilde{\bfU}$ corresponding to the different $\bfA_\delta$'s is such that any $\ell_c k_A k_B \times \ell_c k_A k_B$ matrix is full rank with probability-1. This implies that from the first $\ell_c k_A$ block rows we can decode all the $\kappa^{(i)}_\ell$'s.

On the other hand the equations in (\ref{eq:opt_mat_2}) need to be satisfied for at least $(n-k_A k_B)\Delta_A/n + 1$ different $\bfA_i$'s based on the argument above. However
\begin{align*}
(n-k_A k_B)\frac{\Delta_A}{n} = \Delta_A - \frac{k_A k_B \Delta_A}{n} = k_A \left( \frac{\Delta_A}{k_A} - \frac{\Delta}{n} \right) = \ell_c k_A \; \; \; \; \textrm{and thus,} \; \;     (n-k_A k_B)\Delta_A/n + 1 > \ell_c k_A;
\end{align*}
This implies that there is at least one equation that need to be satisfied with a fixed choice of the $\kappa^{(i)}_\ell$'s. But this probability is zero since each of the remaining equations involve random $u^{(\ell,i)}_\delta$ values that have not appeared in the first $\ell_c k_A$ block rows.
\end{proof}

\begin{theorem}
\label{thm:optimalQ}
Alg. \ref{Alg:Optimal_Matmat} proposes a distributed matrix-matrix multiplication scheme with $Q = \Delta + (k_B - 1) \ell_c $.
\end{theorem}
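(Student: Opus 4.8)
The plan is to read $Q$ off the processing-order model: $Q-1$ is the largest number of products that can be returned, as \emph{prefixes} of the workers' task lists, while the returned equations still fail to span the space of the $\Delta=\Delta_A k_B$ unknowns $M_{ij}=\bfA_i^T\bfB_j$. Following the vectorization in the proof of Theorem~\ref{thm:opt_straggler_resilience}, I write each uncoded product of worker $\ell$ as $e_i\otimes v^{(\ell)}$ and each coded product as $u^{(\ell,j)}\otimes v^{(\ell)}$, so that decoding fails exactly when there is a nonzero $\Delta_A\times k_B$ matrix $M$ (with rows $w^{(i)}$) satisfying $a^T M v^{(\ell)}=0$ for every returned $\bfA$-side vector $a$; equivalently $(Mv^{(\ell)})_i=0$ for each returned uncoded index $i$ in the set $T_\ell$ of worker $\ell$, and $u^{(\ell,j)T}Mv^{(\ell)}=0$ for each returned coded slot $j$. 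I will then prove the two matching bounds $Q\ge\Delta+(k_B-1)\ell_c$ and $Q\le\Delta+(k_B-1)\ell_c$.

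For the lower bound I will exhibit a returning pattern with exactly $\Delta+(k_B-1)\ell_c-1$ products that is undecodable, certified by a rank-one matrix $M=e_{i^*}b^T$. Choosing $i^*\equiv-1\pmod{\Delta_A/n}$ guarantees that $\bfA_{i^*}$ sits in the last uncoded slot of one of the $k_B$ workers in which it appears uncoded; call that worker $\ell_0$, let $S$ be the other $k_B-1$ such workers, and pick $b\ne0$ orthogonal to $\{v^{(\ell)}:\ell\in S\}$ (possible since $|S|=k_B-1<k_B$). Computing $a^T(e_{i^*}b^T)v^{(\ell)}=a_{i^*}\langle b,v^{(\ell)}\rangle$ shows that every product of the $S$-workers is consistent with $M$ (those $k_B-1$ workers may return all $\Delta_A/k_A$ of their products), that each of the remaining $n-k_B$ workers (whose $T_\ell$ avoids $i^*$) may return all $\Delta/n$ uncoded products but no coded one, and that $\ell_0$ may return its first $\Delta/n-1$ uncoded products. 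Using $\Delta_A/k_A=\Delta/n+\ell_c$, the total is $(k_B-1)\tfrac{\Delta_A}{k_A}+(n-k_B)\tfrac{\Delta}{n}+\bigl(\tfrac{\Delta}{n}-1\bigr)=\Delta+(k_B-1)\ell_c-1$, and a nonzero kernel persists, so $Q\ge\Delta+(k_B-1)\ell_c$.

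For the upper bound I will show every failing pattern returns at most $\Delta+(k_B-1)\ell_c-1$ products. Fix a nonzero kernel element $M$. Since each worker returns a compatible prefix, worker $\ell$ returns at most $L_\ell(M)$ products, where $L_\ell(M)$ is the longest prefix of its list consistent with $M$; hence the total is at most $\sum_\ell L_\ell(M)$. First, at most $k_B-1$ workers can satisfy $Mv^{(\ell)}=0$, for otherwise $k_B$ of the generic vectors $v^{(\ell)}$ would span $\mathbb{R}^{k_B}$ inside $\ker M$ and force $M=0$; a worker with $Mv^{(\ell)}\ne0$ cannot extend its compatible prefix into the coded region (each coded slot needs $u^{(\ell,j)}\perp Mv^{(\ell)}$), so $L_\ell(M)\le\Delta/n$ there while the $\le k_B-1$ workers with $Mv^{(\ell)}=0$ contribute $\Delta_A/k_A$, giving $\sum_\ell L_\ell(M)\le\Delta+(k_B-1)\ell_c$. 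The extra $-1$ comes from noting that not all workers with $Mv^{(\ell)}\ne0$ can have every uncoded coordinate $(Mv^{(\ell)})_i,\ i\in T_\ell,$ vanish: if they did, each row $w^{(i)}$ would be orthogonal to the $k_B$ spanning vectors $v^{(\ell)}$ of the workers where $\bfA_i$ is uncoded, forcing $w^{(i)}=0$ for all $i$ and hence $M=0$. Thus at least one such worker blocks an uncoded product, giving $L_\ell(M)\le\Delta/n-1$ and $\sum_\ell L_\ell(M)\le\Delta+(k_B-1)\ell_c-1$.

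The main obstacle is the genericity bookkeeping in the coded-region step: because $M$ depends on the random coefficients $u^{(\ell,j)}$, each of which appears in exactly one returned equation, I must rule out a worker with $Mv^{(\ell)}\ne0$ returning a coded product through an accidental orthogonality $u^{(\ell,j)}\perp Mv^{(\ell)}$. I plan to handle this by choosing $M$ generic within the kernel and, for each coded product, deleting its singly-appearing vector $u^{(\ell,j)}$ and arguing on the resulting kernel $\ker(C^-)$, which is independent of that vector: the deleted constraint is either vacuous—whence $Mv^{(\ell)}=0$ and the product was redundant—or it strictly drops the kernel dimension by one. Tracking these dimension drops against the requirement that the final kernel be nonzero is the delicate point, and it is exactly what forces the number of un-returned products to be at least $(n-k_B+1)\ell_c+1$, completing the matching upper bound.
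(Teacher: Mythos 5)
Your lower-bound half is sound, and it is something the paper's own proof never writes down: the paper only establishes that every prefix-consistent pattern of $\Delta+(k_B-1)\ell_c$ products is decodable, leaving the matching failing pattern implicit. Your rank-one certificate $M=e_{i^*}b^T$, with $b$ orthogonal to the $k_B-1$ vectors $v^{(\ell)}$, $\ell\in S$, combined with the placement fact that some worker carries $\bfA_{i^*}$ in its last uncoded slot (your congruence $i^*\equiv -1 \pmod{\Delta_A/n}$ is the right condition), yields a valid prefix state with exactly $\Delta+(k_B-1)\ell_c-1$ returned products whose equations all annihilate $M\neq 0$; hence $Q\ge\Delta+(k_B-1)\ell_c$.

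The upper bound is where your route genuinely diverges from the paper's, and it is also where the gap sits --- exactly where you flag it. As first stated, ``a worker with $Mv^{(\ell)}\ne 0$ cannot return a coded product'' is circular: if that product is returned then $u^{(\ell,j)}\perp Mv^{(\ell)}$ holds by definition of $M$ being in the kernel, and since $M$ depends on $u^{(\ell,j)}$ you cannot dismiss this as a measure-zero event. Your proposed repair does close the gap, but the deferred accounting is the whole content of the step: insert all returned uncoded rows first (they are generically linearly independent, so the kernel after them has dimension exactly $\lambda$, the number of missing uncoded products); then insert the coded rows one at a time, each carrying a fresh $u^{(\ell,j)}$ occurring in no other row, so each insertion either strictly drops the kernel dimension by one or is vacuous, and vacuity at insertion time forces $Mv^{(\ell)}=0$ on the final kernel, confining vacuous rows to at most $k_B-1$ workers, i.e.\ at most $(k_B-1)\ell_c$ rows. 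A nonzero final kernel then tolerates at most $\lambda-1$ non-vacuous coded rows, giving a total of at most $(\Delta-\lambda)+(\lambda-1)+(k_B-1)\ell_c=\Delta+(k_B-1)\ell_c-1$. Until that chain is written out, your upper bound is a plan rather than a proof. The paper avoids the dependence issue entirely by arguing primally: for a fixed pattern it exhibits one assignment of the $u^{(\ell,j)}$'s (resetting $\eta_i$ of them to $e_i$ for each deficient $\bfA_i$) that makes a selected $\Delta\times\Delta$ subsystem block diagonal with nonsingular $k_B\times k_B$ blocks of distinct $v^{(\ell)}$'s, and then invokes genericity of the rank --- the cheaper way to get the direction the theorem actually needs, whereas your dual counting buys the converse bound as a bonus at the price of the kernel-dimension bookkeeping above.
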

\begin{proof}
As in the proof of the previous result, we let $u^{(\ell,j)}$ for $j = 0, \dots, \ell_c -1$ denote the $j$-th random encoding vector for $\bfA$ in worker $W_\ell$ and $v^{(\ell)}$ the corresponding random encoding vector for $\bfB$. We will demonstrate that the system of equations that corresponding to decoding the $\bfA_i^T \bfB_j$'s is nonsingular with probability 1. Let $e_i$ denote the $i$-th unit vector of length $\Delta_A$. For a given $\bfA_i$, suppose that it appears uncoded in $\calJ_i$ worker nodes where $|\calJ_i| \leq k_B$ we obtain certain equations from the uncoded part which correspond to $e_i \otimes v^{(\ell)}$ for $\ell \in \calJ_i$. If $|\calJ_i| < k_B$ then it needs to use the coded-coded products for decoding the unknowns corresponding to $\bfA_i$.

The block system of equations under consideration corresponds to a $\Delta_A k_B \times \Delta_A k_B$ square matrix with random entries. For $\bfA_i$ such that $|\calJ_i| = k_B$ the matrix consists of a $k_B \times k_B$ block on the diagonal with $k_B$ distinct vectors $v^{(\ell)}$. This block is nonsingular with probability-1 owing to the random choice of the $v^{(\ell)}$'s.

For the other $\bfA_i$'s where $|\calJ_i| < k_B$ we will demonstrate a setting of the  $u^{(\ell,j)}$'s such that the entire matrix is a block diagonal matrix with $k_B \times k_B$ blocks of distinct $v^{(\ell)}$ vectors. This demonstrates that there exists a choice of random coefficients for which the system of equations is nonsingular. Following this the result holds with probability-1 when the choice is made at random.

Towards this end, suppose that the pattern of obtained products is such that we get $\Delta- \lambda$ uncoded-coded  products and $\lambda + (k_B-1)\ell_c$ coded-coded products. 
Without loss of generality we assume that we need to decode the products that involve $\bfA_0, \bfA_1, \dots, \bfA_{\delta-1}$ using the coded-coded products.
Furthermore we suppose that $\bfA_i$ appears $k_B-\eta_i$ times within the uncoded-coded products, so that $\eta_0 + \eta_1 + \dots + \eta_{\delta-1} = \lambda$. 

Under this setting, there are at least $(k_B-1)\ell_c + \lambda - (k_B-\eta_0) \ell_c = (\eta_0 -1)\ell_c + \lambda$ coded-coded products that can be obtained from worker nodes that do not contain an uncoded copy of $\bfA_0$. Furthermore, these are spread out in at least $\eta_0$ distinct worker nodes. Next, we pick $\eta_0$ encoding vectors for $\bfA$ from the $\eta_0$ distinct workers and set them all to $e_0$. With this setting we obtain a $k_B \times k_B$ block (corresponding to decoding $\bfA_0^T \bfB_j, j = 0, \dots, \Delta_B-1$) that consists of distinct $v^{(\ell)}$ vectors that are nonsingular with probability $1$.

At this point we are left with $(k_B-1)\ell_c + \lambda - \eta_0$ coded-coded products. The argument can be repeated for $\bfA_1$ since there are at least $(\eta_1 -1)\ell_c + \lambda - \eta_0$ coded-coded products that can be obtained from workers where $\bfA_1$ does not appear, which in turn correspond to at least $\eta_1$ distinct workers. In this case we will set the $\eta_1$ encoding vectors to $e_1$. The process can be continued in this way until the coded-coded products are assigned to each of $\bfA_0, \bfA_1, \dots, \bfA_{\delta-1}$.

At the end of the process we can claim that we have a block diagonal matrix where each block is a $k_B \times k_B$ square matrix with distinct $v^{(\ell)}$ vectors. Thus each block and consequently the entire system of equations is nonsingular. 

Finally, as there exists a choice of random values that makes the system of equations nonsingular, it continues to be nonsingular with probability $1$ under a random choice.

\end{proof}

To summarize, Theorems \ref{thm:opt_straggler_resilience} and \ref{thm:optimalQ} demonstrate that our proposed scheme has the optimal threshold $k_A k_B$ and 
\begin{align*}
\frac{Q}{\Delta} &= 1 + \frac{(k_B - 1)\ell_c}{\Delta} = 1 + \frac{(k_B - 1)\left( \frac{\Delta_A}{k_A} - \frac{\Delta}{n}\right)}{\Delta} \\
&=  1 + \frac{\Delta (k_B - 1)\left( \frac{1}{k_A k_B} - \frac{1}{n}\right)}{\Delta} = 1 + \frac{(k_B - 1) s}{n k_A k_B} \approx 1 + \frac{s}{n k_A};
\end{align*} if $k_B$ is significantly larger than $1$. Moreover in the practical cases, we usually have $s << n k_A$, thus in this SCS optimal scheme, we have $Q/\Delta \approx 1$.


\begin{example}
\begin{figure*}[t]
\centering
\captionsetup{justification=centering}
\resizebox{0.85\linewidth}{!}{
\begin{tikzpicture}[auto, thick, node distance=2cm, >=triangle 45]
\draw
    node [sum1, minimum size = 1.2cm, fill=green!30] (blk1) {$W_0$}
    node [sum1, minimum size = 1.2cm, fill=blue!30,right = 1.7 cm of blk1] (blk2) {$W_1$}
    node [sum1, minimum size = 1.2cm, fill=green!30,right = 1.7 cm of blk2] (blk3) {$W_2$}
    node [sum1, minimum size = 1.2cm, fill=blue!30,right = 1.7 cm of blk3] (blk4) {$W_3$}
    node [sum1, minimum size = 1.2cm, fill=green!30,right = 1.7 cm of blk4] (blk5) {$W_4$}

    node [block, minimum width = 2.1cm, fill=green!30,below = 0.4 cm of blk1] (blk11) {$\bfA_0$}
    node [block, minimum width = 2.1cm, fill=green!30,below = 0.0005 cm of blk11] (blk12) {$\bfA_1$}
    node [block, minimum width = 2.1cm, fill=green!30,below = 0.0005 cm of blk12] (blk13) {$\bfA_2$}
    node [block, minimum width = 2.1cm, fill=green!30,below = 0.0005 cm of blk13] (blk14) {$\bfA_3$}
    node [block,  minimum width = 2.1cm, fill=green!30,below = 0.0005 cm of blk14] (blk19) {$\sum\limits_{i=0}^9 \bfR_A^{(0,i)} \bfA_i$}
    
    node [block,  minimum width = 2.1cm, fill=orange!50,below = 0.2 cm of blk19] (blk15) {$\sum\limits_{i=0}^1 \bfR_B^{(0,i)} \bfB_i$}

    node [block, minimum width = 2.1cm, fill=blue!30,below = 0.4 cm of blk2] (blk21) {$\bfA_2$}
    node [block, minimum width = 2.1cm, fill=blue!30,below = 0.0005 cm of blk21] (blk22) {$\bfA_3$}
    node [block, minimum width = 2.1cm, fill=blue!30,below = 0.0005 cm of blk22] (blk23) {$\bfA_4$}
    node [block, minimum width = 2.1cm, fill=blue!30,below = 0.0005 cm of blk23] (blk24) {$\bfA_5$}
    node [block,  minimum width = 2.1cm, fill=blue!30,below = 0.0005 cm of blk24] (blk29) {$\sum\limits_{i=0}^9 \bfR_A^{(1,i)} \bfA_i$}
    node [block,  minimum width = 2.1cm, fill=mycolor2!30,below = 0.2 cm of blk29] (blk25) {$\sum\limits_{i=0}^1 \bfR_B^{(1,i)} \bfB_i$}

    node [block, minimum width = 2.1cm, fill=green!30,below = 0.4 cm of blk3] (blk31) {$\bfA_4$}
    node [block, minimum width = 2.1cm, fill=green!30,below = 0.0005 cm of blk31] (blk32) {$\bfA_5$}
    node [block, minimum width = 2.1cm, fill=green!30,below = 0.0005 cm of blk32] (blk33) {$\bfA_6$}
    node [block, minimum width = 2.1cm, fill=green!30,below = 0.0005 cm of blk33] (blk34) {$\bfA_7$}
    node [block, minimum width = 2.1cm, fill=green!30,below = 0.0005 cm of blk34] (blk39) {$\sum\limits_{i=0}^9 \bfR_A^{(2,i)} \bfA_i$}
    node [block,  minimum width = 2.1cm,fill=orange!50,below = 0.2 cm of blk39] (blk35) {$\sum\limits_{i=0}^1 \bfR_B^{(2,i)} \bfB_i$}

    node [block, minimum width = 2.1cm, fill=blue!30,below = 0.4 cm of blk4] (blk41) {$\bfA_6$}
    node [block, minimum width = 2.1cm, fill=blue!30,below = 0.0005 cm of blk41] (blk42) {$\bfA_7$}
    node [block, minimum width = 2.1cm, fill=blue!30,below = 0.0005 cm of blk42] (blk43) {$\bfA_8$}
    node [block, minimum width = 2.1cm, fill=blue!30,below = 0.0005 cm of blk43] (blk44) {$\bfA_9$}
    node [block, minimum width = 2.1cm, fill=blue!30,below = 0.0005 cm of blk44] (blk49) {$\sum\limits_{i=0}^9 \bfR_A^{(3,i)} \bfA_i$}
    node [block,  minimum width = 2.1cm, fill=mycolor2!30,below = 0.2 cm of blk49] (blk45) {$\sum\limits_{i=0}^1 \bfR_B^{(3,i)} \bfB_i$}

    node [block, minimum width = 2.1cm, fill=green!30,below = 0.4 cm of blk5] (blk51) {$\bfA_8$}
    node [block, minimum width = 2.1cm, fill=green!30,below = 0.0005 cm of blk51] (blk52) {$\bfA_9$}
    node [block, minimum width = 2.1cm, fill=green!30,below = 0.0005 cm of blk52] (blk53) {$\bfA_0$}
    node [block, minimum width = 2.1cm, fill=green!30,below = 0.0005 cm of blk53] (blk54) {$\bfA_1$}
    node [block,  minimum width = 2.1cm,fill=green!30,below = 0.0005 cm of blk54] (blk59) {$\sum\limits_{i=0}^9 \bfR_A^{(4,i)} \bfA_i$}  
    node [block,  minimum width = 2.1cm,fill=orange!30,below = 0.2 cm of blk59] (blk55) {$\sum\limits_{i=0}^1 \bfR_B^{(4,i)} \bfB_i$}

    ;

\draw[->](blk1) -- node{} (blk11);
\draw[->](blk2) -- node{} (blk21);
\draw[->](blk3) -- node{} (blk31);
\draw[->](blk4) -- node{} (blk41);
\draw[->](blk5) -- node{} (blk51);

\end{tikzpicture}
}
\centering
\caption{\small Matrix-matrix multiplication with $n = 5$ and $s = 1$ with $\gamma_A = \gamma_B = \frac{1}{2}$. Here $\bfR_A$ and $\bfR_B$ are random matrices whose superscripts indicate their corresponding rows and columns.}

\label{opt_matmat}
\end{figure*}
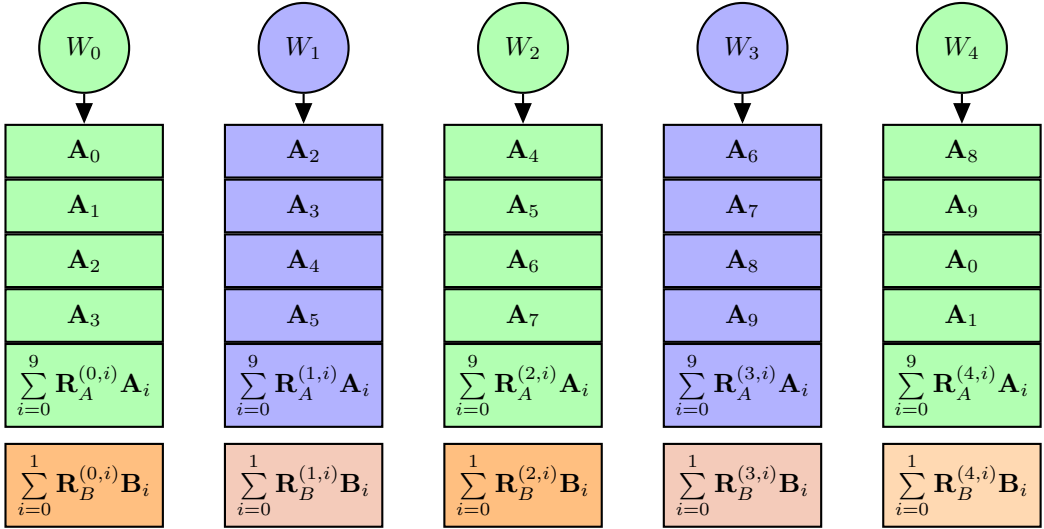 
We consider an example in Fig. \ref{opt_matmat} with $n = 5$ and $k_A = k_B = 2$, so the system is resilient $s = 5 - 4 = 1$ straggler. We set $\Delta_A = \textrm{LCM} (n, k_A) = 10$ and $\Delta_B = k_B = 2$, and in this example, $Q = 21$, thus $Q/\Delta = 1.05$.
\end{example}

\section{Numerical Experiments and Comparisons}
\label{sec:numerical_exp}
In this section, we discuss the results of the numerical experiments for our proposed approaches and compare them with other available methods. First we compare all the approaches in terms of number of stragglers that a scheme can be resilient to, and in terms of $Q$ values. Next we compare the approaches in terms of the worker computation time and numerical stability during the decoding process. Software code for recreating these experiments can be found at \cite{anindyacode2}.

\subsection{Number of stragglers and Q value}
Table \ref{tabmatvec} shows the comparison for matrix-vector multiplication for $n = 30$ workers, each of which can store $\gamma_A = \frac{1}{10}$ fraction of matrix $\bfA$. For the convolutional code approach, we assume $s = 15$ so that $n - s = 15 > \frac{1}{\gamma_A} = 10$ which satisfies the required condition in \cite{das2019random}. And for the coded at bottom approach, we assume $\gamma_u = \frac{1}{15}$ and $\gamma_c = \frac{1}{30}$, so that $\gamma = \gamma_u + \gamma_c$. Similarly, Table \ref{tabmatmat} shows the comparison for different approaches for matrix-matrix multiplication for $n = 18$ workers, each of which can store $\gamma_A = \frac{1}{3}$ and $\gamma_B = \frac{1}{3}$ fraction of matrices $\bfA$ and $\bfB$ respectively. Here we assume $k_A = k_B = 4 > \frac{1}{\gamma_A} = \frac{1}{\gamma_B} = 3$ for the approach in \cite{das2019random}. 

\begin{table*}[t]
\captionsetup{justification=centering}
\caption{{\small Comparison of number of stragglers, $Q$ values, worker computation time (in $ms$) and worst case condition number$(\kappa_{worst})$ for matrix-vector multiplication for  $n = 30$ and $\gamma_A = \frac{1}{10}$ (*for convolutional code, we assumed $s = 15$).}}
\label{tabmatvec}
\begin{center}
\begin{small}
\begin{sc}
\begin{tabular}{c c c c c c}
\hline
\toprule
\multirow{2}{*}{Methods} & \multirow{2}{*}{Stragglers} & $\; \; \;$\multirow{2}{*}{$\frac{Q}{\Delta}$ value}$\; \; \;$ & \multicolumn{2}{c}{$\; \; \; $ Worker Computation Time $\; \; \;$} & $\; \; \;$\multirow{2}{*}{$\kappa_{worst}$}$\; \; \;$ \\ \cline{4-5}
 & & & Sparsity $98\%$ & Sparsity $95\%$ &  \\
 \midrule
Polynomial Code  \cite{yu2017polynomial} & $20$ & $N/A$ &  $62.8$ & $87.1$  & $5.99 \times 10^9$\\
Ortho-Poly Code \cite{8849468} & $20$ & $N/A$ & $62.3$ & $86.4$  & $4.34 \times 10^{11}$\\
RKRP Code\cite{8919859} & $20$ & $N/A$ &  $62.9$ & $86.8$  & $ 5.44 \times 10^8$\\
Convolutional Code* \cite{das2019random} & $15$ & $N/A$ &   $63.1$ & $87.7$  & $6.24 \times 10^4$\\
Uncoded \cite{c3les}   & $2$ & $85/30$ &  $19.1$ & $32.9$  & $1.7321$\\
Uncoded (proposed)   & $2$ & $84/30$ &  $19.2$ & $33.1$  & $1.7321$\\
$\beta$-level coding ($\beta = 2$) & $4$ & $81/30$ &  $25.3$ & $40.2$  & $242.89$\\
$\beta$-level coding ($\beta = 3$) & $6$ & $79/30$ &  $29.2$ & $47.9$  & $1.53 \times 10^3$\\
Coded at Bottom   & $15$ & $58/30$ & $24.1$ & $37.8$ & $1.41 \times 10^3$\\
\bottomrule
\end{tabular}
\end{sc}
\end{small}
\end{center}
\end{table*}%

\begin{table*}[t]
\captionsetup{justification=centering}
\caption{{\small Comparison of number of stragglers, $Q$ values, worker computation time (in seconds) and worst case condition number $(\kappa_{worst})$ for matrix-matrix multiplication for $n = 18$ and $\gamma_A = \gamma_B = \frac{1}{3}$ (*for convolutional code, $k_A = k_B = 4$).}}
\label{tabmatmat}
\begin{center}
\begin{small}
\begin{sc}
\begin{tabular}{c c c c c c}
\hline
\toprule
\multirow{2}{*}{Methods} & \multirow{2}{*}{ Stragglers} & \multirow{2}{*}{$\frac{Q}{\Delta}$ value} & \multicolumn{2}{c}{Worker Computation Time} & \multirow{2}{*}{$\; \; \; \kappa_{worst} \; \; \; $} \\ \cline{4-5}
 & & & Sparsity $98\%$ & Sparsity $95\%$ &   \\
 \midrule
Polynomial Code  \cite{yu2017polynomial} & $9$ & $N/A$ & $2.58$ & $10.16$ & $7.33 \times 10^{6}$\\
Ortho-Poly Code \cite{8849468} & $9$ & $N/A$   & $2.51$ & $10.08$ & $1.33 \times 10^{7}$\\
RKRP Code\cite{8919859} & $9$ & $N/A$   & $2.63$ & $10.23$ & $2.15 \times 10^5$\\
Convolutional Code* \cite{das2019random} & $2$ & $N/A$ & $2.44$  & $10.19$  & $1.82 \times 10^3$\\
Uncoded (proposed)   & $1$ & $17/9$ & $0.69$ & $1.96$ & $1.41$\\
$\beta$-level coding ($\beta_A = \beta_B = 2$) & $4$ & $16/9$ & $1.02$  & $3.68$ & $8.89 \times 10^3$\\
\bottomrule
\end{tabular}
\end{sc}
\end{small}
\end{center}
\end{table*}%
In case of both matrix-vector and matrix-matrix multiplications, we know that the dense coded approaches \cite{yu2017polynomial}, \cite{8919859}, \cite{das2019random} and \cite{8849468} are MDS but they do not consider the partial computations of the slower workers. On the other hand, our proposed approaches are able to utilize the partial computations of the stragglers for both matrix-vector and matrix-matrix multiplications. We can see that the $\beta$-level coding approaches, with $\beta = 2$ or $3$, have smaller $Q/\Delta$ values than the uncoded approaches, one of which is introduced in \cite{c3les} and the other is a special case of our proposed $\beta$-level coding where $\beta = 1$. We emphasize that a larger value of $\beta$ or a larger value of $\gamma_c$ will provide smaller values of $Q/\Delta$ for our proposed $\beta$-level coding approach and the coded-at the bottom scheme, respectively. It should be noted that the approach in \cite{das2019random} requires the condition $n - s > \frac{1}{\gamma}$ to be full-filled to be resilient to $s$ stragglers, so as mentioned in Tables \ref{tabmatvec} and \ref{tabmatmat}, this convolutional code-based approach is resilient to less number of stragglers than the other dense coded approaches.

\subsection{Worker Computation Time}
We compare the computation time required by the workers in case of different approaches by experiments performed on an Amazon Web Services (AWS) cluster where we choose a {\tt t2.2xlarge} machine as the master node and {\tt t2.small} machines as the worker nodes, which are, in fact, responsible for computing the submatrix products. 

For matrix-vector multiplication, We choose a matrix $\bfA$ of size $40,000 \times 17,640$ and a vector $\bfx$ of length $40,000$, and the job is to compute $\bfA^T \bfx$ in a distributed fashion. We assume that the matrix $\bfA$ is sparse, which indicates that the most of the entries of $\bfA$ are $zero$. For example, the sparsity of $\bfA$ can be $98\%$ (or $95\%$), which indicates that randomly chosen $2\%$ (or $5\%$) entries of matrix $\bfA$ are non-zero. We consider the same scenario where we have $n = 30$ workers, each of which can store $\gamma_A = \frac{1}{10}$ fraction of matrix $\bfA$. The comparison among different approaches for different sparsity values is shown in Table \ref{tabmatvec}. Next a similar experiment is carried out for matrix-matrix multiplication where both $\bfA$ and $\bfB$ are sparse and of sizes $12000 \times 13680$ and $12000 \times 10260$, respectively, and the corresponding results are shown in Table \ref{tabmatmat}.

From the experimental results shown in Tables \ref{tabmatvec} and \ref{tabmatmat}, we can see that the workers require much more time to complete their assigned jobs in case of the dense coded approaches (\cite{yu2017polynomial}, \cite{8919859}, \cite{das2019random} and \cite{8849468}) than our proposed approaches. The reason is that the dense coded approaches cannot preserve the sparsity of the matrices $\bfA$ or $\bfB$, so the corresponding coded submatrices are quite dense even if $A$ and $B$ are sparse. On the other hand, our proposed approaches can preserve the sparsity in the submatrices, and can complete the jobs $3 \sim 4$ times faster than the available approaches. It should be noted that a smaller value of $\beta$ or a smaller value of $\gamma_c$ will lead to less worker computation time for our proposed $\beta$-level coding approach and the coded at the bottom scheme, respectively. 

We note here that while there is a significant difference between the required time of the dense coded approaches and our proposed approaches, this difference can be much higher. For example, in Table \ref{tabmatvec}, we can see that the polynomial code approach is around $3 \sim 4$ times slower than the uncoded approach, but the gap according to the theoretical analysis should be as large as $10$ times, since $\gamma_A = 1/10$. The reason underlying the smaller gap is the use of two different commands in {\tt Python} to compute products between the matrix and the vector. Since the proposed uncoded or the $\beta$-level coding approaches can preserve the sparsity up to certain level, we have leveraged the sparse matrix-multiplication commands in these cases, whereas for the dense coded approaches which cannot preserve the sparsity, regular matrix-multiplication command provided better results. A more optimized sparse matrix-multiplication scheme could result in bigger multiplicative gaps between these approaches. Furthermore, the difference of the required time would be certainly higher and more significant if the matrix sizes were higher (for example, in millions). However, owing to the memory limitations of the machines that we are using (in this case, {\tt t2.small}), we cannot conduct experiments with such large matrices.

\subsection{Numerical Stability}
Now we do another experiment to compare the numerical stability of different schemes. We know that for decoding a system of equations, errors in the input can get amplified by the condition number (ratio of maximum and minimum singular values) of the associated decoding matrix; hence, a low condition number is critical \cite{das2019random,ramamoorthy2019numerically}. For example, let us consider the polynomial codes \cite{yu2017polynomial} for matrix vector multiplication, where each of $n$ workers can store $\gamma = \frac{1}{k}$ fraction of matrix $\bfA$. Now partitioning $\bfA$ into $\Delta = k$ submatrices lead to $\Delta$ unknowns, $\bfA^T_0 \bfx, \bfA^T_1 \bfx, \dots, \bfA^T_{\Delta - 1} \bfx$. Now in order to assign the coded jobs to $n$ workers, we need to choose a polynomial of degree $k-1$ and $n$ evaluation points, thus the coding matrix is of size $n \times k$. Since the recovery threshold here is $\tau = k$, we are interested in all choices of $k \times k$ submatrices of that $n \times k$ coding matrix. It can be shown that the system will be numerically more stable in the worst case if the evaluation points are chosen uniformly spaced in $[-1, 1]$, rather than choosing the integers $1, 2, \dots, n$ \cite{TangKR19}. In other words, choosing interpolation points uniformly spaced in $[-1, 1]$ will lead to a smaller worst case condition number $(\kappa_{worst})$.

In this experiment we compare the condition numbers for different approaches in case of the worst choice of full stragglers. Tables \ref{tabmatvec} and \ref{tabmatmat} show the comparison of worst case condition numbers $(\kappa_{worst})$ for matrix-vector and matrix-matrix multiplication, respectively, for the previously chosen scenario. We can see that the dense coded approaches (\cite{yu2017polynomial}, \cite{8919859} and \cite{8849468}) have a very high worst case condition number, thus suffer from numerical instability which leads to erroneous results. On the other hand, our proposed $\beta$-level coding approach has a much smaller worst case condition number. The reason is that even in the worst case, the decoding of some $\beta$ unknowns depends on a $\beta \times \beta$ system matrix whose entries are randomly chosen. Thus a smaller $\beta$ leads to a smaller $\kappa_{worst}$, for example, we can see that the uncoded case (same as the case with $\beta = 1$) is the scheme having the smallest $\kappa_{worst}$.

\subsection{Comparison with the Proposed SCS Optimal Scheme}
In this experiment, we compare the dense coded approaches with our proposed SCS optimal coding scheme in terms of $Q$ values and worker computation time. First we do the comparison for matrix-vector multiplication where we choose a square sparse matrix $\bfA$ of size $27,720 \times 27,720$, and a vector $\bfx$ of length $27,720$. The job is to compute $\bfA^T \bfx$ in a distributed system of $n = 18$ workers, each of which can store $\gamma_A = \frac{1}{15}$ fraction of matrix $\bfA$. We consider two different choices of matrix $\bfA$. In the first case, $\bfA$ is a band matrix \cite{atkinson2008introduction} where the entries are non-zero along the principal diagonal and in $1000$ other $k$-diagonals just above and below the principal diagonal. In the second case, the entries are non-zero along the principal diagonal and in $2000$ other randomly chosen $k$-diagonals. The comparison is shown in Table \ref{tabmatvec2} where we can see that the proposed SCS optimal scheme requires less time from the worker nodes in comparison to the other dense coded approaches, which in fact, cannot leverage the sparsity of matrix $\bfA$. 

Next to show an example for distributed matrix-matrix multiplication, we choose two random sparse matrices $\bfA$ and $\bfB$ of sizes $12000 \times 15000$ and $12000 \times 13500$, where randomly chosen any $2\%$ and $5\%$ entries are non-zero. We consider a distributed system having $n = 24$ workers, each of which can store $\gamma_A = \frac{1}{4}$ fraction of matrix $\bfA$ and $\gamma_B = \frac{1}{5}$ fraction of matrix $\bfB$. The comparison is shown in Table \ref{tabmatmat2} which further confirms the superiority of the proposed SCS optimal scheme in terms of workers' computation speeds. The major reason behind the enhancement of the speed in the SCS optimal scheme lies in its ability to leverage the sparsity of the matrices up to certain level, whereas the approaches in \cite{yu2017polynomial} or \cite{8849468} use the dense linear combinations of the submatrices which destroy the sparsity. The approaches in \cite{das2019random} and \cite{8919859} consider some parity worker nodes where all the assigned submatrices are dense, which leads to high worker computation time for those workers. On the other hand, in the proposed SCS optimal scheme the submatrices, obtained from dense linear combinations, are assigned uniformly within the workers. This removes the asymmetry between the worker node computation times.

\begin{table*}[t]
\captionsetup{justification=centering}
\caption{{\small Comparison of $Q$ values, worker computation time (in $ms$) and worst case condition number $(\kappa_{worst})$ for matrix-vector multiplication for $n = 18, \gamma_A = \frac{1}{15}$ (*for convolutional code, we assume $\gamma_A = \frac{1}{10}$).}}
\label{tabmatvec2}
\begin{center}
\begin{small}
\begin{sc}
\begin{tabular}{c c c c c c}
\hline
\toprule
\multirow{2}{*}{Methods} & \multirow{2}{*}{No of Stragglers}& \multirow{2}{*}{$\frac{Q}{\Delta}$ value} & \multicolumn{2}{c}{Worker Computation Time} & \multirow{2}{*}{$\; \; \; \kappa_{worst} \; \; \; $} \\ \cline{4-5}
 & & & $\;\;\;$ Band $\;\;\;$ & $\;\;\;$ Random $\;\;\;$ &   \\
 \midrule
Polynomial Code  \cite{yu2017polynomial} & $3$ &  $N/A$ & $29.7$ & $30.2$ & $4.03 \times 10^7$ \\
Ortho-Poly Code \cite{8849468}  & $3$ & $N/A$   & $30.1$ & $29.8$ & $2.13 \times 10^4$\\
RKRP Code\cite{8919859} & $3$ & $N/A$   & $29.3$ & $30.0$ & $6.35 \times 10^3$ \\
Convolutional Code* \cite{das2019random}  & $3$ & $N/A$ & $35.2$  & $34.7$  & $1.21 \times 10^3$\\
SCS Optimal Scheme   & $3$ & $1$ & $14.8$ & $20.3$ & $6.81 \times 10^4$\\
\bottomrule
\end{tabular}
\end{sc}
\end{small}
\end{center}
\end{table*}%

\begin{table*}[t]
\captionsetup{justification=centering}
\caption{{\small Comparison of $Q$ values, worker computation time (in seconds) and worst case condition number $(\kappa_{worst})$ for matrix-matrix multiplication for $n = 24, \gamma_A = \frac{1}{4}$ and $\gamma_B = \frac{1}{5}$ (*for convolutional code, we assume $\gamma_A = \frac{2}{5}$ and $\gamma_B = \frac{1}{3}$). The values in the parentheses for the SCS optimal scheme shows the time required for uncoded and coded portions, respectively.}}
\label{tabmatmat2}
\begin{center}
\begin{small}
\begin{sc}
\begin{tabular}{c c c c c c}
\hline
\toprule
\multirow{2}{*}{Methods} & \multirow{2}{*}{No of Stragglers}& \multirow{2}{*}{$\frac{Q}{\Delta}$ value} & \multicolumn{2}{c}{Worker Computation Time} & \multirow{2}{*}{$\; \; \; \kappa_{worst} \; \; \; $} \\ \cline{4-5}
 & & & Sparsity $98\%$ & Sparsity $95\%$ &   \\
 \midrule
Polynomial Code  \cite{yu2017polynomial} & $4$ &  $N/A$ & $3.11$ & $8.29$ & $2.40 \times 10^{10}$\\
Ortho-Poly Code \cite{8849468}  & $4$ & $N/A$   & $3.08$ & $8.16$ & $1.96 \times 10^{6}$\\
RKRP Code\cite{8919859} & $4$ & $N/A$   & $3.15$ & $8.22$ & $2.83 \times 10^5$\\
Convolutional Code* \cite{das2019random}  & $4$ & $N/A$ & $5.16$  & $10.92$  & $2.65 \times 10^4$\\
\multirow{2}{*}{SCS Optimal Scheme}   & \multirow{2}{*}{$4$} & \multirow{2}{*}{$7/6$} & $1.93$ & $4.76$ & \multirow{2}{*}{$4.93 \times 10^6$}\\
 & &  & $(0.91 + 1.02)$  & $(3.71+1.05)$  & \\

\bottomrule
\end{tabular}
\end{sc}
\end{small}
\end{center}
\end{table*}%

Now, similar to the most of the dense coded approaches \cite{yu2017polynomial}, \cite{8849468}, \cite{8919859}, our proposed SCS optimal scheme is also resilient to $s = n - k_A k_B $ stragglers, where $\gamma_A = \frac{1}{k_A}$ and $\gamma_B = \frac{1}{k_B}$. We point out that we did not compare with the approach in \cite{wang2018coded} since their approach does not respect the storage constraints for the matrices at each worker node and only has a high-probability guarantee on the recovery threshold. Similarly we did not compare with \cite{hasirciouglu2020bivariate} which assumes heterogeneous workers, but we note that this approach provides with a value of $Q/\Delta$ to be $11/10$ for the example shown in Table \ref{tabmatmat2}. 
Now, in the dense coded approaches, we can decode all $\Delta = k_A k_B$ unknowns from any $k_A k_B$ submatrix block products, and in that sense we have $\frac{Q}{\Delta} = 1$. But it does not necessarily mean that those scheme can utilize the partial computations done by the slower workers, since in those cases the master requires $k_A k_B$ workers to finish their jobs, and discard the computations done by others.

However, one can still use those approaches to utilize the partial computations, by partitioning the matrices into more submatrices. We can consider the an example of $n = 10$ workers with $\gamma_A = \gamma_B = 1/3$ and $98\%$ sparse matrices $\bfA$ and $\bfB$, both having size $12,000 \times 12,000$. Now we can partition matrix $\bfA$ into $\Delta_A = 3$ or $\Delta_A = 9$ submatrices for the dense coded approaches. We can see the comparison of $\kappa_{worst}$ and worker computation time in Table \ref{tab:numcond} for these two values of $\Delta_A$. In case of $\Delta_A = 9$, we will require polynomials of higher degrees (for \cite{yu2017polynomial} or \cite{8849468}) or more random coefficients (for \cite{8919859}) than in the case of $\Delta_A = 3$. It leads to a very high condition number ($\approx 10^{13}$) which will make the whole system numerically unstable. Besides, a larger $\Delta_A$ would make the submatrices even denser, which will lead to higher worker computation time for the workers. The case is similar for the work in \cite{hasirciouglu2020bivariate} which uses larger $\Delta_A$ and $\Delta_B$ to utilize the partial computations. On the other hand, in the proposed SCS optimal scheme, uncoded submatrices are placed at the top, and coded submatrices are placed at the bottom. Moreover the coded jobs are allocated uniformly among all the workers which does not let the worker computation time go high for any particular worker. 

\begin{table*}[t]
\captionsetup{justification=centering}
\caption{{\small Comparison of the $Q/\Delta$ values, worker computation time (in seconds) and worst case condition numbers for matrix-matrix multiplication for $n = 10, \gamma_A = \gamma_B = \frac{1}{3}$, so $s = 1$.}}
\label{tab:numcond}
\begin{center}
\begin{small}
\begin{sc}
\begin{tabular}{c c c c c c c c c}
\hline
\toprule
\multirow{2}{*}{Methods} & &  \multicolumn{3}{c}{W/o Partial Computations}  & &  \multicolumn{3}{c}{W/ Partial Computations} \\ \cline{3-5} \cline{7-9}
 & &$\frac{Q}{\Delta}$ value & $\kappa_{worst}$ & Worker time & & $\frac{Q}{\Delta}$ value & $\kappa_{worst}$ & Worker time \\ 
 \midrule
Poly Code  \cite{yu2017polynomial} &  & N/A & $8.8 \times 10^3$ & $2.46$ &  & $1$ & $1.86 \times 10^{13}$ & $7.09$\\
Ortho-Poly \cite{8849468}  & & N/A   & $16.66$ & $2.49$ &  & $1$ & $4.33 \times 10^{5}$ & $7.06$\\
RKRP Code\cite{8919859} & & N/A & $11.96$ & $2.41$ &  & $1$ & $1.16 \times 10^4$ & $7.14$\\
SCS Optimal Scheme &   & - & - & - & & $1.02$ & $2.15 \times 10^3$ & $2.04$\\
\bottomrule
\end{tabular}
\end{sc}
\end{small}
\end{center}
\end{table*}

\section{Conclusions and Future Work}
\label{sec:conclusion}
In this work we have presented several coded matrix computation schemes that (i) leverage partial computations by stragglers and (ii) impose constraints on the extent to which coding is allowed in the solution. 

The second feature is especially valuable in the practical case of computations with sparse matrices and provides significant reductions in worker node computation time and better numerical stability as compared to the previous schemes. Prior work has demonstrated schemes with optimal recovery threshold in certain cases. We present schemes that match the optimal threshold while enjoying lower worker node computation times and improved numerical stability. Exhaustive numerical experiments corroborate our findings.

There are several opportunities for future work. We have demonstrated that carefully chosen different parallel classes provide improved recovery thresholds and $Q/\Delta$ metrics for the matrix-vector problem. We expect that this should help even in the case of matrix multiplication. Schemes that apply for a larger range of storage fractions are also of interest. In this work we defined the value of $Q$ as the worst case number of symbols that allows for recovering the intended result. Analysis and constructions for the random case may be of interest.

\section{Acknowledgments}
The authors acknowledge interesting conversations with Dr. Li Tang and his participation in \cite{c3les}.
\appendix

\subsection{Properties of $\beta$-level Coding when $\beta > c$}
\label{app:diffc}
In Section \ref{sec:beta_level}, we have discussed $\beta$-level coding for distributed matrix computations when $c \geq \beta$ and here we prove the properties of $\beta$-level coding when $\beta > c$. The difference is that the constraint $c \geq \beta$ ensures that we will have at least $\beta$ worker groups, whereas it is not the case when $\beta > c$.

\subsubsection{Matrix-vector Multiplication}
Suppose that we have $n = c a_2$ workers, each of which can store $\gamma = \frac{a_1}{a_2}$ fraction of matrix $\bfA$. To incorporate $\beta$-level coding, matrix $\bfA$ is partitioned into $\Delta = \beta a_2$ block-columns, and thus each worker will be assigned $\ell = \Delta \gamma = \beta a_1$ jobs. It should be noted that we have $\frac{n}{\Delta/\beta} = c$ worker groups among the workers because of the cyclic fashion of job assignments.


\begin{lemma}
\label{lem:QDeltaarb}
If we use a single parallel class in Alg. \ref{Alg:betalevelcoding}, then the number of stragglers will be $s = c \ell - \beta$ and we will have
\begin{align*}
    Q = c \left[ \frac{\Delta}{\beta}\ell - \frac{\ell(\ell+1)}{2} \right] + c \sum\limits_{i=0}^{c_1 - 1} (\ell - i) + c_2 (\ell - c_1) + 1 
\end{align*} where $c_1 = \floor{\frac{\beta - 1}{c}}$ and $c_2 = \beta - 1 - c c_1$.
\end{lemma}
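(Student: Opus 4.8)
The straggler-resilience claim follows exactly as in the proof of Theorem \ref{thm:beta_matvecstrQ}, and I would simply observe that that argument never uses $c \geq \beta$. Every meta-symbol of $\calP$ appears in $\ell$ distinct workers within each of the $c$ groups ({\it cf.} Lemma \ref{lem:cyclicQ}), hence in $c\ell$ workers in total. Removing any $s = c\ell - \beta$ stragglers therefore leaves each meta-symbol present in at least $\beta$ workers, so the $\beta$ random equations associated with it suffice to decode its $\beta$ constituent unknowns with probability $1$. This part is unchanged when $\beta > c$.

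For $Q$, the plan is again to show $Q = Q' + 1$, where $Q'$ is the largest number of symbols that can be processed while some fixed meta-symbol $\star$ is processed at most $\beta - 1$ times; by symmetry the resulting bound is independent of $\star$. The key reduction is that the processing counts in distinct worker groups are \emph{independent}, since the groups occupy disjoint workers. If $\star$ is processed $m_g$ times in group $g$ (with $0 \le m_g \le \ell$), then Lemma \ref{lem:cyclicQ}, applied to a single group with $\Delta/\beta = a_2$ symbols, shows that group $g$ contributes at most $\alpha_{m_g} = \alpha_0 + \sum_{i=0}^{m_g-1}(\ell - i)$ symbols, where $\alpha_0 = \frac{\Delta}{\beta}\ell - \frac{\ell(\ell+1)}{2}$. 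Consequently
\[
Q' = \max\left\{ \textstyle\sum_{g=1}^{c}\alpha_{m_g} \; : \; 0 \le m_g \le \ell, \ \textstyle\sum_{g=1}^c m_g \le \beta - 1 \right\}.
\]

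This is a separable integer allocation problem, and I would solve it by water-filling. The marginal gain of raising $m_g$ from $k-1$ to $k$ equals $\ell - (k-1)$, which is strictly positive for $k \le \ell$ and strictly decreasing in $k$; thus each $\alpha_m$ is concave and increasing, the optimum uses the full budget $\sum_g m_g = \beta - 1$, and because the marginal gains are ordered identically in every group, greedily selecting the $\beta - 1$ largest available increments is optimal (and automatically respects the within-group ordering). We first take all $c$ increments of value $\ell$, then all $c$ of value $\ell - 1$, and so on. With $c_1 = \lfloor (\beta-1)/c \rfloor$ and $c_2 = \beta - 1 - c c_1$, this fills $c_1$ complete levels plus $c_2$ increments at the next level, giving
\[
Q' = c\alpha_0 + c\sum_{i=0}^{c_1 - 1}(\ell - i) + c_2(\ell - c_1),
\]
and adding $1$ yields the stated formula for $Q$.

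Two points remain. First, the cap $m_g \le \ell$ is never active: since $s = c\ell - \beta \ge 0$ forces $\beta - 1 \le c\ell - 1$, the greedy allocation reaches level at most $c_1 + 1 \le \ell$, so it is feasible. Second, for tightness I would exhibit the matching configuration, choosing the $m_g$ prescribed by the greedy allocation together with the worst-case intra-group processing order of Lemma \ref{lem:cyclicQ}; this processes exactly $Q'$ symbols while $\star$ is seen only $\beta - 1$ times, leaving its $\beta$ unknowns undecodable, so $Q = Q' + 1$ is tight. As a consistency check, when $\beta \le c$ we have $c_1 = 0$ and $c_2 = \beta - 1$, and the expression collapses to $Q = n\ell - \frac{c\ell(\ell+1)}{2} + \ell(\beta - 1) + 1$ of Theorem \ref{thm:beta_matvecstrQ} (using $c\,\Delta/\beta = n$). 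The main obstacle is the combinatorial optimization step, namely justifying that the per-group decomposition is exact and that greedy water-filling maximizes the separable concave objective under the integer budget, rather than any analytic difficulty.
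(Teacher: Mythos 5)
Your proof is correct and follows essentially the same route as the paper: the paper also reduces the $Q$ analysis to choosing $\beta-1$ additional appearances of $\star$ whose marginal contributions form the non-increasing sequence $(\ell,\dots,\ell,\ell-1,\dots,\ell-1,\dots)$ with each value repeated $c$ times, and takes the first $c c_1 + c_2 = \beta-1$ entries, which is exactly your greedy water-filling. Your explicit treatment of the cap $m_g \le \ell$, the tightness configuration, and the consistency check against Theorem \ref{thm:beta_matvecstrQ} are welcome additions but do not change the argument.
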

\begin{proof}
The straggler resilience follows similar to the proof of Theorem \ref{thm:beta_matvecstrQ} by counting the  number of occurrences of the meta-symbols.

For the $Q$ analysis, assume that there exists a meta-symbol $\star$ that appears at most $\beta-1$ times among the acquired $Q$ symbols where $Q$ is defined in the theorem statement. We have $c$ worker groups and in each group, $\star$ appears in positions $0, 1, 2, \dots, \ell - 1$. 

Now we know that we can process $\alpha_0 = \left[ \frac{\Delta}{\beta}\ell - \frac{\ell(\ell+1)}{2} \right]$ meta-symbols from each of the worker groups without processing $\star$. Any additional processing will necessarily process $\star$. Suppose we choose any particular worker, where the position index of $\star$ is $i$. In that case, we can acquire at most $\ell - 1 - i$ more symbols from that particular worker without any more appearances of $\star$.
Thus, the maximum number of meta-symbols that can be processed for each additional appearance of $\star$ can be expressed by the following vector.
\begin{align*}
    \bfz =\left( \undermat{c}{\ell, \ell, \dots, \ell}, \undermat{c}{\ell - 1, \dots, \ell - 1}, \dots, \undermat{c}{1, 1, \dots, 1}\right) .\\
\end{align*}
Here $\bfz$ is a non-increasing sequence, so in order to obtain the maximum number of symbols where the meta-symbol $\star$ appears at most $\beta - 1$ times, we need to acquire symbols sequentially as mentioned in $\bfz$. Let $c_1 = \floor{\frac{\beta - 1}{c}}$ and $c_2 = \beta - 1 - c c_1$. Thus we can choose the first $c c_1 + c_2 = \beta - 1$ workers (as mentioned in $\bfz$) so that we can have $Q^{'}$ symbols where $\star$ appears exactly $\beta - 1$ times, so 
\begin{align*}
 Q^{'} = c \alpha_0 + c \sum\limits_{i=0}^{c_1 - 1} (\ell - i) + c_2 (\ell - c_1);
\end{align*} which indicates that $Q = Q^{'} + 1$ symbols ensures that $\star$ will appear at least $\beta$ times. This leads to a contradiction and concludes the proof.
\end{proof}

\subsubsection{Matrix-matrix Multiplication}
The argument is almost the same for the matrix-matrix case with appropriate definitions for $\ell$ and $\beta$.
Specifically, recall that $n = c \times a_2 b_2$, and $\Delta_A = \beta_A a_2$ and $\Delta_B = \beta_B b_2$. Thus, we have $\frac{n}{\Delta/\beta} = c$ worker groups, where $\Delta = \Delta_A \Delta_B$ and $\beta = \beta_A \beta_B$. In each worker, we assign $\ell_A = \Delta_A \gamma_A$ and $\ell_B = \Delta_B \gamma_B$ coded submatrices of $\bfA$ and $\bfB$, respectively and set $\ell = \ell_A \ell_B$. Following this, we can obtain the number of stragglers as $s = c \ell - \beta$ and 
\begin{align*}
    Q = c \left[ \frac{\Delta}{\beta}\ell - \frac{\ell(\ell+1)}{2} \right] + c \sum\limits_{i=0}^{c_1 - 1} (\ell - i) + c_2 (\ell - c_1) + 1.
\end{align*}




\subsection{Proof of Theorem \ref{thm:beta2_matvecstrQ}}
\label{App:diff_beta_2}
\begin{proof}
{\bf Straggler Resilience:} To prove the straggler resilience, we note that if there are at $2\ell-1$ stragglers it is evident that $\bfG_{dec}$ formed by the remaining meta-symbols is such that each unknown has degree at least one. Let $X_i$ and $X_j$ denote the subset of worker nodes where unknowns $\bfA_i^T \bfx$ and $\bfA_j^T \bfx$ appear within a meta-symbol, so that $|X_i|=|X_j| = 2\ell$. Furthermore, $|X_i \cap X_j| \leq 2 \ell -2$. To see this we note that if $\{i,j\}$ appear together w.l.o.g. in $\calG_0$ then $|X_i \cap X_j| = \ell + \ell-2$ as this implies that they appear together in exactly $\ell-2$ workers in $\calG_1$ (since $\ell \leq \Delta/2-2$). On the other hand if $i$ and $j$ do not appear together in either $\calG_0$ or $\calG_1$ then they appear together in the workers of each group at most $\ell-1$ times, so the claim holds. Thus, 
\begin{align*}
|X_i \cup X_j| &= |X_i| + |X_j| - |X_i \cap X_j|\\
&\geq 2\ell + 2.
\end{align*}
Now suppose by way of contradiction that we have two unknowns $\bfA_i^T \bfx$ and $\bfA_j^T \bfx$ (where $i < j$) both of which appear exactly once across the remaining $n- 2\ell + 1$ workers.  
The preceding argument shows that if $2\ell-1$ workers are stragglers then unknowns $\bfA_i^T \bfx$ or $\bfA_j^T \bfx$ or both appear in at least three nodes, i.e., at least one of them appears at least twice. This contradicts our original assumption. By Lemma \ref{lem:allbeta} the decoding is successful.

{\bf Value of $Q$:} Note that $\alpha_0 = \frac{\Delta}{2}\ell - \frac{\ell(\ell+1)}{2}$ denotes the maximum number of meta-symbols that can be processed within a group such that a specific meta-symbol is not processed ({\it cf.} Lemma \ref{lem:cyclicQ}). This implies that at most $2 \alpha_0$ meta-symbols can be processed without processing any specific unknown. Let $\rho_0$ and $\rho_1$ denote the number of meta-symbols processed in the two groups $\calG_0$ and $\calG_1$ where we assume w.l.o.g. that $\rho_0 \geq \rho_1$. 
\begin{itemize}
    \item Case 1: Suppose that $\rho_0 \geq \alpha_0 + \ell + 1$. Lemma \ref{lem:cyclicQ} implies that each meta-symbol $\in \calP_0$  is processed at least twice in $\calG_0$ . Then by Lemma \ref{lem:allbeta}, the decoding is successful.
    \item Case 2: If $\alpha_0 + 2 \leq \rho_0 \leq \alpha_0 + \ell$, we claim that at most one meta-symbol in $\calG_0$ is processed once. The other meta-symbols are processed at least twice. To see this, consider two meta-symbols $(2i,2i+1)$ and $(2j,2j+1)$ in $\calG_0$ such that $j > i$ such that $(2i,2i+1)$ is processed only once. If $j - i \geq 2$ then there are at least two workers in $\calG_0$ where the meta-symbol $(2j,2j+1)$ appears but $(2i,2i+1)$ does not. Therefore, if at least $\alpha_0+1$ meta-symbols are processed in $\calG_0$, then $(2j,2j+1)$ appears at least twice. On the other hand if $j=i+1$ then there is only one worker where $(2j,2j+1)$ appears but $(2i,2i+1)$ does not. Thus, if $\alpha_0+1$ meta-symbols are processed then we have processed $(2j,2j+1)$ at least once. The $\alpha_0 +2$-th meta-symbol cannot be $(2i,2i+1)$ since by assumption it is processed only once, thus it has to be $(2j,2j+1)$ (since $j=i+1$).
    
    Now, we argue either unknown $\bfA_{2i}^T \bfx$ or $\bfA_{2i+1}^T \bfx$ appear within the meta-symbols in $\calG_1$. Towards this end, we note that there are exactly two workers in $\calG_1$ where $\bfA_{2i+1}^T \bfx$ appears but $\bfA_{2i}^T \bfx$ does not. Therefore, at most $\alpha_0 - (2\ell-1)$ meta-symbols can be processed in $\calG_1$ while avoiding both the unknowns $\bfA_{2i}^T \bfx$ and $\bfA_{2i+1}^T \bfx$.
    
    This implies that the total number of meta-symbols that can be processed such that at least two unknowns appear only once in $G_{dec}$ is at most $2\alpha_0 -\ell + 1 < Q$.
    
    \item Case 3: If $\rho_0 = \alpha_0 + 1$, then we can have two meta-symbols $(2i,2i+1)$ and $(2i+2,2i+3)$ that appear exactly once in $\calG_0$. It can be verified that none of the unknowns $\bfA_{2i}^T \bfx, \dots, \bfA_{2i+3}^T \bfx$ appear together in a meta-symbol in $\calG_1$ since $\Delta \geq 8$. Thus, if we process $\rho_1 = \alpha_0$ symbols in  $\calG_1$, then we can avoid at most one unknown from the set $\{\bfA_{2i}^T \bfx, \dots, \bfA_{2i+3}^T \bfx\}$. It follows that at most one unknown appears once in $G_{dec}$ and by Lemma \ref{lem:allbeta}, the decoding is successful.
\end{itemize}

\end{proof}

\subsection{Proof of Theorem \ref{thm:Q_partially_coded_end}}

\label{App:coded_bot}

\begin{proof}
We need to show that for any pattern of $Q$ symbols the master node can decode $\bfA^T \bfx$. Towards this end, from Theorem \ref{thm:beta_matvecstrQ} (setting $\beta = 1$ and $\ell = \ell_u = r_u$), we know that any pattern of $Q$ uncoded symbols allows the recovery of all $\Delta$ unknowns. In other words for any computation state vector $ \bfw(t) =[w_0(t)~w_2(t)~ \dots~w_{n-1}(t)]$ such that $w_i(t) \leq \ell_u$ and $\sum_{i=0}^{n-1} w_i(t) \geq Q$, the master node can decode. Now, consider a vector $\bfw'(t)$ such that (w.l.o.g.) $w'_0(t), \dots, w'_{\alpha-1}(t) \geq \ell_u + 1$ and $w'_{\alpha}(t), \dots, w'_{n-1}(t) \leq \ell_u$ and $\sum_{i=0}^{n-1} w'_i(t) \geq Q$, i.e., the first $\alpha$ worker nodes process coded blocks whereas the others do not. It is not too hard to determine a different vector $\tilde{\bfw}(t)$ with the following properties.
\begin{align*}
\tilde{w}_i(t) &= \begin{cases}
\ell_u & ~1 \leq i \leq \alpha,\\
w'_i(t) + \beta_i & \alpha+1 \leq i \leq n,
\end{cases}
\end{align*}
where $\beta_i$'s are positive integers such that $w'_i(t) + \beta_i \leq \ell_u$ and $\sum_{i=0}^{n-1} \tilde{w}_i(t) = Q$. Thus, $\tilde{\bfw}(t)$ corresponds to a pattern of $Q$ uncoded blocks that recovers $\Delta$ distinct blocks.

Now, we compare the vectors $\bfw'(t)$ and $\tilde{\bfw}(t)$. Let the uncoded symbols in $\bfw'(t)$ be denoted by the set $\calA$. Then the set of uncoded symbols in $\tilde{\bfw}(t)$ can be expressed as $\calA \cup \calB$ where the set $\calB$ results from the transformation above. It is evident that for computation state vector $\bfw'(t)$ the master node has $\sum\limits_{i=0}^{\alpha-1} \left( w'_i(t) - \ell_u \right)$ equations with $\Delta - |\calA|$ variables. Now,
\begin{align*}
\sum\limits_{i=0}^{\alpha-1} \left( w'_i(t) - \ell_u \right) \geq |\calB|
&\geq |\calB \setminus \calA| = \Delta - |\calA|.
\end{align*}
In particular, this establishes that we have at least as many equations as variables. Since any square submatrix of a random matrix is invertible with probability 1, we have the required result.

Next, we establish the straggler resilience of our scheme. Consider worker nodes $0 \leq i_1 < i_2 < \dots < i_k \leq n-1$; each of these worker nodes has $\ell_u$ uncoded symbols. Consider the case that $i_t - i_{t-1} < \ell_u$ for $t= 2, 3, \dots, k$. We claim that these worker nodes contain at least $\min (\ell_u+k-1, \Delta)$ distinct uncoded symbols. To see this we proceed inductively. Let $X_{i_j}$ denote the symbols in worker $i_j$. If $k=2$, then $|X_{i_1} \cup X_{i_2}| = |X_{i_1}| + |X_{i_2}| - |X_{i_1} \cap X_{i_2}| \geq 2\ell_u - (\ell_u -1) = \ell_u + 1$. We assume the inductive hypothesis, i.e, $|X_{i_1} \cup \dots \cup X_{i_{k-1}}| \geq \min(\ell_u + k-2, \Delta)$. 

Now consider $|X_{i_1} \cup \dots \cup X_{i_{k-1}} \cup X_{i_k}|$. It can be observed that if $i_{k-1} + \ell_u - 1 - \Delta < i_1$ then there exists at least one symbol in $X_{i_k}$ that does not exist in $X_{i_1} \cup \dots \cup X_{i_{k-1}}$. Thus, in this case $|X_{i_1} \cup \dots \cup X_{i_{k-1}} \cup X_{i_k}| \geq \ell_u + k -1$.

On the other hand if $i_{k-1} + \ell_u - 1 - \Delta \geq i_1$ then $X_{i_k} \subseteq X_{i_{k-1}} \cup X_{i_1}$. Let $\delta$ be the smallest integer such that $i_\delta + \ell_u -1 - \Delta \geq i_1$ but $i_{\delta-1} + \ell_u -1 - \Delta < i_1$. In this case, we have $|X_{i_1} \cup X_{i_2} \dots \cup X_{i_{\delta-1}}| = \ell_u + \sum_{x=2}^{\delta-1} (i_x - i_{x-1})$. Furthermore, $X_{i_\delta}$ contributes another $\Delta + i_1 - (i_{\delta-1} + \ell_u - 1) - 1$ symbols so that 
\begin{align*}
|X_{i_1} \cup \dots \cup X_{i_{\delta-1}} \cup X_{i_\delta}| = \ell_u + \sum_{x=2}^{\delta-1} (i_x - i_{x-1}) + \Delta + i_1 - (i_{\delta-1} + \ell_u - 1) - 1 = \Delta.
\end{align*}
Thus, there is nothing to prove in this case.

On the other hand, suppose that $1 \leq \alpha \leq k$ is the least value such that $i_{\alpha} - i_{\alpha -1} \geq \ell_u$.  In this case, we know from the above claim that $|X_{i_1} \cup \dots \cup X_{i_{\alpha-1}}| \geq \ell_u + \alpha - 2$. It follows that $X_{i_{\alpha}}, \dots X_{i_{k}}$ each contribute at least one new symbol, namely $i_{\alpha}, \dots, i_{k}$. Therefore $|X_{i_1} \cup \dots \cup X_{i_{k}}| \geq \ell_u + \alpha - 2 + k - \alpha + 1 = \ell_u + k -1$.

Thus, if we think about choosing $k$ workers, then we need to ensure that
\begin{align*}
\ell_u + (k - 1) + k (\ell - \ell_u) \geq \Delta
\end{align*} which further implies
\begin{align*}
k \geq \frac{n - \ell_u + 1}{\ell - \ell_u + 1} = \frac{n - n \gamma_u + 1}{n \gamma - n \gamma_u + 1}
\end{align*} as $n = \Delta$. So, if the system is resilient to $s$ stragglers then
\begin{align*}
s \leq \floor[\bigg] {n - \frac{n - n \gamma_u + 1}{n \gamma - n \gamma_u + 1}} = \floor[\bigg]{\frac{n^2 \gamma_c + n \gamma_u - 1}{n \gamma_c + 1}}.
\end{align*} It should be noted that setting $\gamma_c = 0$ leads to the uncoded case which is resilient to $(n \gamma - 1)$ workers (same as setting $\beta = 1$ in Theorem \ref{thm:beta_matvecstrQ}).
\end{proof}

\subsection{Proof of Theorem \ref{str:coded_bottom_matmat}}
\label{App:coded_matmat}
\begin{proof}

To prove the theorem by contradiction, we assume that there exists an unknown $\bfA_i^T \bfB_j$, which cannot be decoded from a particular set of $\tau$ workers where $\tau$ is defined in the theorem statement. We consider the set $\calB_j = \{\bfA_0^T \bfB_j, \bfA_1^T \bfB_j, \dots, \bfA_i^T \bfB_j, \dots, \bfA_{\Delta_A-1}^T \bfB_j \}$, i.e, the set of all unknowns corresponding to $\bfB_j$, for $j = 0, 1, \dots, \Delta_B - 1$, thus $|\calB_j| = \Delta_A = a_2$. It should be noted that the equations consisting of the unknowns of $\calB_j$ are disjoint with the equations consisting of the unknowns of $\calB_m$, ($j \neq m$) since the assigned submatrices from $\bfB$ are uncoded. 

Let $\calS_j$ denote the set of workers where $\bfB_j$ does not appear in the assignments and $\calT_j$ denote the set of workers where it appears. According to the scheme in  Alg. \ref{Alg:Cyclic_Partially_Uncoded_matmat}, there are $m a_2 b_1$ workers each of which has an uncoded copy of $\bfB_j$. Thus, $|\calS_j| = n - m a_2 b_1$. 

Next, partition the workers of $\calT_j$ into $\ell_B = m b_1$ worker groups, within each of which, all $a_2$ uncoded block-columns of $\bfA$ appear in a cyclic fashion. From the proof of Theorem \ref{thm:Q_partially_coded_end}, we know that any $k$ workers within a group will provide $\min(a_u + k - 1, a_2)$ uncoded symbols corresponding to $\calB_i$. Now we have $\ell_B$ such worker groups which indicates that we have $\ell_B$ workers of $\calT_j$ which have the same uncoded job assignments. Thus, from any $\kappa$ workers of $\calT_j$, we will obtain $\min(a_u + \ceil{\frac{\kappa}{m b_1}} - 1, a_2)$ uncoded symbols, and $\kappa a_c$ coded symbols. So in order to be able to decode the elements of $\calB_j$, we need to find the minimum positive integer for $\kappa$ (which is denoted as $\kappa_{min}$) such that
\begin{align*}
a_u + \ceil[\bigg]{\frac{\kappa}{m b_1}} - 1 + \kappa a_c  \geq a_2 .
\end{align*} It indicates that any $\kappa_{min}$ workers of $\calT_j$ are enough to recover all the elements of $\calB_j$ including $\bfA_i^T \bfB_j$. But $\tau - |\calS_j| = \kappa_{min}$, which leads to a contradiction and hence concludes the proof.

\end{proof}

\subsection{Concatenation of block rows in $\tilde{\bfV} \odot \tilde{\bfU}$}
\label{app:argumentKR}
Let $\bfU$ denote a $\ell_c k_A \times \ell_c k_A k_B$ matrix whose $\delta$-th row is given by $[u^{(i_0,0)}_\delta ~\dots~u^{(i_0,\ell_c-1)}_\delta| ~\dots~ |u^{(i_{k-1},0)}_\delta ~\dots~u^{(i_{k-1},\ell_c-1)}_\delta]$ where we recall that each entry of $\bfU$ is chosen i.i.d. at random from a continuous distribution and $k= k_A k_B$. The matrix $\bfU$ can be written as \begin{align*}
    \bfU &= [\bfU_0~|~ \bfU_1~|~ \dots ~|~\bfU_{k-1}]
\end{align*}
where each $\bfU_j$ is of dimension $\ell_c k_A \times \ell_c$. We wish to show that 
\begin{align*}
    \begin{bmatrix}
    \bfU_0 \otimes v^{(i_0)} \;|\; \bfU_1 \otimes v^{(i_1)} \; | \;  ~\dots~ | \; \bfU_{k-1} \otimes v^{(i_{k-1})}
    \end{bmatrix}
\end{align*}
is full-rank with probability $1$. 

Note that the vectors $v^{(i_\ell)}$'s are also chosen at random and any collection of $k_B$ such vectors is full rank with probability $1$. In the argument below we show a specific choice of $\bfU$ that yields a full-rank matrix. This implies that the matrix continues to be full-rank under the random choice. Towards this end, we pick the first $\ell_c$ rows of $\bfU$ to be
\begin{align*}
\begin{bmatrix}
\bfI_{\ell_c} & \dots & \bfI_{\ell_c} & \mathbf{0} & \dots & \mathbf{0}
\end{bmatrix}, 
\end{align*}
i.e., the first $k_B$ block-columns are identity matrices. It can be seen that these result in $\ell_c k_B$ linearly independent rows. The next block row of $\bfU$ is a $k_B$ block-column shifted version of the first block row, i.e., it is
\begin{align*}
\begin{bmatrix}
\mathbf{0} & \dots & \mathbf{0} & \bfI_{\ell_c} & \dots & \bfI_{\ell_c} & \mathbf{0} & \dots & \mathbf{0}
\end{bmatrix}
\end{align*}
This yields another $\ell_c k_B$ linearly independent block rows. This process can be repeated $k_A$ times to provide the required result.

\subsection{Number of $\bfA_i$'s that appear less than $k_B$ times within the stragglers}
\label{aux_argument}
In the setting of Theorem \ref{Alg:Optimal_Matmat}, suppose that we have $n-k_A k_B$ stragglers that together contain $(n-k_A k_B)\Delta_A k_B/n$ uncoded block-columns of $\bfA$. We want to show that not all $\bfA_i$'s appear $k_B$ times within the stragglers. To see this consider a bipartite graph that specifies the placement of the uncoded block-columns of $\bfA$. It contains vertices denoting the $\bfA_i$'s and the worker nodes. An edge connects $\bfA_i$ and $W_j$ if $\bfA_i$ appears in $W_j$. Thus each $\bfA_i$ has degree $k_B$.
It can be seen that this graph is connected as any two neighboring workers $W_j$ and $W_{j+1}$ (indices reduced modulo-$n$) have block-columns in common.
Suppose that the stragglers are such that each $\bfA_i$ that appears within the stragglers also appears $k_B$ times within the stragglers. This implies that the subgraph induced by the stragglers is such that it disconnected from the remaining workers. This is a contradiction.


\ifCLASSOPTIONcaptionsoff
  \newpage
\fi



%

\bibliographystyle{IEEEtran}
\bibliography{citations}

%








\end{document}